 \let\ACMmaketitle=\maketitle
 \renewcommand{\maketitle}{\begingroup\let\footnote=\thanks 
 \ACMmaketitle\endgroup}
\renewcommand\footnotetextcopyrightpermission[1]{} 
\newcommand\fw{0.31}
\newcommand\fww{0.9}
\newcommand\blfootnote[1]{%
	\begingroup
	\renewcommand\thefootnote{}\footnote{#1}%
	\addtocounter{footnote}{-1}%
	\endgroup
}
\DeclareMathOperator*{\argmax}{arg\,max}
\definecolor{darkgrey}{RGB}{70,70,70}
\definecolor{lightgrey}{RGB}{200,200,200}
\bfseries\color{black!400!black},
\crefname{section}{§}{§§}
\Crefname{section}{§}{§§}
\newtheorem{thm}{Theorem}
\newtheorem{lma}{Lemma}
\newcommand{\macb}[1]{\textbf{\textsf{#1}}}
\newcommand*{\affmark}[1][*]{\textsuperscript{#1}}
\DeclareSymbolFont{matha}{OML}{txmi}{m}{it}
\DeclareMathSymbol{\varS}{\mathord}{matha}{83}
\colorlet{hlcolor}{yellow!20}
\begin{document}
	
	\title[I/O Optimal Parallel Matrix 
	Multiplication]{
		Red-Blue Pebbling Revisited:\\ Near 
	Optimal 
	Parallel 
		Matrix-Matrix Multiplication}
	\subtitle{Technical Report
	}
	
\author{
       Grzegorz Kwasniewski\affmark[1], Marko Kabić\affmark[2,3], Maciej 
       Besta\affmark[1],\\Joost VandeVondele\affmark[2,3], Raffaele 
       Solcà\affmark[2,3], Torsten Hoefler\affmark[1]\\
       {\normalsize\affmark[1]Department of Computer Science, ETH Zurich}, 
       {\normalsize\affmark[2]ETH Zurich},
       {\normalsize\affmark[3]Swiss National Supercomputing Centre (CSCS)}
}

\renewcommand{\shortauthors}{G. Kwasniewski et al.}
	
	\begin{abstract}
		We propose COSMA: a parallel matrix-matrix multiplication
		algorithm that is near communication-optimal for all 
		combinations 
		of matrix dimensions, processor counts, and memory sizes. The key 
		idea behind COSMA is to derive an optimal 
		(up to a factor of 0.03\% for 10MB of fast memory) 
		sequential schedule and then 
		parallelize it, preserving I/O optimality. To achieve this, we use the 
		red-blue pebble game to precisely model MMM dependencies and derive a 
		constructive and tight 
		sequential and parallel 
		I/O lower bound proofs. Compared to 2D or 3D algorithms, which fix 
		processor decomposition upfront and then map it to the matrix 
		dimensions, it reduces communication volume by up to $\sqrt{3}$ times. COSMA 
		outperforms the established ScaLAPACK, CARMA, and CTF algorithms in all 
		scenarios 
		up to 12.8x (2.2x on average), achieving up to 88\% of Piz Daint's peak 
		performance. Our work does not require any hand tuning and is 
		maintained as an open source implementation.
		
	\end{abstract}
	
	\maketitle
	
	\section{Introduction}
	\label{sec:intro}
	Matrix-matrix\blfootnote{This is an extended version of the SC'19 
	publication 
		(DOI 10.1145/3295500.3356181)}		
	\blfootnote{Changes in the original submission are listed in the Appendix}
	 multiplication (MMM) is one of the most fundamental building
	blocks in scientific computing, used in linear algebra algorithms~\cite{meyer2000matrix, chatelin2012eigenvalues, linearAlgebraLAPACK},
 (Cholesky 
	and
	LU decomposition~\cite{meyer2000matrix}, 
  eigenvalue
	factorization~\cite{chatelin2012eigenvalues}, triangular
	solvers~\cite{linearAlgebraLAPACK}), 
machine
	learning~\cite{dlsurvey}, graph
	processing~\cite{cormen2009introduction, azad2015parallel,
		kepner2016mathematical, ng2002spectral, slimsell, maciejBC}, 
		computational 
	chemistry~\cite{joost}, and others. Thus, accelerating MMM routines is 
	of great significance for many domains. 
  In this work, we focus on minimizing the amount of transferred data in MMM,
  both across the memory 
	hierarchy (\emph{vertical I/O}) and between processors
	(\emph{horizontal I/O}, aka ``communication'')\footnote{We also focus only on ``classical'' MMM algorithms which 
	perform 
	$n^3$ multiplications and additions. We do not analyze Strassen-like 
	routines~\cite{Strassen}, as in practice they are 
	usually slower~\cite{strassenVsClassic}.}.
	
	\begin{table*}
			\setlength{\tabcolsep}{4pt}
			\renewcommand{\arraystretch}{0.6}
		\centering
		\fontsize{0.26cm}{0.4cm}\selectfont
		\sf
		\begin{tabular}{lllll}
			\toprule
			& \textbf{2D~\cite{summa}} & \textbf{2.5D~\cite{25d}} & 
			\textbf{recursive~\cite{CARMA}} & \textbf{COSMA (this work)} \\
			\midrule
			Input:
			&
			User--specified grid
			&
			\makecell[l] {
				Available
				memory
			}
			&
			\makecell[l] {
				Available memory, 
				matrix dimensions}
			& 
			\makecell[l] {
				Available memory, 
				matrix dimensions}
			\\
			\textbf{Step 1}
			&
			Split $m$ and $n$
			&
			Split $m$, $n$, $k$
			& 
			\makecell[l] {
				Split recursively the largest dimension
			}
			& 
			\makecell[l] {
				Find the optimal
				sequential schedule
			}
			\\
			\textbf{Step 2}
			&
			\makecell[l] {
				Map matrices 
				to processor grid
			}
			&
			\makecell[l] {
				Map matrices 
				to processor grid
			}
			&
			\makecell[l] {
				Map matrices 
				to recursion tree
			}
			& 
			\makecell[l] {
				Map sequential 
				domain to matrices
			}
			\\
			\midrule
			&
			\makecell[l] {
				\faThumbsDown ~Requires manual tuning  \\
				\faThumbsDown ~Asymptotically
				more comm.  
			}
			&
			\makecell[l] {
				\faThumbsOUp ~Optimal for $m=n$  \\
				\faThumbsDown ~Inefficient for 
				$m \ll n$ 
				or $n \ll m$  \\
				\faThumbsDown ~Inefficient for some $p$
			}
			&
			\makecell[l] {
				\faThumbsOUp ~Asymptotically 
				optimal 
				for all  $m,n,k,p$  \\
				\faThumbsDown ~Up to $\sqrt{3}$ times higher comm. cost\\
				\faThumbsDown ~$p$ must be 
				a power of 2  
			}
			& 
			\makecell[l] {
				\faThumbsOUp ~ ~Optimal for 
				all $m,n,k$  \\
				\faThumbsOUp ~ ~Optimal for
				all $p$  \\
				\faThumbsOUp \faThumbsOUp ~Best time-to-solution
			}
			\\
			\bottomrule
		\end{tabular}
		\caption{
			\textmd{
				Intuitive comparison between the COSMA algorithm and the 
			state-of-the-art 2D, 2.5D, and recursive decompositions. $C=AB, A 
			\in 
			\mathbb{R}^{m \times k}$, $ B \in 
			\mathbb{R}^{k \times n}$
		}}
		\label{tab:intro}
	\end{table*}
	
The path to I/O optimality of MMM algorithms is at least 50 years old.
The first
parallel MMM algorithm is by Cannon~\cite{Cannon}, which
works for square matrices and square processor 
decompositions. 
 Subsequent works~\cite{mmm1,mmm2}
generalized the MMM algorithm to rectangular matrices, different processor
decompositions, and communication patterns. PUMMA~\cite{pumma} package
generalized previous works to transposed matrices and different data layouts. 
  SUMMA algorithm~\cite{summa} further extended
it by 
optimizing the communication, introducing pipelining and
communication--computation overlap. This is now a 
state-of-the-art so-called 2D
algorithm (it decomposes processors in a 2D grid) used e.g., in
ScaLAPACK library~\cite{scalapack}.

Agarwal et al.~\cite{summa3d} showed that in a presence of extra memory, one can
do better and introduces a 3D processor decomposition.
The 2.5D algorithm by Solomonik
and Demmel~\cite{25d} effectively interpolates
between those two results, depending on the available
memory. However, Demmel et al. showed
that algorithms optimized for square matrices often perform poorly
when matrix dimensions vary significantly~\cite{CARMA}.
Such matrices are common in many
relevant areas, for example in machine learning
~\cite{rectangularML, 
kmeansTallSkinny} or computational 
chemistry~\cite{rectangularChemistry, 
quantumTallSkinny}.
They introduced CARMA~\cite{CARMA}, 
a recursive algorithm that achieves asymptotic lower bounds for all
configurations of dimensions and memory sizes. This evolution for chosen steps 
is depicted 
symbolically in Figure~\ref{fig:timeline}.

		\begin{figure}[t]
		\includegraphics[width=0.9\columnwidth]{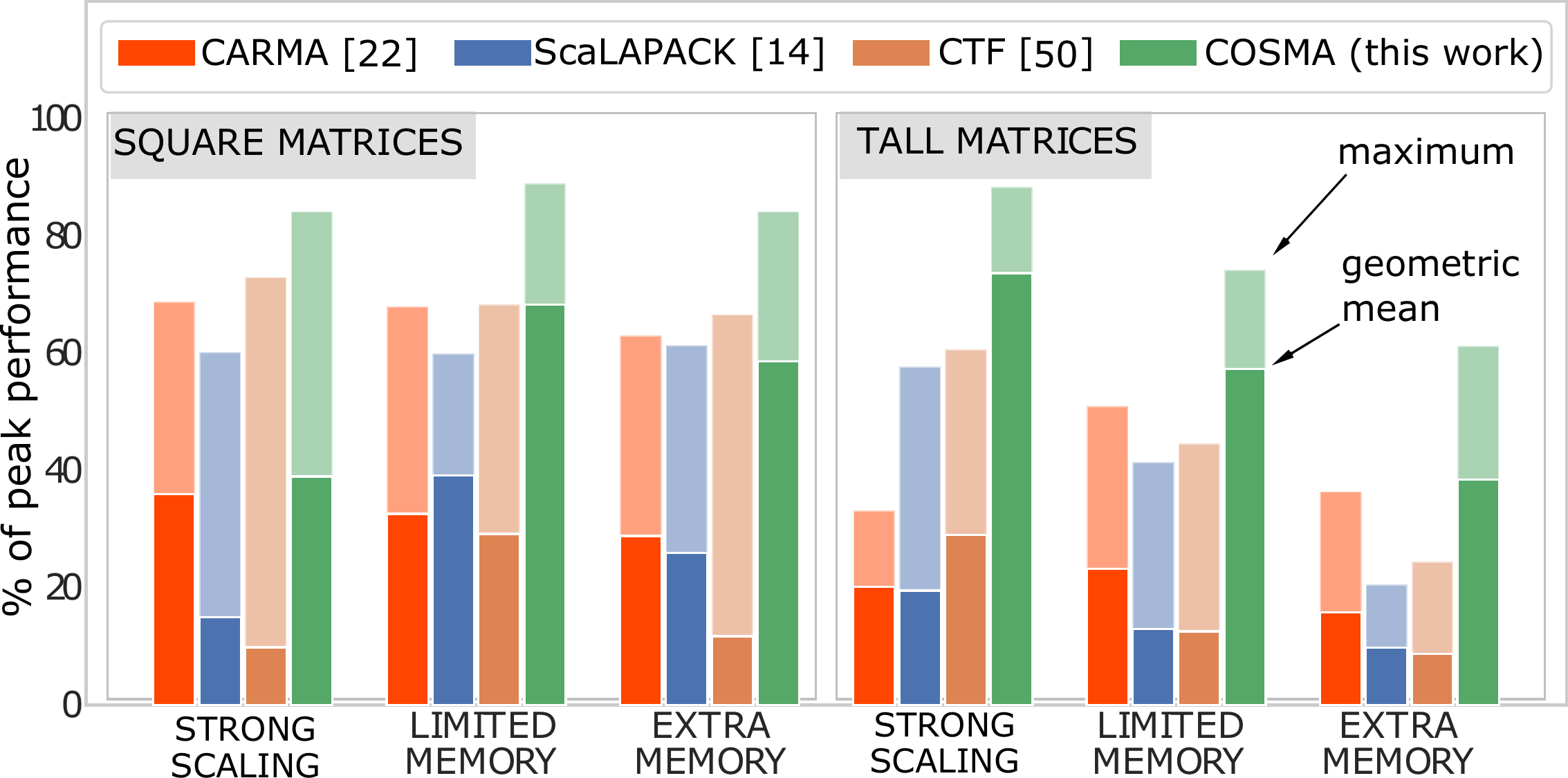}
		\caption{
			\textmd{Percentage of peak flop/s across the 
			experiments 
		ranging 
				from 
				109 to 18,432 cores achieved by COSMA and the state-of-the-art 
				libraries (Sec. \ref{sec:results}).}}
		\label{fig:introPlot}
	\end{figure}
	
\begin{figure}
		\includegraphics[width=0.92\columnwidth]{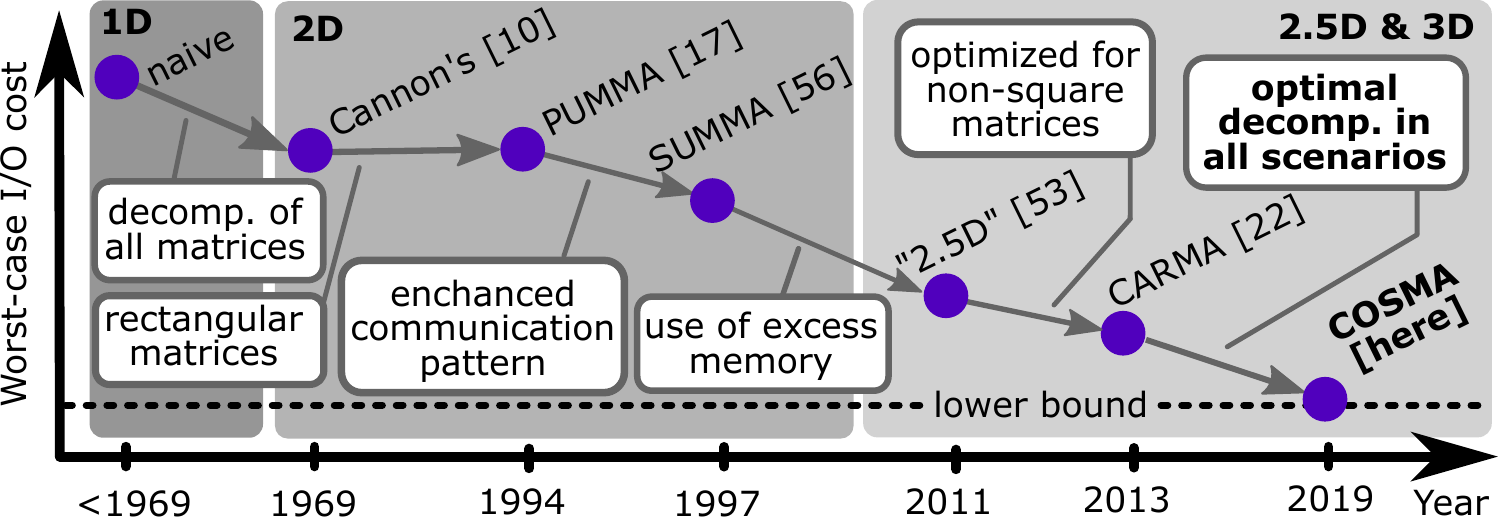}
\vspace{-1.0em}
\caption{\textmd{Illustratory evolution of MMM algorithms reaching the I/O lower 
	bound.}}
\label{fig:timeline}
\vspace{-1.0em}
\end{figure}

	Unfortunately, we observe several limitations with state-of-the art
	algorithms.
	ScaLAPACK~\cite{scalapack} (an implementation of
	SUMMA) supports
	only the 2D~decomposition, which is communication--inefficient in the 
	presence of 
	extra memory. Also, it requires a 
	user to
	fine-tune parameters such as block sizes or a processor
	grid size. CARMA supports only scenarios when the 
	number of
	processors is a power of two~\cite{CARMA}, a serious limitation, as the 
	number 
	of processors
	is usually determined by the available
	hardware resources.
	Cyclops Tensor Framework (CTF)~\cite{cyclops} (an implementation of the 
	2.5D decomposition) can utilize any number of processors, but its 
	decompositions 
	may be far from 
	optimal~(\cref{sec:results}). We also emphasize that 
	\emph{asymptotic 
		complexity is an insufficient measure of practical performance}. 
	We later (\cref{sec:parStrategies}) identify that CARMA performs up to 
	$\sqrt{3}$ more communication.
	Our 
	observations are summarized in Table~\ref{tab:intro}. Their practical 
	implications are shown in Figure~\ref{fig:introPlot}, where we see that all 
	existing algorithms perform poorly for some configurations. 
	
	In this work, we present COSMA (Communication Optimal S-partition-based 
	Matrix 
	multiplication Algorithm): 
	an algorithm that takes a new approach to multiplying
	matrices and alleviates the issues above. COSMA is I/O optimal
	for \emph{all combinations of parameters}  (up to the 
	factor
	 of $\sqrt{S}/(\sqrt{S+1}-1)$,
	  where $S$ is the size of the 
	 fast
	 memory\footnote{Throughout this paper we use the original notation from 
	 Hong
	 	and Kung to denote the memory size $S$. In literature, it is also 
	 	common to use
	 	the symbol $M$~\cite{externalMem,IronyMMM, parallelExMem}.}). The 
		driving idea is to develop a 
	general method of deriving I/O optimal schedules by explicitly modeling 
	data reuse in the red-blue pebble game. We then parallelize the sequential 
	schedule, minimizing the I/O between processors, and derive an optimal 
	domain 
	decomposition.
	This is in contrast with the other discussed algorithms, which fix the 
	processor grid upfront and then map it to a sequential schedule for each 
	processor.   We outline the
	algorithm in~\cref{sec:commDescr}.
	To prove its optimality,  we first provide a new 
	constructive proof of a sequential I/O lower 
	bound~(\cref{sec:seqScheduling}),  
	then we
	derive the communication cost of  
	parallelizing the sequential
	schedule~(\cref{sec:parStrategies}), and finally we construct an I/O 
	optimal 
	parallel 
	schedule~(\cref{sec:parScheduling}). 
	The detailed communication analysis of COSMA, 2D, 2.5D, and 
	recursive decompositions is
	presented in Table~\ref{tab:summary}. Our algorithm reduces the data 
	movement volume by a factor of up to $\sqrt{3} \approx 1.73$x compared to 
	the 
	asymptotically optimal recursive decomposition and up to 
	$\max\{m,n,k\}$ times compared to the 2D
	algorithms, like Cannon's~\cite{generalCannon} or SUMMA~\cite{summa}.
	
	Our implementation
	enables transparent integration with the ScaLAPACK data
	format~\cite{scalapackLayout} and delivers near-optimal computation 
	throughput.
	We later (\cref{sec:implementation}) show that the schedule 
	naturally expresses
	communication--computation overlap, 
	enabling even higher speedups
	using Remote Direct Memory Access (RDMA).
	Finally, our I/O-optimal approach is
	 generalizable to other linear algebra kernels. 
	We provide the following contributions:
	
	\begin{itemize}[leftmargin=1em]
		\item We propose COSMA: a distributed MMM algorithm that is 
		nearly-optimal (up to the factor of $\sqrt{S}/(\sqrt{S+1}-1)$)
		for 
		\emph{any combination of input parameters} (\cref{sec:commDescr}). 
		\item Based on the red-blue pebble game 
		abstraction~\cite{redblue}, we provide a new method of deriving I/O 
		lower 
		bounds 
		(Lemma~\ref{lma:comp_intesity}), which may be used to generate optimal 
		schedules (\cref{sec:introIO}).
		\item Using Lemma~\ref{lma:comp_intesity}, we provide a new 
		constructive proof 
		of
		the sequential MMM I/O lower bound. The proof delivers 
		constant factors tight up to $\sqrt{S}/(\sqrt{S+} - 
		1)$(\cref{sec:seqOpt}).
    \item We extend the sequential proof to parallel 
		machines and provide I/O optimal parallel MMM schedule 
		(\cref{sec:parScheduling}).
		\item We reduce memory footprint for communication buffers and guarantee
		minimal local data reshuffling by using a blocked 
		data layout
		(\cref{sec:datalayout}) and a static buffer pre-allocation
		(\cref{sec:bufferReuse}), providing compatibility with the
		ScaLAPACK format.
		\item We evaluate the performance of COSMA, ScaLAPACK, CARMA, and CTF 
		on the CSCS Piz Daint 
		supercomputer 
		 for an
		extensive selection of problem dimensions, memory sizes, and numbers of
		processors,
		 showing significant I/O reduction and the speedup of up to 8.3 times 
		 over the 
		second-fastest 
		algorithm
		(\cref{sec:results}).
	\end{itemize}
	
	\section{Background}
	
		We first describe our machine 
	model (\cref{sec:machineModel}) and computation model (\cref{sec:compModel}).
  We then define 
	our 
	optimization 
	goal: ~\emph{the I/O cost}~(\cref{sec:optGoals}). 

	\subsection{Machine Model}
	\label{sec:machineModel}
	
	We model a parallel machine with $p$
	processors, each with local memory of size $S$ words.
	A processor can send and receive from any other processor up to $S$ words 
	at 
	a time.
	To perform any computation, all operands must reside in  
	processor' local 
	memory.
	If shared memory is present, then it is assumed that it has infinite 
	capacity. A cost of transferring a word from the shared to the local memory 
	is equal to the cost of transfer between two local memories.
	
	\subsection{Computation Model}
	\label{sec:compModel}
	
	We now briefly specify a model of \emph{general} 
	computation; we use this
	model to derive the theoretical I/O cost in both the sequential and 
	parallel 
	setting.  
	An execution of an algorithm is modeled with the \emph{computational 
	directed acyclic 
	graph} 
	(CDAG)
	$G=(V,E)$~\cite{completeRegisterProblems,pebblegameregister,
		registerpebblecolor}. A vertex $v \in V$ represents one
	elementary operation in the given computation.
	  An edge $(u,v) \in E$
	indicates that an operation $v$ depends on the result of $u$. A set of all
	immediate predecessors (or successors) of a vertex are its \emph{parents} (or
	\emph{children}).  Two selected subsets $I, O \subset V$ are \emph{inputs} 
	and
	\emph{outputs}, that is, sets of vertices that have no parents (or no 
	children,
	respectively).
	
  \noindent
	\macb{Red-Blue Pebble Game} Hong and Kung's red-blue pebble 
	game \cite{redblue} 
	models
	an execution of an algorithm in a two-level memory structure with a
	small-and-fast as well as large-and-slow memory. A red (or a blue) pebble
	placed on a vertex of a CDAG denotes that the result of the corresponding 
	elementary computation is inside
	the fast (or slow) memory. 
	In the initial (or terminal) configuration, only inputs (or outputs) of the 
	CDAG have
	blue pebbles.
	There can be at most $S$ red pebbles used at any given time. A \emph{complete 
	CDAG calculation} is a
	sequence of moves that lead from the initial to the terminal
	configuration.
	One is allowed to:  place a red pebble on any
	vertex with a blue pebble (load),  place a blue pebble on any
	vertex with a red pebble (store),  place a red pebble on a vertex
	whose parents all have red pebbles (compute), remove any pebble, 
	red or blue, from any vertex (free memory).
		An \emph{I/O optimal} complete CDAG calculation corresponds to a sequence 
		of 
	moves (called 
	\emph{pebbling} of a graph) which minimizes loads and stores.
	In the MMM context, it is an order in which all $n^3$
	multiplications are performed. 
	
	\subsection{Optimization Goals}
	\label{sec:optGoals}
	Throughout this paper we focus on the \emph{input/output (I/O) cost}
	of an algorithm.
	The 
	I/O cost~$Q$ is the total number of words transferred 
	during the execution 
	of a schedule. On a sequential or shared memory machine equipped 
	with small-and-fast and slow-and-big memories, these transfers are load 
	and store operations from and to the slow memory (also called the 
	\emph{vertical 
		I/O}). For a distributed machine with a limited memory per node, 
	the 
	transfers are communication operations between the nodes (also called the
	\emph{horizontal I/O}). A schedule is \emph{I/O optimal} if it minimizes 
	the 
	I/O cost among all schedules of a given CDAG. 
	We also model a \emph{latency cost} $L$, which is a 
	maximum number of
	messages sent by any processor.  
	
	\subsection{State-of-the-Art MMM Algorithms}
	\label{sec:state-of-the-artAlgs}
	
	Here we briefly describe strategies of the existing MMM algorithms.
	Throughout the whole paper, we consider matrix multiplication $C = 
	AB$, where $A \in \mathbb{R}^{m \times k}, B \in \mathbb{R}^{k \times n},  
	C 
	\in \mathbb{R}^{m \times 
		n}$, where $m$, $n$, and $k$ are matrix dimensions. Furthermore, we 
		assume 
	that the size of each matrix element is one word, and 
	that $S < 
	\min\{mn, mk, nk\}$, that is, none of the matrices 
	fits into single processor's fast memory. 
	
	We compare our algorithm with the 2D, 2.5D, and recursive 
	decompositions (we select parameters for 2.5D to also cover 3D). We assume 
	a square processor grid $[\sqrt{p}, \sqrt{p}, 1]$ 
	for the 2D variant, analogously 
	to Cannon's algorithm~\cite{Cannon}, and a cubic grid $[\sqrt{p/c}, 
	\sqrt{p/c}, c]$ for the 2.5D 
	variant~\cite{25d}, where $c$ is the amount of the ``extra'' memory $c = 
	pS/(mk 
	+ nk)$. For the recursive decomposition, we assume that in each recursion 
	level 
	we split the largest dimension $m,n,$ or $k$ in half, until the 
	domain 
	per 
	processor fits into memory.
	The detailed complexity analysis of these decompositions is 
	in 
	Table~\ref{tab:summary}.
	We note that ScaLAPACK or CTF can handle non-square 
	decompositions, however they create different problems, as discussed 
	in~\cref{sec:intro}. 
	Moreover, 
	in~\cref{sec:results} we compare their 
	performance with COSMA 
	and measure significant speedup in \emph{all} scenarios. 
	
\enlargethispage{\baselineskip}

	\section{COSMA: High-Level Description}
	\label{sec:commDescr}
	
	\begin{figure}[!tbp]
		\centering
		\subfloat[3D domain
		decomposition]{\includegraphics[width=0.23\textwidth]
			{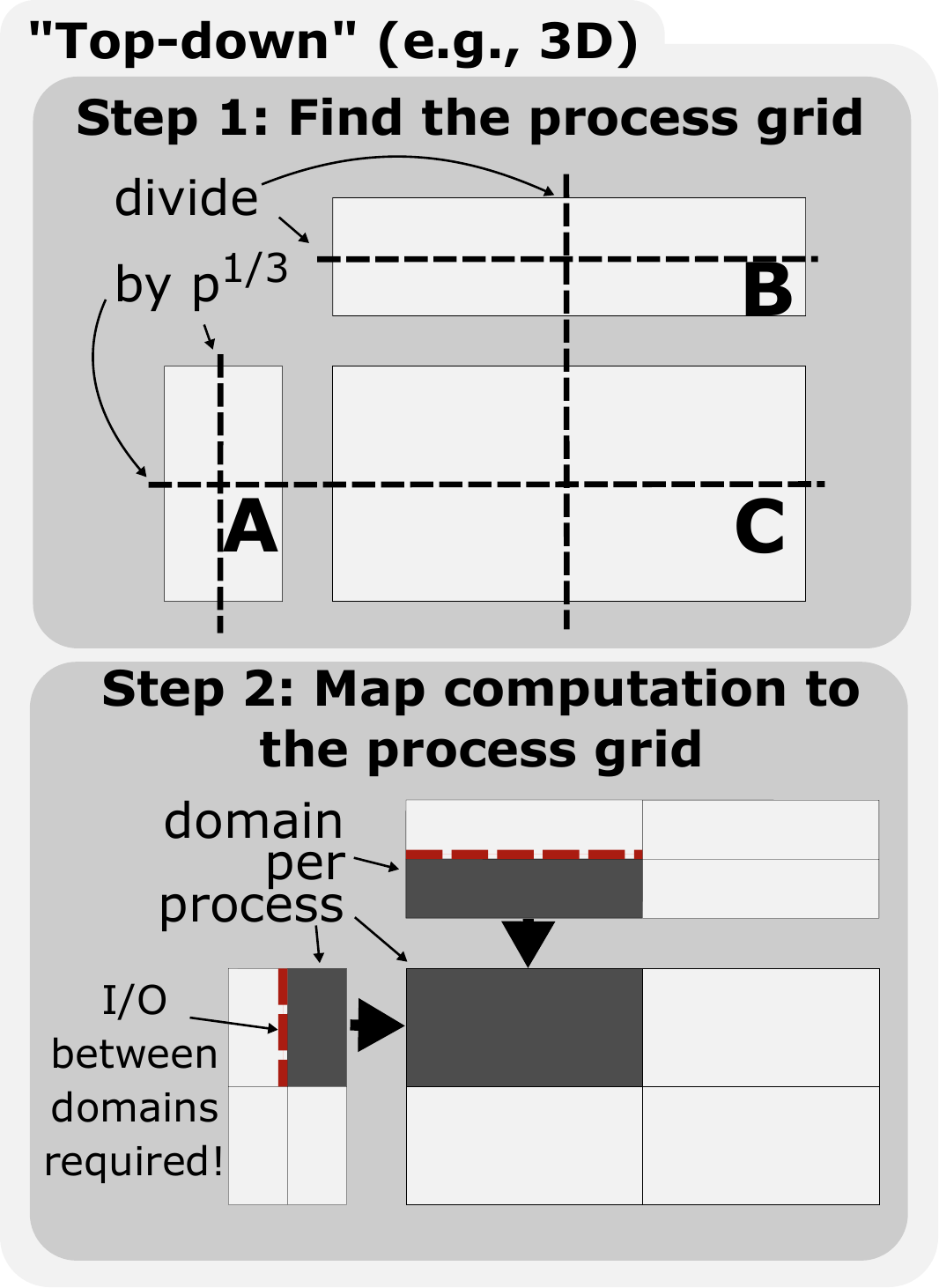}\label{fig:f1}}
		\hfill
		\subfloat[COSMA decomposition]{\includegraphics[width=0.23\textwidth]
			{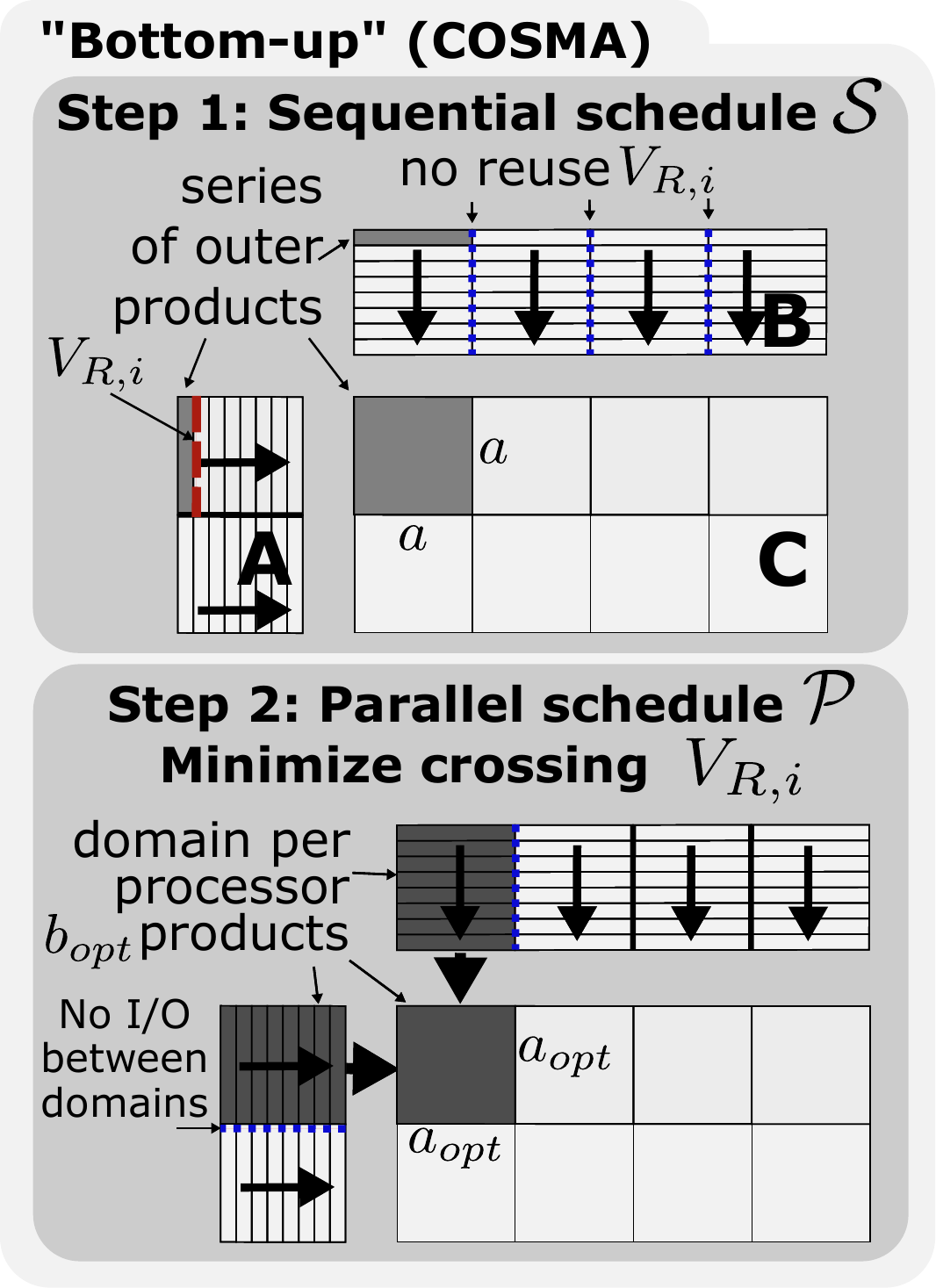}\label{fig:f2}}
		\caption{
			\textmd{Domain decomposition
			using $p=8$  processors. In scenario (a), a 
			straightforward 3D 
			decomposition
			divides every dimension in $p^{1/3}=2$. In 
			scenario (b), COSMA 
			starts by
			finding a near optimal sequential schedule and then parallelizes it 
			minimizing
			crossing data reuse $V_{R,i}$~(\cref{sec:seqOpt}). The total 
			communication volume is reduced by 17\%
			compared to the former strategy.}
		}
		\label{fig:topdown-vs-bottomup}
	\end{figure}
	
	COSMA  decomposes processors by parallelizing the near optimal sequential 
	schedule
	under constraints: (1) equal work distribution and 
	(2) equal memory size per 
	processor.
	Such a local sequential schedule is independent of matrix dimensions.  Thus,
	intuitively, instead of dividing  a global domain among $p$ processors 
	(the
	\emph{top-down} approach), we start from deriving a near I/O optimal
	\emph{sequential} schedule. We then parallelize it, minimizing the I/O and
	latency costs $Q$, $L$ (the \emph{bottom-up} approach);
	Figure~\ref{fig:topdown-vs-bottomup} presents more details.
	
	\hspace{\parindent}COSMA is sketched in Algorithm~\ref{alg:COSMA}.
	In Line~\ref{alg:line:seqSched} we derive a sequential schedule, which 
	consists of series of $a \times a$ outer products. 
	(Figure~\ref{fig:mmmParallelization} b). In 
	Line~\ref{alg:line:findOptDomain}, each processor is assigned to compute 
	$b$ of these products, forming a 
	\emph{local domain} $\mathcal{D}$
	(Figure~\ref{fig:mmmParallelization} c), that is 
	each 
	$\mathcal{D}$ 
	contains $a 
	\times a 
	\times b$ vertices (multiplications to be performed - the derivation 
	of $a$ and $b$ is presented in~\cref{sec:parScheduling}).
	In 
	Line~\ref{alg:line:fitranks}, we 
	find a processor
	grid~$\mathcal{G}$ that evenly distributes this domain by the matrix 
	dimensions 
	$m,n$,
	and $k$. If the dimensions are not divisible by $a$ or $b$, this
	function also evaluates new values of $a_{opt}$ and $b_{opt}$ by fitting 
	the best 
	matching
	decomposition, possibly not utilizing some 
	processors~(\cref{sec:decompArbitrary}, 
	Figure~\ref{fig:mmmParallelization} 
	d-f). The maximal number of idle processors is a tunable parameter 
	$\delta$.
	In Line~\ref{alg:line:datadecomp}, 
	we determine the initial decomposition of matrices $A, B$, and $C$ to the 
	submatrices $A_l, B_l, C_l$ 
	that 
	are 
	local for 
	each processor.
	COSMA may handle any initial data layout, however, 
	an optimal 
	block-recursive
	one~(\cref{sec:datalayout}) may be achieved in a preprocessing phase.
	In Line~\ref{alg:line:stepsize}, we compute the size of the 
	communication step, that is, how many of $b_{opt}$ outer products assigned 
	to each processor are computed in a single round, minimizing the 
	latency~(\cref{sec:parScheduling}).
	In Line~\ref{alg:line:steps} we
	compute the number of
	sequential steps 
	(Lines~\ref{alg:line:innerloopStart}--\ref{alg:line:innerLoopEnd}) in which 
	every processor: (1) distributes and updates its
	local data $A_l$ and $B_l$ among the grid $\mathcal{G}$ 
	(Line~\ref{alg:line:distrData}),
	and (2) multiplies $A_l$ and $B_l$
	(Line~\ref{alg:line:compute}). Finally, the partial results $C_l$ are
	reduced over $\mathcal{G}$ (Line~\ref{alg:line:reduce}).

	\macb{I/O Complexity of COSMA}
	Lines~\ref{alg:line:findOptDomain}--\ref{alg:line:steps} require no 
	communication 
	(assuming that the 
	parameters $m, n, k, p, S$ are already distributed).
	The loop in Lines~\ref{alg:line:innerloopStart}-\ref{alg:line:innerLoopEnd} 
	executes $\left\lceil{2ab/(S-a^2)}\right \rceil$ times. In 
	Line~\ref{alg:line:distrData}, each processor receives $|A_l| + |B_l|$ 
	elements.
	Sending the partial results in 
	Line~\ref{alg:line:reduce} adds $a^2$ communicated elements. 
	In~\cref{sec:parScheduling} we derive the optimal values for $a$ and  
	$b$, which yield a total of 
	$\min \Big\{S + 2 \cdot \frac{mnk}{p\sqrt{S}}, 3 
	\left(\frac{mnk}{P}\right)^{2/3} \Big\}$ elements communicated.
	\begin{algorithm}
		\footnotesize
		\caption{COSMA} \label{alg:COSMA}
		\begin{algorithmic}[1]
			\Require $\text{matrices } A \in \mathbb{R}^{m \times k}, B \in 
			\mathbb{R}^{k \times n}$,
			\Statex {number of processors:  $p$, memory size: $S$, 
			computation-I/O tradeoff ratio $\rho$}
			\Ensure $\text{matrix } C = AB \in \mathbb{R}^{m \times n}$
			\State $a \gets FindSeqSchedule(S,m,n,k,p)$
			\Comment{sequential I/O optimality~(\cref{sec:seqOpt})}
			\label{alg:line:seqSched}
			\State $b \gets ParallelizeSched(a,m,n,k,p)$ 
			\Comment{parallel I/O optimality~(\cref{sec:parOptimality})}
			\label{alg:line:findOptDomain}
			\State $(\mathcal{G}, a_{opt}, b_{opt}) \gets 
			FitRanks(m,n,k,a,b,p, \delta)$ 
			\label{alg:line:fitranks}
			\ForAll{$p_i \in \left\{1 \dots p \right\}$} \textbf{in parallel}
			\label{alg:line:outerloopStart}
			\State $(A_l, B_l, C_l) \gets GetDataDecomp(A,B, \mathcal{G}, p_i)$ 
			\label{alg:line:datadecomp}
			\State $s \gets \left \lfloor{\frac{S - a_{opt}^2}{2a_{opt}}}\right 
			\rfloor$
			\Comment{latency-minimizing size of a 
			step~(\ref{sec:parScheduling})}
			\label{alg:line:stepsize}
			\State $t \gets \left \lceil{\frac{b_{opt}}{s}}\right \rceil$ 
			\Comment{number of steps}
			\label{alg:line:steps}
			\For{$j \in \{1 \dots t\}$} 
			\label{alg:line:innerloopStart}
			\State $(A_l, B_l) \gets DistrData(A_l,B_l,\mathcal{G}, j, p_i)$ 
			\label{alg:line:distrData}
			\State $C_l \gets Multiply(A_l, B_l,j)$ 
			\Comment{compute locally}
			\label{alg:line:compute}
			\EndFor
			\label{alg:line:innerLoopEnd}
			\State $C \gets Reduce(C_l,\mathcal{G})$ 
			\Comment{reduce the partial 
			results}
			\label{alg:line:reduce}
			\EndFor
			\label{alg:line:outerLoopEnd}
		\end{algorithmic}
	\end{algorithm}

	\section{Arbitrary CDAGs: Lower Bounds}
	\label{sec:introIO}

	We now present a mathematical machinery for 
	deriving I/O lower bounds for general CDAGs. We extend 
	the main lemma by 
	Hong and 
	Kung~\cite{redblue}, which provides a method to find an I/O lower bound for 
	a 
	given CDAG. That lemma, however, does not give a tight bound, 
	as it 
	overestimates a \emph{reuse set} size (cf. 
	Lemma~\ref{lma:reuse}). Our key result here, 
	Lemma~\ref{lma:comp_intesity},
	allows us to derive a constructive proof of a tighter I/O lower bound for a 
	sequential execution of the MMM CDAG~(\cref{sec:seqOpt}). 

	The driving idea of both Hong and Kung's and our approach is to show that 
	some properties of an 
	optimal pebbling of a 
	CDAG (a problem which is PSPACE-complete~\cite{redblueHard_}) can be 
	translated to the properties of a specific partition of the CDAG (a collection of subsets $V_i$ of the CDAG; these subsets form
  subcomputations, see~\cref{sec:compModel}). 
	One can use the properties of this partition to bound the 
	number of I/O operations of the corresponding pebbling.
Hong and Kung use a specific variant of this partition, denoted as 
$S$-partition~\cite{redblue}.

	We first introduce our generalization of  
	$S$-partition,
	called $X$-partition, that is the base of our analysis.
	We describe 
	symbols used in our analysis in Table~\ref{tab:symbols}. 
	\begin{table}[h]
		\vspace{-0.5em}
		\setlength{\tabcolsep}{2pt}
		\renewcommand{\arraystretch}{0.7}
		\centering
		\scriptsize
		\sf
		\begin{tabular}{@{}l|ll@{}}
			\toprule
			\multirow{5}{*}{\begin{turn}{90}\textbf{MMM}\end{turn}}
			& $m, n, k$& Matrix dimensions \\
			& $A, B$& Input matrices $A \in \mathbb{R}^{m \times k}$ and $ B 
			\in 
			\mathbb{R}^{k \times n}$ \\
			& $C = AB$& Output matrix $C \in \mathbb{R}^{m \times n}$ \\
			& $p$& The number of processors \\
			\midrule
			\multirow{5}{*}{\begin{turn}{90}\textbf{graphs}\end{turn}}
			& $G$&A directed acyclic graph $G=(V,E)$\\
			& $Pred(v)$& A set of immediate predecessors of a vertex $v$:\\
			& & $Pred(v) = 
			\{u : (u,v) 
			\in E\}$ \\
			& $Succ(v)$& A set of immediate successors of a vertex $v$: \\
			& &  $Succ(v) = \{u : 
			(v,u) 
			\in E\}$ \\
			\midrule
			\multirow{13}{*}{\begin{turn}{90}\textbf{I/O complexity}\end{turn}}
			& $S$ & The number of red pebbles (size of the fast memory)\\
			& $V_i$ & An $i$-th subcomputation of an $S$-partition \\
			& $Dom(V_i), Min(V_i)$ & Dominator and minimum sets of 
			subcomputation 
			$V_i$\\
			& $V_{R,i}$ & \makecell[l]{The \emph{reuse set}: a set of vertices 
				containing red pebbles\\(just before $V_i$ starts) and used by 
				$V_i$} \\
			& $H(S)$ & The smallest cardinality of a valid $S$-partition \\
			& $R(S)$ & The maximum size of the reuse set \\
			& $Q$ & The I/O cost of a schedule (a number of I/O operations) \\
			& $\rho_i$ & The computational intensity of $V_i$\\
			& $\rho = \max_i\{\rho_i\}$ & The maximum computational intensity\\
			\midrule
			\multirow{4}{*}{\begin{turn}{90}
					\textbf{Schedules}
			\end{turn}} 
			& $\mathcal{S} = \{V_1, \dots, V_h\}$ & The sequential schedule (an 
			ordered set 
			of 
			$V_i$) \\ 
			& $\mathcal{P} = \{\mathcal{S}_1, \dots, \mathcal{S}_p\}$ & The 
			parallel schedule (a set 
			of 
			sequential schedules $\mathcal{S}_j$) \\
			& $\mathcal{D}_j = \bigcup_{V_i \in \mathcal{S}_j}V_i $ & The local 
			domain 
			(a set of vertices 
			in $\mathcal{S}_j$ \\
			& $a,b$ &  Sizes of a local domain: $|\mathcal{D}_j| = a^2b$\\
			
			\bottomrule
		\end{tabular}
		\caption{
			\textmd{The most important symbols used in the paper.}
		}
		\label{tab:symbols}
	\end{table}
	
	\vspace{-1.5em}
	\macb{$X$-Partitions}
  Before we define $X$-partitions, we first need to define two sets,
  the \emph{dominator set} and the \emph{minimum set}.
	Given a subset $V_i \in V$, define a \emph{dominator set} $Dom(V_i)$ as a 
	set 
	of 
	vertices in $V$, such that every path from any input of a 
	CDAG to any vertex in $V_i$ must contain at least one vertex in $Dom(V_i)$. 
	Define 
	also the \emph{minimum set} $Min(V_i)$ as the set of all vertices in $V_i$ 
	that 
	do 
	not 
	have any children in $V_i$.

	Now, given a CDAG $G = (V,E)$, let $V_1, V_2, \dots V_h \in V$ be a series of 
	subcomputations that (1) are 
	pairwise disjoint ($\forall_{i,j, i \ne j}V_i \cap V_j = \emptyset )$, (2)
	cover the whole CDAG ($\bigcup_i V_i = V$), (3)
  have no cyclic dependencies between them,
	and (4) their dominator and minimum sets are 
	at most of size $X$ ($\forall_i (|Dom(V_i)| \le X \land |Min(V_i)| \le X)$).
	These subcomputations $V_i$ correspond to some execution order (a schedule) 
	of 
	the CDAG, such that at step $i$, only vertices in $V_i$ are pebbled. We 
	call 
	this 
	series an \emph{$X$-partition} or a 
	\emph{schedule} of the 
	CDAG and denote this schedule with $\mathcal{S}(X) = \{V_1, \dots, V_h\}$.
	
	\subsection{Existing General I/O Lower Bound}
	\label{sec:spartProof}
	
	Here we need to briefly bring back the original lemma by Hong 
	and Kung, together with an intuition of its proof, as we use 
	a similar method for our Lemma~\ref{lma:reuse}.
	
	\macb{Intuition}
	The key notion in the existing bound is to use\linebreak 
	$X=2S$-partitions for a 
	given 
	fast
	memory size~$S$.
	For any subcomputation $V_i$, if $|Dom(V_i)| = 2S$, then at 
	most $S$
	of them could contain a red pebble before $V_i$ begins.  Thus, at least $S$
	additional pebbles need to be loaded from the memory.  The similar argument
	goes for $Min(V_i)$. Therefore, knowing the lower bound on the number of 
	sets
	$V_i$ in a valid \emph{$2S$-partition}, together with the observation that 
	each
	$V_i$ performs at least $S$ I/O operations, we phrase the 
	lemma by Hong and Kung: 

	\begin{lma}[~\cite{redblue}]
	
	\emph{The minimal number $Q$ of I/O operations for any 
		valid execution of a CDAG 
		of
		any I/O computation is bounded by}
	
	\begin{equation}
	\label{eq:redbluebound}
	Q \ge S \cdot (H(2S) - 1)
	\end{equation}
	\end{lma}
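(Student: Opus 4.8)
The plan is to take any complete calculation of the CDAG that uses $Q$ I/O operations and chop its sequence of pebbling moves into consecutive \emph{phases}, each containing exactly $S$ I/O operations except possibly the last; this produces $h = \lceil Q/S\rceil$ phases. Letting $V_i \subseteq V$ be the set of vertices computed during phase $i$ (with recomputed vertices assigned to, say, the phase of their last computation, and the inputs of the CDAG attached to the first phase), I would argue that $V_1,\dots,V_h$ is a valid $2S$-partition; then $h \ge H(2S)$ by the definition of $H$, and since each of the first $h-1$ phases performs exactly $S$ I/O operations we get $Q \ge S(h-1) \ge S\,(H(2S)-1)$, which is \eqref{eq:redbluebound}.

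\textbf{Why the partition is valid.} Pairwise disjointness, coverage of $V$, and acyclicity of the dependencies among the $V_i$ follow because the phases are consecutive pieces of one genuine execution order. The real work is in the two size bounds. For $|Dom(V_i)| \le 2S$: I would show that the set $W_i$ consisting of the (at most $S$) vertices carrying a red pebble at the boundary just before phase $i$ together with the (at most $S$) vertices loaded from blue to red memory during phase $i$ is a dominator of $V_i$ — intuitively, any path from an input to a vertex of $V_i$ must ``enter'' phase $i$ either through a value already resident in fast memory or through a load, so it meets $W_i$, and $|W_i| \le 2S$. For $|Min(V_i)| \le 2S$: if $v \in V_i$ has no child inside $V_i$, its newly computed value is consumed only later, so $v$ must either still hold a red pebble at the end of phase $i$ (at most $S$ vertices) or be stored to blue memory during phase $i$ (at most $S$ stores); hence $|Min(V_i)| \le 2S$.

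\textbf{Main obstacle.} I expect the dominator-set bound to be the crux: one has to make precise, via a short induction along paths on how a red pebble can come to rest on a vertex (it was loaded, it was already there at the phase boundary, or it was computed from parents that inductively satisfy the same property), that no value can reach $V_i$ except through $W_i$, so that $W_i$ genuinely dominates $V_i$. The remaining points — the bookkeeping for recomputed vertices and for the CDAG's inputs (which are never targets of a compute move), and the observation that $H(2S)$ is finite whenever a complete calculation exists — are routine and do not affect the bound.
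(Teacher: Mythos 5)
Your plan is the same argument the paper gives (and Hong--Kung's original): cut the pebbling into consecutive phases of exactly $S$ I/O operations (the last possibly shorter), let $V_i$ be the vertices computed in phase $i$, dominate $V_i$ by the at most $S$ vertices that are red at the phase boundary together with the at most $S$ vertices loaded during the phase, cover $Min(V_i)$ by the at most $S$ vertices red at the end of the phase together with the at most $S$ vertices stored during it, and conclude $h \ge H(2S)$ and $Q \ge S\,(h-1) \ge S\,(H(2S)-1)$. In fact your phrasing that the boundary-red set union the load set \emph{is a dominator set} of $V_i$ is the cleaner form of the paper's claim ``$V_{R,i}\cup V_{B,i}=Dom(V_i)$'', and the path-induction you sketch is exactly what is needed there.

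One bookkeeping choice you dismiss as routine is actually wrong as stated: attaching all inputs of the CDAG to the first phase. Under the paper's definition of a dominator set, any input vertex $v\in V_1$ is by itself a one-vertex path from an input to $V_1$, so $v$ must lie in $Dom(V_1)$; hence $|Dom(V_1)|$ is at least the number of CDAG inputs, which exceeds $2S$ in the cases of interest (for MMM, $mk+nk>2S$ since $S<\min\{mk,nk\}$), so your family is not a valid $2S$-partition. The standard repair keeps the rest of your argument intact: assign each input to the phase in which it first receives a red pebble (every input is loaded at least once in any complete calculation), so it falls into that phase's load set, which you already count among the at most $S$ loads. Separately, the ``last computation'' rule for recomputed vertices is not innocuous either: it can place a parent in a \emph{later} subset than one of its children (the parent was computed early, its child computed from that early copy, and the parent recomputed later), which combined with an ordinary forward edge yields a cyclic dependency between two subsets and violates condition (3) of the partition definition. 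Assigning by \emph{first} computation restores acyclicity (a parent's first computation precedes any computation of its child), though recomputation then deserves an explicit word in the minimum-set step --- a subtlety the paper's own proof also passes over silently, so it does not separate your argument from theirs.
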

	
	\begin{proof}
	Assume that we know the optimal \emph{complete calculation} of the CDAG, 
	where a calculation
	is a sequence of allowed moves in the red-blue pebble game~\cite{redblue}. 
	Divide the complete calculation 
	into $h$ consecutive subcomputations $V_1, V_2, ..., V_h$, such that during 
	the
	execution of $V_i$, $i < h$, there are exactly $S$ I/O operations, and in 
	$V_h$
	there are at most $S$ operations. Now, for each $V_i$, we define two 
	subsets of
	$V$, $V_{R,i}$ and $V_{B,i}$.
	$V_{R,i}$ contains vertices that have red pebbles placed on them just before
	$V_i$ begins.
	$V_{B,i}$ contains vertices that have blue pebbles placed on 
	them just before
	$V_i$ begins, and have red pebbles placed on them during $V_i$.
	Using these definitions, we have: \ding{182} $V_{R,i} \cup 
	V_{B,i} =
	Dom(V_i)$, \ding{183} $|V_{R,i}| \le S$, \ding{184} 
	$|V_{B,i}| \le S$, and
	\ding{185} $|V_{R,i} \cup V_{B,i}| \le |V_{R,i}| + |V_{B,i}| 
	\le 2S$.
	We define similar subsets $W_{B,i}$ and $W_{R,i}$ for the 
	minimum set 
	$Min(V_i)$.  $W_{B,i}$ contains all vertices in $V_i$ that 
	have a blue 
	pebble placed on them during $V_i$, and  $W_{R,i}$ contains all vertices in 
	$V_i$ that have a red pebble at the end of $V_i$. By the definition of 
	$V_i$, 
	$|W_{B,i}| \le S$, by the constraint on the red pebbles, we 
	have $|W_{R,i}| \le 
	S$, and by te definition of the minimum set,$Min(V_i) \subset W_{R,i} \cup 
	W_{B,i}$.
	Finally, by the definition of $S$-partition, $V_1, V_2, ..., 
	V_h$ form a valid
	$2S$-partition of the CDAG. 
\end{proof}

	\subsection{Generalized I/O Lower Bounds}
	\label{sec:seq-proof}
	
	\subsubsection{Data Reuse}
	\label{sec:datareuse}
	
	A more careful look at sets $V_{R,i}, V_{B,i}, W_{R,i}$, 
	and $W_{B,i}$
	allows us to refine the bound on the number of I/O operations on a CDAG.  By
	definition, $V_{B,i}$ is a set of vertices on which we place a red pebble 
	using
	the load rule; We call
    $V_{B,i}$ a \emph{load set} of $V_i$. Furthermore, $W_{B,i}$
	contains all the vertices on which we place a blue pebble during the 
	pebbling
	of $V_i$; We call
    $W_{B,i}$ a \emph{store set} of $V_i$.
	However,
	we impose more strict $V_{R,i}$ and $W_{R,i}$ definitions: $V_{R,i}$ 
	contains vertices 
	that
	have red pebbles placed on them just before $V_i$ begins \emph{and -- for 
	each
		such vertex $v \in V_{R,i}$ -- at least one child of $v$ is pebbled 
		during the
		pebbling of $V_i$ using the compute rule of the red-blue pebble game}.
We call
	$V_{R,i}$ a \emph{reuse set} of $V_i$.   
	Similarly, $W_{R,i}$ contains vertices 
	that
	have red pebbles placed on them after $V_i$ ends and were pebbled during 
	$V_i$ \emph{and -- for 
		each
		such vertex $v \in W_{R,i}$ -- at least one child of $v$ is pebbled 
		during the
		pebbling of $V_{i+1}$ using the compute rule of the red-blue pebble 
		game}.
We call $W_{R,i}$ a \emph{cache set} of $V_i$.
Therefore, if $Q_i$ is the number 
	of I/O operations during the
	subcomputation $V_i$, then $Q_i \ge |V_{B,i}| +  |W_{B,i}|$.
	
	We first observe that, given the optimal complete calculation, one can 
	divide
	this calculation into subcomputations such that each subcomputation $V_i$
	performs an arbitrary number of $Y$ I/O operations. 
	We still have $|V_{R,i}| \le S$, $|W_{R,i}| \le S$, $0 \le 
	|W_{B,i}|$ (by the
	definition of the red-blue pebble game rules).
	Moreover, observe that, because we perform exactly $Y$ I/O operations in 
	each
	subcomputation, and all the vertices in $V_{B,i}$ by definition have to be
	loaded, $|V_{B,i}| \le Y$. A similar argument gives $0 \le |W_{B,i}| \le Y$.
	
	Denote an upper bound on $|V_{R,i}|$ and $|W_{R,i}|$ as $R(S)$ \linebreak 
	($\forall_i 
	\max\{|V_{R,i}|,|W_{R,i}|\} \le 
	R(S) \le S$).  Further, denote a lower bound on  $|V_{B,i}|$ and 
	$|W_{B,i}|$ as 
	$T(S)$ 
	($\forall_i  0 \le T(S) \le \min\{|V_{B,i}|,|W_{B,i}|\}$). We can use 
	$R(S)$ and  
	$T(S)$ to  
	tighten
	the bound on $Q$.
	We call $R(S)$ a \emph{maximum reuse} and $T(S)$ a \emph{minimum I/O} of
	a CDAG.

	\subsubsection{Reuse-Based Lemma}
	
	We now use the above definitions and observations to \emph{\textbf{generalize the result of Hong and 
	Kung~\cite{redblue}}}.

\begin{lma}
	\label{lma:reuseCalculation}
	An optimal complete calculation of a CDAG $G = (V,E)$, which performs $q$ 
	I/O operations, is associated with an $X$-partition of $G$ such that 
	
	$$q \ge (X - R(S) + T(S)) \cdot (h - 1)$$
	
	for any value of $X \ge S$, where $h$ is the number of subcomputations in 
	the $X$-partition, $R(S)$ is the maximum reuse set size, and $T(S)$ is the 
	minimum I/O in the given $X$-partition.
\end{lma}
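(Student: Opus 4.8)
The plan is to follow the skeleton of the Hong--Kung argument reproduced above, but to replace its two crude estimates $|V_{R,i}|\le S$ and $|W_{R,i}|\le S$ by the sharper bound $R(S)$, and to additionally exploit that, by the definition of $T(S)$, \emph{every} subcomputation both loads at least $T(S)$ vertices and stores at least $T(S)$ vertices. Concretely: start from an optimal complete calculation that performs $q$ I/O operations and put $Y := X - R(S) + T(S)$. Since $R(S)\le S\le X$ and $T(S)\ge 0$ we have $Y\ge 0$, and we may assume $Y\ge 1$, since otherwise the claimed inequality is vacuous. Cut the calculation at I/O-operation boundaries into consecutive time blocks so that each of the first $h-1$ blocks contains exactly $Y$ I/O operations and the last one contains at most $Y$; let $V_1,\dots,V_h$ be the corresponding subcomputations (the vertices pebbled with the compute rule during each block), formed exactly as in the Hong--Kung construction so that they are pairwise disjoint, cover $V$, and have no cyclic dependencies among them. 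To each $V_i$ attach the four sets $V_{R,i}, V_{B,i}, W_{R,i}, W_{B,i}$ defined in \cref{sec:datareuse}.

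The first task is to verify that $\{V_1,\dots,V_h\}$ is a valid $X$-partition, i.e.\ $|Dom(V_i)|\le X$ and $|Min(V_i)|\le X$ for all $i$. As in Hong--Kung one shows that $V_{R,i}\cup V_{B,i}$ dominates $V_i$ and that $Min(V_i)\subseteq W_{R,i}\cup W_{B,i}$, by tracing any input-to-$V_i$ path backward to the last vertex carrying a red or a blue pebble immediately before block $i$ (the only delicate situations are a vertex that is stored and later reloaded, or freed and later recomputed, handled exactly as in the original proof). Hence $|Dom(V_i)|\le |V_{R,i}|+|V_{B,i}|$ and $|Min(V_i)|\le |W_{R,i}|+|W_{B,i}|$. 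Now comes the counting step that uses the new quantities: because loads and stores are disjoint classes of I/O operations, with $V_{B,i}$ loaded and $W_{B,i}$ stored, we have $|V_{B,i}|+|W_{B,i}|\le Q_i$, where $Q_i\le Y$ is the number of I/O operations in block $i$; combined with $|W_{B,i}|\ge T(S)$ (resp.\ $|V_{B,i}|\ge T(S)$) from the definition of $T(S)$, this yields
\[
|V_{B,i}| \;\le\; Q_i-|W_{B,i}| \;\le\; Y-T(S) \;=\; X-R(S),
\]
and symmetrically $|W_{B,i}|\le X-R(S)$. Together with $|V_{R,i}|,|W_{R,i}|\le R(S)$ this gives $|Dom(V_i)|\le X$ and $|Min(V_i)|\le X$, so the partition is valid.

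Finally I would read off the bound: the first $h-1$ blocks are disjoint time intervals each performing exactly $Y$ I/O operations, so $q\ge (h-1)Y=(h-1)\bigl(X-R(S)+T(S)\bigr)$, as claimed; specializing to $X=2S$, $R(S)=S$, $T(S)=0$ recovers $q\ge S(h-1)$, the lemma of Hong and Kung. I expect the only real obstacle to be making the $X$-partition verification fully rigorous — checking that after tightening $V_{R,i}$ and $W_{R,i}$ to count only vertices whose children are genuinely (re)computed in block $i$ or $i{+}1$, the sets $V_{R,i}\cup V_{B,i}$ and $W_{R,i}\cup W_{B,i}$ still cover $Dom(V_i)$ and $Min(V_i)$, and that restores and recomputations do not spoil disjointness of the $V_i$; everything else is the Hong--Kung template together with the short counting estimate above.
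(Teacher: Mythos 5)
Your proposal is correct and follows essentially the same route as the paper's proof: both cut the optimal calculation into consecutive blocks of exactly $Y = X - R(S) + T(S)$ I/O operations (except the last), use the load/store count split together with $|W_{B,i}|, |V_{B,i}| \ge T(S)$ and $|V_{R,i}|, |W_{R,i}| \le R(S)$ to bound $|Dom(V_i)|$ and $|Min(V_i)|$ by $X$, invoke the Hong--Kung argument for the remaining partition properties, and then count $q \ge (h-1)Y$. The only difference is presentational: the paper bounds the number of load operations $q_{i,l} \le Y - T(S)$ directly from $q_{i,s} + q_{i,l} = Y$, while you derive the equivalent bound on $|V_{B,i}|$ from $|V_{B,i}| + |W_{B,i}| \le Q_i \le Y$.
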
	

	\begin{proof}
		We use analogous 
		reasoning as in the original lemma. 
		We associate 
		the optimal pebbling with $h$ consecutive subcomputations 
		$V_1, \dots V_h$ with the difference that each 
		subcomputation $V_i$ performs $Y=X-R(S)+T(S)$ I/O operations.
		Within those $Y$ operations, we consider separately $q_{i,s}$ 
		store and $q_{i,l}$ load 
		operations.  For each $V_i$ we have $q_{i,s} + q_{i,l} = Y$, $q_{i,s} 
		\ge T(S)$, and
		$q_{i,l} \le Y - T(S) = X -R(S)$. 
		
		\vspace{-0.5em}
		\begin{gather}
		\nonumber
		\forall_{i}: |V_{B,i}| \le q_{l,i} \le Y-T(S) \\
		\nonumber
		\forall_{i}: |V_{R,i}| \le q_{s,i} \le R(S) \le S
		\end{gather}
		\vspace{-0.5em}
		
		Since $V_{R,i} \cup V_{B,i} = Dom(V_i)$:
		
		\vspace{-0.5em}
		\begin{gather}
		\label{eq:proof}
		\nonumber
		|Dom(V_i)| \le |V_{R,i}| + |V_{B,i}| \\
		\nonumber
		|Dom(V_i)| \le R(S) + Y- T(R) =  X
		\end{gather}
		\vspace{-0.5em}
		
		By an analogous construction for store 
		operations, we show that $|Min(V_i)| \le X$. To show that $S(X) = \{V_1 
		\dots 
		V_{h}\}$ meets the remaining properties of a valid 
		$X$-partition $\mathcal{S}(X)$, we use the same reasoning as originally 
		done~\cite{redblue}. 
		
		Therefore, a complete calculation 
		performing $q > (X - R(S) + T(S)) \cdot (h - 1)$ I/O operations has 
		an 
		associated $\mathcal{S}(X)$, such that $|\mathcal{S}(X)| 
		= h$  (if $q = (X - R(S) + T(S)) \cdot (h - 
		1)$, then $|\mathcal{S}(X)| = h-1$).
	\end{proof}

\noindent
From the previous lemma, we obtain a tighter I/O lower bound.

\begin{lma}
	\label{lma:reuse}
	Denote $H(X)$ as the minimum number of subcomputations
	in any valid $X$-partition of a CDAG $G = (V, E)$, for any $X \ge S$.
	The minimal number $Q$ of I/O operations for any valid execution of a 
	CDAG 
	$\ G=(V,E)$ is bounded by  
	
	\vspace{-0.5em}
	\begin{equation}
	Q \ge (X - R(S) + T(S)) \cdot (H(X) - 1)
	\label{eq:reusebound} \end{equation}
	\vspace{-0.5em}
	
	\noindent
	where $R(S)$ is the maximum reuse set size and $T(S)$ is the 
	minimum I/O 
	set size.
	Moreover, we have
	
	\vspace{-0.5em}
	\begin{equation}\label{eq:reusebound-pmax}
	H(X) \ge \frac{|V|}{|V_{max}|}
	\end{equation}
	\vspace{-0.5em}
	
	\noindent
	where $V_{max} = \argmax_{V_i \in \mathcal{S}(X)}|V_i|$ is 
	the largest
	subset of vertices in the CDAG schedule $\mathcal{S}(X) = \{V_1, \dots, 
	V_h\}$.
\end{lma}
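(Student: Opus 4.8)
The plan is to derive \eqref{eq:reusebound} as an essentially immediate consequence of Lemma~\ref{lma:reuseCalculation}, and then to establish \eqref{eq:reusebound-pmax} by a straightforward counting argument. First I would take an arbitrary valid execution of the CDAG that performs $Q$ I/O operations; this is a complete calculation in the sense of the red-blue pebble game. Apply Lemma~\ref{lma:reuseCalculation} to it with the chosen parameter $X \ge S$: this associates with the execution an $X$-partition $\mathcal{S}(X) = \{V_1,\dots,V_h\}$ such that $Q \ge (X - R(S) + T(S)) \cdot (h-1)$. Now $h$ is the number of subcomputations in \emph{this particular} $X$-partition, so $h \ge H(X)$ by the definition of $H(X)$ as the minimum cardinality over all valid $X$-partitions. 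Since $X - R(S) + T(S) \ge X - R(S) \ge S - S \ge 0$ (using $R(S) \le S \le X$ and $T(S) \ge 0$), the coefficient is nonnegative, so we may substitute the lower bound on $h$ and conclude $Q \ge (X - R(S) + T(S)) \cdot (H(X) - 1)$, which is \eqref{eq:reusebound}. Because the argument holds for every valid execution, it holds for the minimal one.

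For \eqref{eq:reusebound-pmax}, I would fix any valid $X$-partition $\mathcal{S}(X) = \{V_1,\dots,V_h\}$ and use properties (1) and (2) from the definition of an $X$-partition: the sets $V_i$ are pairwise disjoint and their union is all of $V$. Hence $|V| = \sum_{i=1}^h |V_i| \le h \cdot \max_i |V_i| = h \cdot |V_{max}|$, which rearranges to $h \ge |V|/|V_{max}|$. Taking the minimum over all valid $X$-partitions on the left and noting that the right-hand side depends only on $V_{max}$, which is itself defined relative to $\mathcal{S}(X)$, gives $H(X) \ge |V|/|V_{max}|$ for the partition in question; since this holds for the minimizing partition as well, the stated inequality follows. (One subtlety: $|V_{max}|$ as written depends on the schedule, so the cleanest phrasing is that for every valid $X$-partition, its cardinality is at least $|V|/|V_{max}|$, and in particular the minimum-cardinality one satisfies this.)

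The only real content to be careful about is the nonnegativity of the coefficient $X - R(S) + T(S)$, since otherwise substituting $h \ge H(X)$ would reverse the inequality; this is guaranteed by the standing bounds $0 \le T(S)$ and $R(S) \le S \le X$ established in Section~\ref{sec:datareuse}. I do not expect any genuine obstacle here: the lemma is a bookkeeping consequence of Lemma~\ref{lma:reuseCalculation} together with the definition of $H(X)$, and the second inequality is pure pigeonhole. The work of actually \emph{bounding} $R(S)$, $T(S)$, and $H(X)$ for the MMM CDAG — which is where the tightness gain over Hong--Kung comes from — is deferred to Section~\ref{sec:seqOpt} and is not part of this statement.
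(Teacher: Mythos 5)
Your proposal is correct and follows essentially the same route as the paper: inequality~\eqref{eq:reusebound} is read off from Lemma~\ref{lma:reuseCalculation} together with $H(X)\le h$, and inequality~\eqref{eq:reusebound-pmax} is the same disjointness/pigeonhole count that the paper phrases via a topological ordering and the sets $W_i = V - \bigcup_{j\le i}V_j$. Your explicit check that the coefficient $X - R(S) + T(S)$ is nonnegative (so the substitution of $H(X)$ for $h$ preserves the inequality) is a detail the paper leaves implicit, and is a welcome addition.
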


\begin{proof}
	By definition, $H(X) = 
	\min_{\mathcal{S}(X)}|\mathcal{S}(X)| \le h$, so $Q \ge (X
	- R(S) + T(S)) \cdot (H(X) - 1)$ immediately follows from 
	Lemma~\ref{lma:reuseCalculation}.
	
	To prove Eq.~(\ref{eq:reusebound-pmax}), observe that $V_{max}$ by 
	definition
	is the largest subset in the optimal $X$-partition. As the subsets are
	disjoint, any other subset covers fewer remaining vertices to be 
	pebbled 
	than
	$V_{max}$. Because there are no cyclic dependencies 
	between subsets, we can
	order them topologically as $V_1, V_2, ...V_{H(X)}$. To 
	ensure that the indices are correct,
	we also define $V_0 \equiv \emptyset$. Now, define $W_i$ to be the set
	of vertices not included in any subset from $1$ to $i$, that is $W_i = 
	V -
	\bigcup_{j=1}^{i} V_j$. Clearly, $W_0 = V$ and $W_{H(X)} = \emptyset$. 
	Then, we
	have
	
	\vspace{-1em}
	\begin{alignat}{2}
	\nonumber
	\forall_{i}\quad |V_i| & \le |V_{max}| \\
	\nonumber
	|W_i| = |W_{i-1}| - |V_i| & \ge |W_{i-1}| - |V_{max}| \ge |V| - 
	i|V_{max}| 
	\\
	\nonumber
	|W_{H(X)}| = 0 & \ge |V| - H(X) \cdot |V_{max}| 
	\end{alignat}
	\vspace{-1em}
	that is, after $H(X)$ steps, we have $H(X) |V_{max}| \ge |V|$.
\end{proof}

	From this lemma, we derive the following lemma that we use to prove a 
tight 
I/O lower bound for MMM (Theorem~\ref{thm:seqlowbounds}):

\begin{lma}
	\label{lma:comp_intesity}
	Define the number of computations performed by $V_i$ for one loaded 
	element as the \emph{computational 
		intensity} $\rho_i = \frac{|V_i|}{X - |V_{R,i}| + 
		|W_{B,i}|}$ of the subcomputation 
	$V_i$.
	Denote $\rho = \max_i(\rho_i) \le 
	\frac{|V_{max}|}{X-R(S)+T(S)}$ to be 
	the \emph{maximal computational intensity}.
	Then, the number of I/O operations $Q$ is bounded by $Q \ge 
	{|V|}/{\rho}$.
\end{lma}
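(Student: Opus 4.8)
The plan is to reuse the partition-and-charge strategy behind Lemma~\ref{lma:reuse}, but to lower-bound the I/O of each subcomputation \emph{individually} and then sum, rather than charging every subcomputation the uniform amount $X-R(S)+T(S)$. First I would fix an optimal complete calculation of $G$ that performs $Q$ I/O operations and cut it, in pebbling order, into consecutive subcomputations $V_1,\dots,V_h$ forming a valid $X$-partition; the cuts are chosen greedily so that every $V_i$ except possibly the last is \emph{saturated}, i.e.\ its dominator set has been grown to size (essentially) $X$. Since the $V_i$ partition $V$ and the $Q_i$ I/O operations of $V_i$ together account for all of $Q$, we have $|V|=\sum_i |V_i|$ and $Q=\sum_i Q_i$.

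The heart of the argument is a per-subcomputation inequality. Writing $Q_i$ as loads plus stores: every vertex of the load set $V_{B,i}$ costs one load, so there are at least $|V_{B,i}|$ loads, and $Dom(V_i)\subseteq V_{R,i}\cup V_{B,i}$ gives $|V_{B,i}|\ge |Dom(V_i)|-|V_{R,i}|$; every vertex of the store set $W_{B,i}$ costs one store, so there are at least $|W_{B,i}|$ stores. Hence $Q_i\ge |Dom(V_i)|-|V_{R,i}|+|W_{B,i}|$, and saturation ($|Dom(V_i)|\ge X$, up to a bounded-in-degree correction) upgrades this to $Q_i\ge X-|V_{R,i}|+|W_{B,i}| = |V_i|/\rho_i$. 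Because $\rho\ge\rho_i$, this gives $Q_i\ge |V_i|/\rho$, and summing over $i$ yields $Q=\sum_i Q_i\ge \sum_i |V_i|/\rho=|V|/\rho$. The auxiliary inequality $\rho=\max_i\rho_i\le |V_{max}|/(X-R(S)+T(S))$ stated in the lemma then drops out of the definitions: $|V_i|\le|V_{max}|$ shrinks the numerator of $\rho_i$ while $|V_{R,i}|\le R(S)$ and $|W_{B,i}|\ge T(S)$ only enlarge its denominator.

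The main obstacle is the construction of the saturated $X$-partition and the accounting for its ragged final subcomputation. One must show that, while appending compute steps of the optimal pebbling to the current $V_i$, the minimum dominator size can always be driven up to $X$ — this relies on the bounded in-degree of the MMM CDAG so that it grows in small, controllable increments — all while keeping $|Min(V_i)|\le X$ and introducing no cyclic dependencies among the $V_i$, so that the result is a genuine $X$-partition. The last subcomputation $V_h$ need not be saturated, and bounding its (and the analogous $O(S)$) contributions is precisely the origin of the ``$-1$'' term already present in Lemma~\ref{lma:reuse} and in the Hong and Kung bound of Eq.~(\ref{eq:redbluebound}); these lower-order terms are what the $\sqrt{S}/(\sqrt{S+1}-1)$ slack of Theorem~\ref{thm:seqlowbounds} ultimately absorbs.
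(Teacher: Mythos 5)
Your overall strategy (a per-subcomputation I/O bound, summed over a partition of the optimal pebbling) matches the paper's, and your per-phase inequality $Q_i \ge |V_{B,i}| + |W_{B,i}| \ge \left(|Dom(V_i)| - |V_{R,i}|\right) + |W_{B,i}|$ is sound. The genuine gap sits exactly where the lemma has its content: the last, unsaturated subcomputation. You concede that $V_h$ need not be saturated and propose to absorb its deficit into the ``$-1$'' of Lemma~\ref{lma:reuse} and, ultimately, into the $\sqrt{S}/(\sqrt{S+1}-1)$ factor of Theorem~\ref{thm:seqlowbounds}. But the lemma asserts $Q \ge |V|/\rho$ with no additive term, and its entire purpose --- this is essentially the whole content of the paper's proof --- is to remove the $H(X)-1$ deficit of Lemma~\ref{lma:reuse}: the paper argues that, because $\rho$ is the maximal number of computations per I/O operation, even the deficient final phase must satisfy $Q_h \ge |V_h|/\rho$, so that $Q=\sum_i Q_i \ge \sum_i |V_i|/\rho = |V|/\rho$ exactly. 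Downstream this matters: Theorem~\ref{thm:seqlowbounds} uses $Q \ge |V|/\rho = 2mnk/\sqrt{S}$ with no lower-order loss, and the $\sqrt{S}/(\sqrt{S+1}-1)$ factor is the attainability gap of the constructed schedule (an upper-bound issue), not slack available to the lower bound. As written, your argument only yields $Q \ge (|V|-|V_h|)/\rho$, a strictly weaker statement than the lemma.

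A secondary problem is the partition itself. Cutting when the minimum dominator cardinality reaches $X$ does give $|Dom(V_i)| = X$ exactly (that cardinality is monotone and grows by at most one per appended compute step, since $D\cup\{v\}$ dominates $V_i\cup\{v\}$ whenever $D$ dominates $V_i$, so no in-degree correction is needed), but it does not obviously keep $|Min(V_i)| \le X$: unlike the cut-by-I/O-count construction of Lemma~\ref{lma:reuseCalculation}, the number of stores inside one of your phases is a priori unbounded, and $Min(V_i)\subseteq W_{R,i}\cup W_{B,i}$ only bounds $|Min(V_i)|$ by $S$ plus that store count. Hence your claim of producing a ``genuine $X$-partition'' --- which is what later lets the constrained maximization of $\rho$ in the MMM argument dominate every $\rho_i$ of the decomposition --- is asserted but not established. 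The paper avoids both issues by reusing the Lemma~\ref{lma:reuseCalculation} partition, in which every phase performs exactly $X-R(S)+T(S)$ I/O operations (yielding both the dominator and minimum-set bounds), and by handling every phase, including the last, through the maximality of $\rho$.
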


\begin{proof}
	Note that the term $H(X) - 1$ in Equation~\ref{eq:reusebound} emerges from 
	a fact that the last subcomputation may execute less than $Y - R(S) + T(S)$ 
	I/O operations, since $|V_{H(X)}| \le |V_{max}|$. However, because $\rho$ 
	is defined as maximal computational intensity, then performing $|V_{H(S)}|$ 
	computations requires at least $Q_{H(S)} \ge |V_{H(S)}| / \rho$. The total 
	number of I/O operations therefore is:
	
	$$Q = \sum_{i = 1}^{H(X)}Q_{i} \ge \sum_{i = 1}^{H(X)}\frac{|V_{i}|}{\rho} 
	= \frac{|V|}{\rho}$$
\end{proof}

\vspace{-1em}
	\section{Tight I/O Lower Bounds for MMM}
  \label{sec:seqOpt}

	In this section, we present our main theoretical contribution: a 
	constructive proof of a tight I/O lower bound for classical 
	matrix-matrix multiplication. 
	In~\cref{sec:parOptimality}, we extend it to 
	the parallel setup (Theorem~\ref{thm:parSchedule}). This result is tight 
	(up to diminishing factor $\sqrt{S}/(\sqrt{S+1}-1)$), and therefore may be 
	seen as the last step in the long sequence of improved bounds.
	Hong and Kung~\cite{redblue} derived an asymptotic bound 
		$\Omega\left({n^3}/{\sqrt{S}}\right)$ for the sequential case.
	Irony et al.~\cite{IronyMMM} extended the 
	lower bound result to a parallel machine with $p$ processes,
	each having a fast private memory of size~$S$, proving the
	$\frac{n^3}{4\sqrt{2}p\sqrt{S}} - S$ 
	lower bound on the communication volume per 
	process. Recently, Smith and van de
	Gein~\cite{tightMMM} proved a tight sequential lower bound (up to an 
	additive
	term) of $2mnk/\sqrt{S} - 2S$. Our proof improves the additive term and 
	extends it to a parallel schedule.
	
	\begin{thm}[Sequential Matrix Multiplication I/O lower bound] 
		Any pebbling of MMM CDAG which multiplies matrices of sizes $m \times 
		k$ and $k \times 
		n $ by performing $mnk$ multiplications requires a 
		minimum 
		number of $\frac{2mnk}{\sqrt{S}} + mn$ I/O operations.
		\label{thm:seqlowbounds}
	\end{thm}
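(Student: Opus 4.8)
The plan is to apply Lemma~\ref{lma:comp_intesity} with the choice $X = 2S$, so that the task reduces to bounding the maximal computational intensity $\rho$ of any valid $2S$-partition of the MMM CDAG. Recall that $|V| = mnk$ (one vertex per scalar multiplication) plus the $mn$ output vertices that must eventually receive a blue pebble; the additive $mn$ term in the statement will come precisely from charging the final store of each entry of $C$. So the heart of the argument is to show that for every subcomputation $V_i$ in the partition, the number of multiplications it performs is at most $\sqrt{S}$ times the number of I/O operations it triggers, i.e. $\rho_i \le \sqrt{S}$, and hence $\rho \le \sqrt{S}$.

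First I would set up the geometry. Each multiplication vertex in MMM is naturally indexed by a triple $(i,j,\ell)$ and reads one entry $A_{i\ell}$, one entry $B_{\ell j}$, and contributes to $C_{ij}$. For a subcomputation $V_i$, let its multiplications project onto sets $\pi_A(V_i) \subseteq A$, $\pi_B(V_i) \subseteq B$, $\pi_C(V_i) \subseteq C$ of the three matrices. The key combinatorial fact is the Loomis–Whitney inequality: $|V_i| \le \sqrt{|\pi_A(V_i)|\cdot|\pi_B(V_i)|\cdot|\pi_C(V_i)|}$. Next I would argue that the dominator set $Dom(V_i)$ must contain, for each matrix, enough vertices to "cover" its projection: every $A$-entry read in $V_i$ is either already in fast memory at the start of $V_i$ (counted in the reuse set $V_{R,i}$) or loaded during $V_i$ (counted in the load set $V_{B,i}$), and similarly for $B$; every $C$-entry touched is either kept past $V_i$ (cache set $W_{R,i}$) or stored during $V_i$ (store set $W_{B,i}$) — here I also need the standard observation that partial sums for a given $C_{ij}$ cannot simply be discarded, so each $C_{ij}$ projected into $V_i$ forces a pebble. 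Combining, $|\pi_A(V_i)|+|\pi_B(V_i)| \le |Dom(V_i)| \le X$ and $|\pi_C(V_i)| \le |Min(V_i)|\text{-type bound} \le X$, while the I/O cost of $V_i$ is $|V_{B,i}|+|W_{B,i}| \ge (|\pi_A|+|\pi_B| - |V_{R,i}|) + |W_{B,i}|$, which is exactly the denominator $X - |V_{R,i}| + |W_{B,i}|$ appearing in $\rho_i$.

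The optimization step is then: maximize $\sqrt{xyz}$ over nonnegative $x = |\pi_A(V_i)|$, $y=|\pi_B(V_i)|$, $z=|\pi_C(V_i)|$ subject to $x+y \le X$ and the I/O budget, then divide by the I/O count. With $X=2S$, the symmetric optimum is $x=y=z=S$, giving $|V_i| \le S^{3/2}$ against $\ge S$ loads (for the inputs) — but to get the sharp constant $2mnk/\sqrt{S}$ rather than something weaker, I expect to need the more refined accounting of Lemma~\ref{lma:comp_intesity} that separately tracks reuse: the point is that a vertex kept in fast memory across the boundary of $V_i$ is "paid for" by $V_{i-1}$, so the correct per-subcomputation ratio really is $|V_i|/(X - |V_{R,i}| + |W_{B,i}|)$, and one shows this is maximized, over all admissible configurations, at the value $\sqrt{S}$. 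Plugging $\rho \le \sqrt{S}$ into $Q \ge |V|/\rho$ with $|V|$ counting the $mnk$ multiplications gives $Q \ge 2mnk/\sqrt{S}$ (the factor $2$ surviving from the $X=2S$ substitution inside the intensity bound), and the separate, unavoidable $mn$ stores of the output entries — which are never in the slow memory in the terminal configuration unless written there — add the $+mn$ term. I would finish by checking that the construction of the $V_i$ from an optimal pebbling (dividing the move sequence into blocks of a fixed I/O count) satisfies all four defining properties of an $X$-partition, invoking Lemma~\ref{lma:reuseCalculation} for that bookkeeping.

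The main obstacle I anticipate is making the reuse accounting airtight enough to yield the tight constant: the naive Loomis–Whitney argument with $X=2S$ easily gives an asymptotically optimal $\Omega(mnk/\sqrt{S})$ bound, but squeezing out the exact coefficient $2$ and the clean additive $mn$ requires carefully arguing that no subcomputation can do better than the "square outer-product" configuration even when it inherits red pebbles from its predecessor — equivalently, that $\max_i |V_i|/(X - |V_{R,i}| + |W_{B,i}|)$ cannot exceed $\sqrt{S}$ for any admissible $V_{R,i}$ and $W_{B,i}$. This is a small but delicate constrained-optimization computation, and getting the boundary cases (subcomputations that load only $A$ or only $B$, or that produce no new output entries) right is where the care is needed.
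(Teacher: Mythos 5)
There is a genuine quantitative gap: your target intensity bound $\rho \le \sqrt{S}$ is off by a factor of two from what the theorem needs, and your fixed choice $X=2S$ cannot deliver the tight constant. By Lemma~\ref{lma:comp_intesity}, $Q \ge |V|/\rho = mnk/\rho$, so to obtain $2mnk/\sqrt{S}$ you must show $\rho \le \sqrt{S}/2$, not $\rho \le \sqrt{S}$; with your stated target the conclusion is only $mnk/\sqrt{S}+mn$. The remark that ``the factor $2$ survives from the $X=2S$ substitution inside the intensity bound'' has no mechanism behind it: with $X=2S$ and the symmetric configuration $|\pi_A|=|\pi_B|=|\pi_C|=S$ against $\ge S$ loads, the per-subcomputation ratio is exactly $S^{3/2}/S=\sqrt{S}$, i.e.\ the classical Hong--Kung-style $\Omega(mnk/\sqrt{S})$ bound, and even the refined denominator $X-R(S)+T(S)$ evaluated at $X=2S$, $R(S)=S$, $T(S)=0$ gives only $\rho \le (2S/3)^{3/2}/S \approx 0.54\sqrt{S}$, still short of $\sqrt{S}/2$. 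The paper's proof treats $X$ as a free parameter: in Lemma~\ref{lma:non_greedy} one maximizes $\sqrt{|\alpha||\beta||\gamma|}/(|\alpha|+|\beta|+|\gamma|-S)$ and then minimizes the resulting $(X/3)^{3/2}/(X-S)$ over $X$, which is tightest at $X=3S$, yielding $\rho \le \sqrt{S}/2$; the symmetric point $x=y=z=S$ you single out is precisely this optimum, but only when paired with $2S$ (not $S$) charged I/O operations, i.e.\ when the reuse is credited entirely to the $C$-block kept resident across subcomputations.

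Structurally you are also missing the decomposition of the argument that makes the constant attainable and provable. The paper first restricts to greedy schedules (Lemma~\ref{lma:greedy}), where the explicit reuse/store sets force the optimal subcomputation to be an outer product with $|\alpha_r|=|\beta_r|=\sqrt{S}$, $|\gamma_r|=S$, $c=1$, and full reuse of $\gamma$ ($\gamma_r=\gamma_{r+1}$, $W_{B,r}=\emptyset$), giving $\rho = S/(2\sqrt{S})=\sqrt{S}/2$; this requires the overlapping-projections argument (at most one of $\alpha$, $\beta$, $\gamma$ may overlap between consecutive subcomputations, else Loomis--Whitney is not tight) which your sketch does not address. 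A separate lemma then shows non-greedy schedules cannot beat this, via the free-$X$ optimization above. Your instinct that ``the delicate part is showing no subcomputation beats the square outer-product configuration'' is right, but the value you propose to prove for that optimum is the wrong one, so carrying out your plan as written would establish only the weaker bound $mnk/\sqrt{S}+mn$. The $+mn$ accounting for the final stores of $C$ is fine and matches the paper.
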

	
	The proof of Theorem~\ref{thm:seqlowbounds} requires 
	Lemmas~\ref{lma:greedy} 
	and~\ref{lma:non_greedy}, 
	which in turn, require several definitions.

	\noindent
	\begin{small}
	\emph{Intuition: 	Restricting the analysis to greedy schedules provides  
	explicit information of a 
	state 
	of memory (sets $V_r$, $V_{R,r}$, $W_{B,r}$), and to a 
	corresponding 
	CDAG pebbling. Additional constraints~(\cref{sec:seqScheduling}) guarantee 
	feasibility of a derived schedule (and therefore, lower bound tightness).}
	\end{small} 
	
		\vspace{-0.5em}
	\subsection{Definitions}
	
	\subsubsection{Vertices, Projections, and Edges in the MMM CDAG} 
	The set of vertices of MMM CDAG $G = (V,E)$ consists of three subsets $V = 
	\mathcal{A} \cup \mathcal{B} \cup \mathcal{C}$, which correspond to 
	elements in 
	matrices  
	$A$, $B$, and $mnk$ partial sums of $C$.
	Each vertex $v$ is defined uniquely by a pair $(M,T)$,  where $M \in 
	\{a,b,c\}$ determines to which subset $\mathcal{A}$, $\mathcal{B}$, 
	$\mathcal{C}$ vertex $v$ belongs to, and $T \in \mathbb{N}^d$ is a vector 
	of coordinates, $d = 2$ for $M = a \lor b$ and $d=3$ for $M=c$.  E.g., $v = 
	(a, (1,5)) \in \mathcal{A}$ is a vertex associated with element $(1,5)$ in 
	matrix $A$, and 
	$v = (c,(3,6,8)) \in \mathcal{C}$ is associated with 8th partial sum of 
	element $(3,6)$ of 
	matrix $C$.	
	
	For every $t_3$th partial update of element $(t_1,t_2)$ in matrix $C$, and 
	an 
	associated point $v = (c, (t_1,t_2,t_3)) \in 
	\mathcal{C}$ we define $\phi_{c}(v) = (t_1,t_2)$ to be a 
	\emph{projection} of this
	point to matrix $C$,
	$\phi_{a}(v) = (a,(t_1,t_3)) \in \mathcal{A}$ is its
	projection to matrix $A$, and
	$\phi_{b}(v) = (b,(t_3,t_2)) \in \mathcal{B}$ is its projection
	to matrix $B$.
	Note that while $\phi_{a}(v), \phi_{b}(v) \in V$, projection $\phi_{c}(v) 
	\notin V$ 
	has not any associated point in $V$.
	Instead, vertices associated with all $k$ partial updates of an element of 
	$C$ have the same projection $\phi_{c}(v)$:
	
	\vspace{-1em}
	\begin{multline}
	\label{eq:verticalDeps}
	\forall_{v = (c,(p_1,p_2,p_3)), w = (c,(q_1,q_2,q_3)) \in \mathcal{C}}:  
	(p_1 = 
	q_1) \land (p_2 
	= 
	q_2) \\
	\iff \phi_{c}(p) = \phi_{c}(q)
	\end{multline}
		\vspace{-0.5em}
	
		As a consequence, 
	$\phi_{c}((c,(t_1,t_2,t_3))) = \phi_{c}((c,(t_1,t_2,t_3 -1)))$.
	
	A $t_3$th update of $(t_1,t_2)$ element in matrix $C$ of a classical MMM is 
	formulated as 	
	$C(t_1,t_2,t_3) 
	= C(t_1,t_2,t_3 - 1) + A(t_1,t_3) \cdot B(t_3,t_2)$. Therefore for each $v 
	= (c,(t_1,t_2,t_3)) \in \mathcal{C}, t_3 > 1$, we have following edges in 
	the CDAG: $(\phi_a(v), v)$, $(\phi_b(v), v)$, $(c,(t_1,t_2,t_3 -1)), v) \in 
	E$. 
	
	\subsubsection{$\bm{\alpha, \beta, \gamma, \Gamma}$}
	For a given subcomputation $V_r 
	\subseteq \mathcal{C}$, we denote its projection to matrix $A$ as $\alpha_r 
	= \phi_{a}(V_r) = \{v : v = \phi_{a}(c), c \in 
	V_r\}$, 
	its projection to 
	matrix $B$ as $\beta_r = \phi_{b}(V_r) $, and its 
	projection to matrix $C$ as $\gamma_r = \phi_{c}(V_r)$. We further 
	define 
	$\Gamma_r \subset \mathcal{C}$ as a set of all vertices in $\mathcal{C}$ 
	that 
	have a child in $V_r$.
	The sets $\alpha, \beta, \Gamma$ therefore correspond to the 
	inputs of $V_r$ that belong to matrices $A$, $B$, and 
	previous partial results of $C$, respectively. These inputs 
	form a minimal dominator set of $V_r$:
	
	\vspace{-0.5em}
	\begin{equation}
	\label{eq:dom}
	Dom(V_r) = \alpha_r \cup \beta_r \cup \Gamma_r
	\end{equation}
		\vspace{-0.5em}
	
	Because $Min(V_r) \subset \mathcal{C}$, and each vertex $v \in \mathcal{C}$ 
	has at most one child $w$ with $\phi_c(v) = \phi_c(w)$ 
	(Equation~\ref{eq:verticalDeps}), the projection
	$\phi_c(Min(V_r))$ is also equal to $\gamma_r$ :
	
		\vspace{-0.5em}
	\begin{equation}
	\label{eq:eq_projections}
	\phi_{c}(V_r) = \phi_{c}(\Gamma_r) = \phi_{c}(Min(V_r)) = 
	\gamma_r
	\end{equation}
		\vspace{-1em}
	
	\subsubsection{$\bm{Red()}$}
	Define $Red(r)$ as the set of all vertices that have red pebbles just 
	before 
	subcomputation $V_r$ starts, with $Red(1) = \emptyset$. We further have 
	$Red(P), P \subset V$ is the set of all vertices in some subset $P$ that 
	have 
	red pebbles and $Red(\phi_{c}(P))$ is a set of unique pairs of first two 
	coordinates of vertices in $P$ that have red pebbles.
	
	\subsubsection{Greedy schedule}
	We call a schedule $\mathcal{S} = \{V_1, \dots, V_h\}$ \emph{greedy} if
	during every 
	subcomputation $V_r$ every vertex $u$ that will hold a red pebble either 
	has a 
	child in $V_r$ or belongs to $V_r$:
	\begin{equation}
	\label{eq:greedy}
	\forall_{r}: Red(r) \subset \alpha_{r-1} \cup \beta_{r-1} \cup V_{r-1}
	\end{equation}
	
	\subsection{I/O Optimality of Greedy Schedules}
	
	\begin{lma}
		\label{lma:greedy}
		Any greedy schedule that multiplies matrices of sizes $m 
		\times k$ and $k 
		\times n$ using $mnk$ multiplications requires a 
		minimum 
		number of $\frac{2mnk}{\sqrt{S}} + mn$ I/O operations.
	\end{lma}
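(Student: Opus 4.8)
The plan is to apply Lemma~\ref{lma:comp_intesity}, so the task reduces to (i) upper-bounding the maximal computational intensity $\rho=\max_r\rho_r$ of an arbitrary greedy schedule $\mathcal{S}=\{V_1,\dots,V_h\}$ of the MMM CDAG, where it suffices to work with $V=\mathcal{C}$ and $|V|=mnk$, and (ii) separately charging the $mn$ unavoidable stores of the output. For step (i), fix a subcomputation $V_r\subseteq\mathcal{C}$. Viewing $V_r$ as a finite subset of $\mathbb{Z}^3$ through the coordinate triples $(t_1,t_2,t_3)$, its three axis-aligned projections are exactly $\alpha_r$, $\beta_r$ and $\gamma_r$ (onto $A$, $B$ and $C$), so the Loomis--Whitney inequality gives $|V_r|\le\sqrt{|\alpha_r|\,|\beta_r|\,|\gamma_r|}$, and combined with the arithmetic--geometric mean inequality, $|V_r|\le\big((|\alpha_r|+|\beta_r|+|\gamma_r|)/3\big)^{3/2}$.

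Next I would lower-bound the I/O performed inside $V_r$. By Eq.~(\ref{eq:dom}) the minimal dominator set is $Dom(V_r)=\alpha_r\cup\beta_r\cup\Gamma_r$, a disjoint union with $\phi_c(\Gamma_r)=\gamma_r$ (so $|\alpha_r|+|\beta_r|+|\gamma_r|\le|Dom(V_r)|$), and by Eq.~(\ref{eq:eq_projections}) $\phi_c(Min(V_r))=\gamma_r$. The greedy condition~(\ref{eq:greedy}) makes the fast-memory contents explicit: every vertex carrying a red pebble when $V_r$ begins lies in $\alpha_{r-1}\cup\beta_{r-1}\cup V_{r-1}$, and there are at most $S$ of them, so the reuse set satisfies $|V_{R,r}|\le S$; consequently every element of $Dom(V_r)$ that is neither reused nor produced inside $V_r$ must be loaded during $V_r$. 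Choosing the partition parameter so that each piece performs a fixed I/O budget with $|Dom(V_r)|$ pushed up to $X=3S$ (equivalently, so each $V_r$ does about $X-S=2S$ loads), the two displayed bounds give $|V_r|\le(X/3)^{3/2}=S^{3/2}$ while the I/O in $V_r$ is at least $2S$, hence $\rho_r\le S^{3/2}/(2S)=\sqrt{S}/2$; the extremal regime is $|\alpha_r|=|\beta_r|=|\gamma_r|=X/3=S$. Degenerate subcomputations (a vanishing projection, or a non-positive denominator, i.e.\ ones that load nothing new) are handled trivially. Thus $\rho\le\sqrt{S}/2$, and Lemma~\ref{lma:comp_intesity} yields at least $mnk/(\sqrt{S}/2)=2mnk/\sqrt{S}$ loads in total.

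Finally, the $mn$ vertices encoding the last partial sum $C(t_1,t_2,k)$ of each matrix entry are outputs of the CDAG, so they must carry blue pebbles in the terminal configuration; each therefore incurs at least one store, and these $mn$ stores are disjoint from every load counted above. Adding them yields the asserted bound $Q\ge 2mnk/\sqrt{S}+mn$.

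I expect the main obstacle to be the exact constant in the second step: getting $\rho\le\sqrt{S}/2$ rather than a weaker constant hinges on the precise greedy bookkeeping of how the at-most-$S$ red pebbles at the start of $V_r$ split among $A$-, $B$- and $C$-vertices, on correctly accounting for vertices of $\Gamma_r$ that lie inside $V_r$ (internal chain edges) and for fresh partial sums $C(t_1,t_2,1)$ that have no $C$-predecessor, and on showing these effects never let a subcomputation of $2S$ loads perform more than $S^{3/2}$ multiplications. A secondary point is choosing the partition parameter $X$ in Lemma~\ref{lma:comp_intesity} consistently with both the $|Dom(V_r)|\le X$ hypothesis and the extremal regime $|\alpha_r|=|\beta_r|=|\gamma_r|=S$, which forces $X=3S$ and needs a brief check that such a valid $X$-partition of the greedy schedule exists.
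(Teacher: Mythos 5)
Your argument is essentially correct (at the same level of rigor as the paper), but it is not the route the paper takes for this lemma --- indeed, you never actually use the greedy hypothesis: the only fact you invoke about the memory state is $|V_{R,r}|\le S$, which holds for every schedule. Bounding $|V_r|$ by Loomis--Whitney plus the AM--GM inequality, lower-bounding the per-segment I/O by $X-R(S)\ge X-S$, and then tuning the free parameter to $X=3S$ to obtain $\rho\le\sqrt{S}/2$ is, almost verbatim, the paper's proof of Lemma~\ref{lma:non_greedy}, which applies to arbitrary schedules (greedy ones included). The paper's proof of Lemma~\ref{lma:greedy} instead exploits greediness to make the memory contents explicit: it shows that extremal subcomputations have Cartesian-product projections ($|\alpha_r|=ac$, $|\beta_r|=bc$, $|\gamma_r|=ab$), excludes overlapping computations between consecutive $V_r$, and solves the constrained program of maximizing $abc/(ac+cb)$ subject to $ab+ac+cb\le X$ and $ab\le S$, obtaining $a=b=\lfloor\sqrt{S}\rfloor$, $c=1$ and hence $\rho=\lfloor\sqrt{S}\rfloor/2$. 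What that longer route buys is constructiveness: it identifies the optimal greedy schedule as a sequence of $\sqrt{S}\times\sqrt{S}$ outer products with the $C$-block resident in fast memory, which is precisely what the attainability discussion in~\cref{sec:seqScheduling} and the parallel decomposition in~\cref{sec:parOptimality} build on; your coarser argument delivers the same numerical bound but no schedule. Two caveats, shared with the paper rather than introduced by you: the step $|\alpha_r|+|\beta_r|+|\gamma_r|\le|Dom(V_r)|\le X$ leans on Equation~\ref{eq:dom}, and the concluding ``$+mn$'' rests on the claim that the output stores are not among the I/O already charged by the intensity argument --- in the partition of Lemma~\ref{lma:reuseCalculation} the per-segment budget counts loads and stores together, so this separation is asserted rather than proved, exactly as in the paper's own final step.
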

	
	\begin{proof}
		We start by creating an $X$-partition for an MMM CDAG (the values of 
		$Y$ and $R(S)$
		are parameters that we determine in the course of the proof).
		The proof is divided into the following 6 steps 
		(Sections~\ref{sec:stepI} to~\ref{sec:greedy_result}).
		
		\subsubsection{Red Pebbles During and After Subcomputation}
		\label{sec:stepI}
		
		Observe that each vertex in $c = (t_1,t_2,t_3) \in 
		\mathcal{C}, t_1 = 1 \dots m, t_2 = 1 \dots n, t_3 = 1 \dots k -1$ has 
		only one 
		child $c = (t_1, t_2, t_3 + 1)$. Therefore, we can assume that in an 
		optimal 
		schedule there are no two vertices $(t_1,t_2,t_3), (t_1,t_2,t_3 + f) 
		\in 
		\mathcal{C}, f \in \mathbb{N}_+$ that simultaneously hold a red vertex, 
		as when 
		the vertex $(t_1,t_2,t_3 +1)$ is pebbled, a red pebble can be 
		immediately 
		removed from $(t_1,t_2,t_3)$:
		
		\begin{equation}
		\label{eq:red_proj}
		|Red(V_r)| = |\phi_{c}(Red(V_r))|
		\end{equation}
		
		On the other hand, for every vertex $v$, if all its predecessors 
		$Pred(v)$ 
		have red 
		pebbles, then vertex $v$ may be immediately computed, freeing a red 
		pebble from its 
		predecessor $w \in \mathcal{C}$, due to the fact, that $v$ is the only 
		child of $w$:
		\begin{equation}
		\label{eq:immediate_compute}
		\forall_{v \in V} \forall_r : Pred(v) \subset Dom(V_r) \cup V_r 
		\implies 
		\exists_{t \le r} v \in V_t
		\end{equation}
		
		Furthermore, after subcomputation $V_r$, all vertices 
		in 
		$V_r$ that have red pebbles are in its minimum set: 
		\begin{equation}
		\label{eq:eq_red_projs}
		Red(r+1) \cap V_r = 
		Red(r+1) \cap Min(V_r)
		\end{equation}
		
		Combining this result with the definition of a greedy schedule 
		(Equation~\ref{eq:greedy}), we have
		
		\begin{equation}
		\label{eq:red_state}
		Red(r+1) \subseteq \alpha_r \cup \beta_r \cup Min(V_r)
		\end{equation}
		
		\subsubsection{Surface and volume of subcomputations}
		\label{sec:subcomp_vol}
		By the definition of $X$-partition,	the  
		computation is divided into
		$H(X)$ subcomputations $V_r \subset \mathcal{C}, r \in 
		\{1,\dots H(X)\}$, such that $Dom(V_r), 
		Min(V_r) \le 
		X$. 
		
		Inserting
		Equations~\ref{eq:dom}, \ref{eq:eq_projections}, 
		and~\ref{eq:red_proj}, we have:
		
		\begin{gather}
		\label{eq:dm}
		|Dom(V_r)| = |\alpha_r| + |\beta_r| + |\gamma_r| \le X \\
		\nonumber
		|Min(V_r)| = |\gamma_r| \le X
		\end{gather}
		
			On the other hand, the Loomis-Whitney
		inequality~\cite{loomis1949inequality}
		bounds the 
		volume of $V_r$: 
		
		\begin{equation}
		\label{eq:lm}
		V_r \le \sqrt{|\alpha_r| |\beta_r| |\gamma_r|}
		\end{equation}
		
		Consider sets of 
		all different indices accessed by projections $\alpha_r$, $\beta_r$, 
		$\gamma_r$:
		
		\begin{gather}
		\nonumber
		T_1 = \{t_{1,1}, \dots, t_{1,a}\}, |T_1| = a \\
		\nonumber
		T_2 = \{t_{2,1}, \dots, t_{2,b}\}, |T_2| = b \\
		\nonumber
		T_3 = \{t_{3,1}, \dots, t_{3,c}\}, |T_3| = c \\
		\label{eq:alphasubset}
		\alpha_r \subseteq \{(t_1, t_3) : t_1 \in T_1, t_3 \in T_3 \} \\
		\label{eq:betasubset}
		\beta_r \subseteq \{(t_3, t_2) : t_3 \in T_3, t_2 \in T_2 \} \\
		\label{eq:gammasubset}
		\gamma_r \subseteq \{(t_1, t_2) : t_1 \in T_1, t_2 \in T_2 \}\\
				\label{eq:Vsubset}
		V_r \subseteq \{(t_1, t_2,t_3) : t_1 \in T_1, t_2 \in T_2, t_3 \in T_3 
		\}
				\end{gather}				
		For fixed sizes of the projections $|\alpha_r|$, 
		$|\beta_r|$, $|\gamma_r|$, then the volume $|V_r|$ is maximized when 
		left and right side of Inequalities~\ref{eq:alphasubset} 
		to~\ref{eq:gammasubset} are equal. Using~\ref{eq:dom} and 
		~\ref{eq:immediate_compute} we have that~\ref{eq:Vsubset} is an 
		equality too, and:
		
		\begin{equation}
		\label{eq:rect_projs}
		|\alpha_r| = ac, |\beta_r| = bc, |\gamma_r| = ab, |V_r| = abc,
		\end{equation}
		achieving the upper bound (Equation~$\ref{eq:lm}$).
		
		\subsubsection{Reuse set $\bm{V_{R,r}}$ and store set $\bm{W_{B,r}}$}
		\label{sec:reuse_store_set}
		
		Consider two subsequent computations, $V_r$ and 
		$V_{r+1}$.
		After $V_r$, $\alpha_r$, $\beta_r$, and $V_r$ may 
		have red 
		pebbles (Equation~\ref{eq:greedy}). On the other hand, for 
		the dominator 
		set of 
		$V_{r+1}$ we have $|Dom(V_{r+1})| = |\alpha_{r+1}| + 
		|\beta_{r+1}| + 
		|\gamma_{r+1}|$.
		Then, the reuse set $V_{R,i+1}$ is an intersection of 
		those sets. Since $\alpha_r \cap \beta_r = \alpha_r 
		\cap \gamma_r = \beta_r \cap \gamma_r = \emptyset$, we 
		have (confront Equation~\ref{eq:red_state}):
		
		\begin{gather}	
		\nonumber
		V_{R,r+1}
		\subseteq (\alpha_r \cap \alpha_{r+1}) \cup
		(\beta_r \cap \beta_{r+1}) \cup 
		(Min(V_r) \cap \Gamma_{r+1}) \\
		\label{eq:vri1}
		|V_{R,r+1}|
		\le |\alpha_r \cap \alpha_{r+1}| +
		|\beta_r \cap \beta_{r+1}| + 
		|\gamma_{r} \cap \gamma_{r+1}|
		\end{gather}	
		
		Note that vertices in $\alpha_r$ and $\beta_r$ are inputs of 
		the 
		computation:
		therefore,
		by the definition of the red-blue pebble game, they start 
		in the slow memory (they 
		already have blue 
		pebbles). $Min(V_r)$, on the other hand, may have only red 
		pebbles
		placed on them.
		Furthermore, by the definition of the $S$-partition, 
		these 
		vertices have children that have not been pebbled yet. 
		They either have to be reused forming the reuse set 
		$V_{R,r+1}$, or
		stored back, forming $W_{B,r}$ and requiring the 
		placement of the blue 
		pebbles. Because $Min(V_r) \in \mathcal{C}$ and $\mathcal{C} 
		\cap \mathcal{A} = \mathcal{C} \cap \mathcal{B} = \emptyset$, 
		we have:
		\begin{gather}
		\nonumber
		W_{B,r} \subseteq  Min(V_r) \setminus
		\Gamma_{r+1} \\
		\label{eq:wbr}
		|W_{B,r}| \le |\gamma_{r} \setminus \gamma_{r+1}| 
		\end{gather}
		
		\subsubsection{Overlapping computations}
		\label{sec:overlap_comp}
		Consider two subcomputations $V_r$ and $V_{r+1}$. Denote 
		shared parts of their projections as  $\alpha_s = \alpha_r 
		\cap \alpha_{r+1}$, $\beta_s = \beta_r \cap \beta_{r+1}$, and 
		$\gamma_s = \gamma_r \cap \gamma_{r+1}$. Then, there are two 
		possibilities:
		\begin{enumerate}
			\item $V_r$ and $V_{r+1}$ are not cubic, resulting in their volume 
			smaller 
			than the upper bound $|V_{r+1}| < \sqrt{|\alpha_{r+1}| 
				|\beta_{r+1}| 
				|\gamma_{r+1}|}$ (Equation~\ref{eq:lm}),
			\item $V_r$ and $V_{r+1}$ are cubic. If all overlapping 
			projections are not empty, then they generate an overlapping 
			computation, that is, there exist vertices $v$, such that 
			$\phi_{ik}(v) \in \alpha_s, \phi_{kj}(v) \in \beta_s, 
			\phi_{ij}(v) \in \gamma_s$. Because we consider greedy 
			schedules, those vertices cannot belong to computation 
			$V_{r+1}$ (Equation~\ref{eq:immediate_compute}). Therefore, again 
			$|V_{r+1}| <$ \linebreak
			$\sqrt{|\alpha_{r+1}| |\beta_{r+1}| |\gamma_{r+1}|}$.
			Now consider sets of 
			all different indices accessed by those rectangular projections 
			(Section~\ref{sec:subcomp_vol}, Inequalities~\ref{eq:alphasubset} 
			to~\ref{eq:gammasubset}).
			Fixing two non-empty projections we define all three sets $T_1, 
			T_2, T_3$, 
			which in turn, generate the third (non-empty) 
			projection, resulting again in overlapping computations which 
			reduce the size 
			of $|V_{r+1}|$. Therefore, for cubic subcomputations, their volume 
			is maximized 
			$|V_{r+1}| =\sqrt{|\alpha_{r+1}| |\beta_{r+1}| |\gamma_{r+1}|}$ if 
			at most one 
			of the 
			overlapping projections is non-empty (and therefore, there is no 
			overlapping 
			computation).
		\end{enumerate}
		
		\subsubsection{Maximizing computational intensity}
		\label{sec:max_comp_intensity}
		Computational intensity $\rho_r$ of a subcomputation $V_r$ is an upper 
		bound on ratio 
		between 
		its size 
		$|V_r|$ and the number of I/O operations required. The number of I/O 
		operations is minimized when $\rho$ is maximized 
		(Lemma~\ref{lma:comp_intesity}):
		
		\begin{gather}
		\nonumber
		\text{maximize } \rho_r = 
		\frac{|V_{r}|}{X - R(S) + T(S)} \ge \frac{|V_r|}{Dom(V_r) - |V_{R,r}| + 
		|W_{B,r}|} 
		\\
		\nonumber
		\text{subject to:} \\
		\nonumber
		|Dom(V_{r})| \le X\\
		\nonumber
		|V_{R,r}| \le S
		\end{gather}
		
		To maximize the computational intensity, for a fixed number of I/O 
		operations, 
		the subcomputation size $|V_r|$ is maximized. Based on 
		Observation~\ref{sec:overlap_comp}, it is maximized only if at most one 
		of the 
		overlapping projections $\alpha_r \cap \alpha_{r+1}, \beta_r \cap 
		\beta_{r+1}, 
		\gamma_r \cap \gamma_{r+1}$ is not empty.
		Inserting Equations~\ref{eq:lm}, ~\ref{eq:dm},~\ref{eq:vri1}, 
		and~\ref{eq:wbr}, 
		we have the following three equations for the computational intensity, 
		depending on the non-empty projection:
		
		\begin{gather}
		\nonumber
		\alpha_r \cap \alpha_{r+1} \ne \emptyset: \\
		\rho_r = \frac{\sqrt{|\alpha_r| |\beta_r| |\gamma_r|}}{|\alpha_r| + 
			|\beta_r| + 
			|\gamma_r| - |\alpha_r \cap \alpha_{r+1}| + |\gamma_r|}\\
		\nonumber
		\beta_r \cap \beta_{r+1} \ne \emptyset: \\
		\rho_r = \frac{\sqrt{|\alpha_r| |\beta_r| |\gamma_r|}}{|\alpha_r| + 
			|\beta_r| + 
			|\gamma_r| - |\beta_r \cap \beta_{r+1}| + |\gamma_r|}\\
		\nonumber
		\gamma_r \cap \gamma_{r+1} \ne \emptyset: \\
		\label{eq:ne_gamma}
		\rho_r = \frac{\sqrt{|\alpha_r| |\beta_r| |\gamma_r|}}{|\alpha_r| + 
			|\beta_r| + 
			|\gamma_r| - |\gamma_r \cap \gamma_{r+1}| + |\gamma_r \setminus 
			\gamma_{r+1}|}
		\end{gather}
		
		$\rho_r$ is maximized when $\gamma_r = \gamma_{r+1}, \gamma_r \cap 
		\gamma_{r+1}  \ne \emptyset, \gamma_r \setminus \gamma_{r+1} 
		= \emptyset$ (Equation~\ref{eq:ne_gamma}).
		
		Then, inserting Equations~\ref{eq:rect_projs}, we have:
		\begin{gather}
		\nonumber
		\text{maximize } \rho_r = \frac{abc}{ac + cb} 
		\\
		\nonumber
		\text{subject to:} \\
		\nonumber
		ab + ac + cb \le X \\
		\nonumber
		ab \le S \\
		\nonumber		
		a, b, c \in \mathbb{N}_{+},
		\end{gather}
		
		where $X$ is a free variable.
		Simple optimization technique using Lagrange multipliers yields the 
		result:
		
		\begin{gather}
		a = b =  \lfloor\sqrt{S} \rfloor, c = 1, \\
		\nonumber
		|\alpha_r| = |\beta_r|=  \lfloor \sqrt{S}  \rfloor, 
		|\gamma_r| =  \lfloor \sqrt{S}  \rfloor^2, \\
		\nonumber
		|V_r| =  \lfloor \sqrt{S}  \rfloor^2, X =  \lfloor 
		\sqrt{S}  \rfloor^2 + 2 
		\lfloor \sqrt{S}  \rfloor \\
				\label{eq:seqSolution}
		\rho_r = \frac{\lfloor \sqrt{S} \rfloor}{2}
		\end{gather}
		
		From now on, to keep the calculations simpler, we use assume that 
		$\sqrt{S} \in \mathbb{N}+$. 
		
		\subsubsection{MMM I/O complexity of greedy schedules}
		\label{sec:greedy_result}
		By the computational intensity corollary (cf. page 4 in the main 
		paper): 
		$$Q \ge
		\frac{|V|}{\rho} = \frac{2mnk}{\sqrt{S}}$$ 
		This is the I/O cost of putting a red pebble at least once on every 
		vertex 
		in 
		$\mathcal{C}$. Note however, that we did not put any
		blue pebbles on the outputs yet (all vertices in $\mathcal{C}$ had only 
		red
		pebbles placed on them during the execution). By
		the definition of the red-blue pebble game,
		we need to place blue pebbles on $mn$ 
		output
		vertices, corresponding to the output matrix $C$, resulting in 
		additional 
		$mn$
		I/O operations, yielding final bound
		$$Q \ge \frac{2mnk}{\sqrt{S}} + mn$$ 
	\end{proof}

	\subsubsection{Attainability of the Lower Bound}
	\label{sec:seqScheduling}
	
	Restricting the analysis to greedy schedules provides  
	explicit information of a 
	state 
	of memory (sets $V_r$, $V_{R,r}$, $W_{B,r}$), and therefore, to a 
	corresponding 
	CDAG pebbling. 	
	In Section~\ref{sec:max_comp_intensity},
	it is proven that an optimal greedy 
	schedule is composed of $\frac{mnk}{R(S)}$ outer product calculations, 
	while 
	loading $\sqrt{R(S)}$ elements of each of matrices $A$ and $B$. While the 
	lower 
	bound is achieved for $R(S)=S$, such a schedule is infeasible, as at least 
	some 
	additional red pebbles, except the ones placed on the reuse set $V_{R,r}$, have to be placed on $2\sqrt{R(S)}$ vertices of $A$ and 
	$B$.
	
	A direct way to obtain a feasible greedy schedule is to set $X=S$, 
	ensuring that the dominator set can fit into the 
	memory. 
	Then each subcomputation is an outer-product of column-vector of matrix $A$ 
	and row-vector of $B$, both holding $\sqrt{S+1}-1$ values.
	Such a schedule performs $\frac{2mnk}{\sqrt{S+1}-1} +mn$ I/O operations, a 
	factor of $\frac{\sqrt{S}}{\sqrt{S+1}-1}$ more than a lower bound, which 
	quickly approach 1 for large $S$. Listing~\ref{lst:pseudocode} provides a 
	pseudocode of this algorithm, which is a well-known rank-1 update 
	formulation of MMM.
	However, we can do better.
	
	Let's consider a generalized case of such subcomputation $V_r$. Assume, 
	that in 
	each step:
	\begin{enumerate}
		\item 	 $a$ elements of $A$ (forming $\alpha_r$) are loaded,
		\item  $b$ elements of $B$ (forming $\beta_r$) are loaded,
		\item $ab$ partial results of $C$ are kept in the fast memory (forming 
		$\Gamma_r$)
		\item $ab$ values of $C$ are updated  (forming $V_r$),
		\item no store operations are performed.
	\end{enumerate}
	 Each vertex in $\alpha_r$ has $b$ 
	children in $V_r$ (each of which has also a parent in $\beta_r$). 
	Similarly, each vertex in $\beta_r$ has $a$ children in $V_r$, each of 
	which has also a parent in $\alpha_r$.
	We first note, that $ab < S$ (otherwise, we cannot do any computation while 
	keeping all $ab$ partial results in fast memory). Any red vertex placed on 
	$\alpha_r$ should not be removed from it until all $b$ children are 
	pebbled, requiring red-pebbling of corresponding $b$ vertices from 
	$\beta_r$. But, in turn, any red pebble placed on a vertex in $\beta_r$ 
	should not be removed until all $a$ children are red pebbled.
	
	Therefore, either all $a$ vertices in $\alpha_r$, or all $b$ vertices in 
	$\beta_r$ have to be hold red pebbles at the same time, while at least one 
	additional red pebble is needed on $\beta_r$ (or $\alpha_r$). W.l.o.g., 
	assume we keep red pebbles on all vertices of $\alpha_r$.	
	We then have:
			\begin{gather}
	\nonumber
	\text{maximize } \rho_r = \frac{ab}{a + b} 
	\\
	\nonumber
	\text{subject to:} \\
	\nonumber
	ab + a + 1 \le S \\
	a, b \in \mathbb{N}_{+},
	\end{gather}
	
	The solution to this problem is 
	\begin{gather}
	\label{eq:aopt}
	 a_{opt} = \left\lfloor \frac{\sqrt{{\left(S-1\right)}^3}-S+1}{S-2} \right 
	\rfloor < \sqrt{S} \\
	\label{eq:bopt}
 	b_{opt} = \left\lfloor 
 	-\frac{2\,S+\sqrt{{\left(S-1\right)}^3}-S^2-1}{\sqrt{{\left(S-1\right)}^3}-S+1}
 	 \right \rfloor < \sqrt{S}
	\end{gather}
	

	\begin{lstlisting}[float=h, caption={Pseudocode of near optimal 
	sequential MMM}, 
	label=lst:pseudocode]
for $i_1 = 1:\left\lceil \frac{m}{a_{opt}} \right\rceil$
  for $j_1 = 1:\left\lceil \frac{n}{b_{opt}} \right\rceil$
    for $r = 1:k$    
      for $i_2 = i_1 \cdot T: \min((i_1 + 1) \cdot a_{opt}, m)$
        for $j_2 = j_1 \cdot T: \min((j_1 + 1) \cdot b_{opt}, n)$
          $C(i_2,j_2) = C(i_2,j_2) + A(i_2,r) \cdot 
	B(r,j_2)$
	\end{lstlisting}
	
	\subsection{Greedy vs Non-greedy Schedules}
	
	In
	~\cref{sec:greedy_result},
	 it is shown that the I/O lower 
	bound for any greedy schedule is $Q \ge \frac{2mnk}{\sqrt{S}} + mn$. 
	Furthermore, Listing~\ref{lst:pseudocode} provide a schedule that 
	attains this lower bound (up to a $a_{opt}b_{opt}/S$ factor).
	To prove that this bound applies to any schedule, we need to show, 
	that any non-greedy cannot perform better (perform less I/O 
	operations) than the greedy schedule lower bound.
	
	\begin{lma}
		\label{lma:non_greedy}
		Any non-greedy schedule computing classical matrix 
		multiplication performs at least $\frac{2mnk}{\sqrt{S}} + mn$ I/O 
		operations.
	\end{lma}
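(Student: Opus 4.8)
It suffices to show that any non-greedy schedule performs at least as much I/O as the greedy lower bound of Lemma~\ref{lma:greedy}. The plan is an exchange argument: rewrite an arbitrary (optimal) schedule into a greedy one that performs no more I/O, and conclude by Lemma~\ref{lma:greedy}. First I would collect the normalizations that already hold for optimal schedules and do not involve greediness. A red pebble that is placed and then removed without any child of its vertex ever being computed while it is present (and without the vertex being stored) is useless and can be deleted, which only decreases $Q$; by Equation~\ref{eq:immediate_compute} every vertex is pebbled as soon as all its predecessors are simultaneously available; and by Equation~\ref{eq:red_proj} no two $\mathcal C$-vertices sharing a $(t_1,t_2)$ projection carry red pebbles at the same time. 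After these cleanups, the \emph{only} way the schedule can fail to be greedy is that it carries a red pebble on some input vertex $v \in \mathcal A \cup \mathcal B$, or on some $\mathcal C$-partial-sum, across one or more subcomputations in which $v$ is idle ($v \notin V_r$ and no child of $v$ is computed in $V_r$) before $v$ is finally reused.

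Next I would show that such carrying is never profitable. Partition the execution, as in Lemma~\ref{lma:reuseCalculation}, into subcomputations $V_1,\dots,V_h\subseteq\mathcal C$, each performing the same number $Y$ of I/O operations, with $Y$ chosen as in the greedy optimum. For each $V_r$, split the red pebbles present just before $V_r$ into the reuse set $V_{R,r}$ and the idle carried pebbles, of which say $\iota_r$ are present; then at most $S-\iota_r$ red pebbles are usable by $V_r$, and re-running the Loomis--Whitney/Lagrange-multiplier optimization of Section~\ref{sec:max_comp_intensity} under the feasibility constraint that $V_r$ actually be \emph{executable} within $S-\iota_r$ red pebbles (not merely that its dominator set fit) yields $\rho_r\le\sqrt{S-\iota_r}/2\le\sqrt S/2$. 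This feasibility constraint is the point that rules out ``fat'' subcomputations whose dominator fits in memory but which cannot be computed without extra internal I/O. Since $\rho_r\le\sqrt S/2$ for every $r$, Lemma~\ref{lma:comp_intesity} gives $Q\ge|V|/\rho=2mnk/\sqrt S$, and the $mn$ additive term is unaffected: each of the $mn$ sinks of $\mathcal C$ starts with no blue pebble, must end with one, and is never loaded, so it forces a store disjoint from the loads counted above.

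The main obstacle is making the first paragraph's claim precise: the I/O \emph{saved} by carrying a red pebble (the reloads avoided) is always at most the I/O \emph{forced} by the resulting loss of usable memory in the subcomputations where that pebble sits idle. A pebble reused with $g$ intervening idle subcomputations saves at most $g$ reloads but degrades $\rho_r$ in $g$ subcomputations; making this bookkeeping tight, rather than merely recovering a constant worse than $2$ as in the classical phase/volume argument, requires exploiting that the intensity-maximizing $V_r$ is a rank-$1$ update that holds $S$ reused $\mathcal C$-partial-sums, so a carried input pebble cannot lie ``inside'' that extremal structure and therefore strictly lowers the attainable $\rho_r$ for every idle step it causes. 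I expect this amortization, together with checking that the feasibility-constrained optimization still peaks at $a=b=\sqrt S$, $c=1$, to be the technical heart of the proof.
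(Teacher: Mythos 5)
Your plan is a genuinely different route from the paper's, but as written it has a real gap, and the gap sits exactly where you place the ``technical heart.'' The paper never converts a non-greedy schedule into a greedy one. Its proof of Lemma~\ref{lma:non_greedy} is a direct, global argument: Lemma~\ref{lma:reuse} and Lemma~\ref{lma:comp_intesity} apply to \emph{any} schedule once one gives up modelling the reuse and store sets and only uses the universal bounds $R(S)\le S$ and $T(S)\ge 0$; the dominator decomposition (Equation~\ref{eq:dom}) and the Loomis--Whitney bound (Inequality~\ref{eq:lm}) hold for every subcomputation of every $X$-partition; hence $\rho_r\le \sqrt{|\alpha_r||\beta_r||\gamma_r|}/(|\alpha_r|+|\beta_r|+|\gamma_r|-S)$, which is maximized at $|\alpha_r|=|\beta_r|=|\gamma_r|=X/3$, and then minimizing the resulting expression $(X/3)^{3/2}/(X-S)$ over the free parameter $X$ (optimum at $X=3S$) gives $\rho\le\sqrt S/2$ for \emph{all} schedules, so $Q\ge 2mnk/\sqrt S$ (plus $mn$ for the output stores). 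No exchange argument, no executability analysis, no per-pebble bookkeeping is needed.

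The concrete problems with your exchange/amortization plan are two. First, the step you defer is not a routine technicality: by your own accounting, carrying one idle red pebble across a phase of size $\Theta(S)$ reduces the usable memory from $S$ to $S-\iota_r$, which degrades the attainable intensity from about $\sqrt S/2$ to about $\sqrt{S-\iota_r}/2$ and therefore costs only on the order of $|V_r|/S^{3/2}=O(1/\sqrt S)$ extra I/O per idle phase, while the carried pebble can save a full reload (one I/O) when it is finally reused. So the ledger as you set it up suggests carrying \emph{can} be locally profitable, and the conclusion ``carrying is never profitable'' does not follow; closing this would require a genuinely different (global) argument, which is presumably why the paper sidesteps it. Second, the inequality $\rho_r\le\sqrt{S-\iota_r}/2$ that you obtain from a ``feasibility/executability'' constraint is asserted, not proved: the $X$-partition machinery of Lemmas~\ref{lma:reuseCalculation}--\ref{lma:comp_intesity} only constrains dominator and minimum set sizes, and it is precisely the greedy restriction (Equation~\ref{eq:greedy}) that lets the paper reason about what is actually resident in fast memory; for an arbitrary schedule you have no such handle, so this step needs its own proof. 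If you want a correct proof along the paper's intended lines, drop the reduction to greedy schedules and instead bound the intensity of an arbitrary $X$-partition as above, optimizing over $X$; your final remark about the $mn$ additive term (each output must receive a blue pebble and is never loaded) is fine and matches the paper.
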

	
	\begin{proof}
		Lemma~\ref{lma:reuse} applies to any schedule and for any value of $X$. 
		Clearly, for any general schedule we cannot directly model $V_{R,i}$, 
		$V_{B,i}$, $W_{R,i}$, and $W_{B,i}$, and therefore $T(S)$ and $R(S)$. 
		However, it is always true that $0 \le T(S)$ and $R(S) \le S$. Also, 
		the dominator set formed in Equation~\ref{eq:dom} applies for any 
		subcomputation, as well as a bound on $|V_r|$ 
		from 
		Inequality~\ref{eq:lm}. 		
		We can then rewrite the computational intensity maximization problem:
		
				\begin{gather}
		\nonumber
		\text{maximize } \rho_r = 
		\frac{|V_{r}|}{X - R(S) + T(S)} \le 
		\frac{\sqrt{|\alpha_r||\beta_r||\gamma_r|}}{|\alpha_r| + |\beta_r| 
		+|\gamma_r|- S} 
		\\
		\nonumber
		\text{subject to:} \\
		\nonumber
		S < |\alpha_r| + |\beta_r| + |\gamma_r| = X\\
		\end{gather}
		This is maximized for $|\alpha_r| = |\beta_r| = |\gamma_r| = X/3$, 
		yielding
		
		$$\rho_r = \frac{(X/3)^{3/2}}{X - S}$$
		
		Because $mnk/\rho_r$ is a valid lower bound for any $X > S$ 	
		(Lemma~\ref{lma:comp_intesity}), we want to find such 
		value $X_{opt}$ for which 
		$\rho_r$ is minimal, yielding the highest (tightest) lower bound on $Q$:
		
		\begin{gather}
		\nonumber
			\text{minimize } \rho_r = \frac{(X/3)^{3/2}}{X - S}
		\\
		\nonumber
		\text{subject to:} \\
		\nonumber
		X \ge S \\
		\end{gather}	
		which, in turn, is minimized for $X = 3S$. This again shows, that the 
		upper bound on maximum computational intensity for any schedule is 
		$\sqrt{S}/2$, which matches the bound for greedy schedules 
		(Equation~\ref{eq:seqSolution}).
	\end{proof}

We note that Smith and van de Gein~\cite{tightMMM} in their paper also bounded 
the number of computations (interpreted geometrically as a subset in a 3D 
space) by its surface and obtained an analogous result for this surface (here, 
a dominator and minimum set sizes). However, using computational intensity 
lemma, our bound is tighter by $2S$ ($+mn$, counting storing the final 
result).
	
	\noindent
	\textbf{Proof of Theorem~\ref{thm:seqlowbounds}:}
	
	\noindent
	Lemma~\ref{lma:greedy} establishes that the I/O lower bound 
	for any greedy schedule is $Q = 2mnk/\sqrt{S} + mn$. 
	Lemma~\ref{lma:non_greedy} establishes that no other schedule 
	can perform less I/O operations.
	
	\qed

\macb{Corollary}: 
The greedy schedule associated with an $X=S$\nobreakdash -partition performs at 
most 
$\frac{\sqrt{S}}{\sqrt{S+1}-1}$ more I/O operations than a lower bound. The 
optimal greedy schedule is associated with an $X=a_{opt}b_{opt} + a_{opt} 
	+ b_{opt}$-partition and performs 
	$\frac{\sqrt{S}(a_{opt} 
		+ b_{opt})}{a_{opt}b_{opt}}$ I/O operations.
	 
  \section{Optimal Parallel MMM}
	\label{sec:parOptimality}
	
	We now derive the schedule of COSMA from the results
	from~\cref{sec:seqScheduling}. The key notion is the data reuse, that
	determines not only the sequential execution, as discussed
	in
	~\cref{sec:seq-proof}
	, but also the parallel 
	scheduling. Specifically, if the data reuse set spans across multiple local 
	domains, then this set 
	has 
	to be communicated between these domains, increasing the I/O cost 
	(Figure~\ref{fig:topdown-vs-bottomup}). 
	We first introduce a formalism required to 
	parallelize the sequential schedule
	~(\cref{sec:seqpar}). 
	In~\cref{sec:parStrategies}, we generalize parallelization strategies used 
	by 
	the
	2D, 2.5D, and recursive decompositions, deriving their communication cost 
	and 
	showing that none of them is optimal in the whole range of parameters. We 
	finally derive the optimal decomposition (\emph{FindOptimalDomain} function 
	in 
	Algorithm~\ref{alg:COSMA}) by expressing it as an optimization 
	problem~(\cref{sec:parScheduling}), and analyzing its I/O and latency cost. 
	The 
	remaining steps in Algorithm~\ref{alg:COSMA}: \emph{FitRanks}, 
	\emph{GetDataDecomp}, as well as  \emph{DistrData} and \emph{Reduce} are 
	discussed 
	in~\cref{sec:decompArbitrary}, \cref{sec:datalayout}, and 
	\cref{sec:commPattern}, 
	respectively.	
	For a distributed machine, we assume that all matrices fit into 
		collective memories of all processors: $pS \ge mn + mk + nk$.
	For a shared memory setting, we assume that all inputs start in a common 
	slow memory.
	
	\subsection{Sequential and Parallel Schedules}
	\label{sec:seqpar}
	
	We now describe how a parallel schedule is formed from a sequential 
	one. The sequential
	schedule~$\mathcal{S}$ partitions the CDAG $G = (V,E)$ into $H(S)$
	subcomputations $V_i$. The parallel schedule $\mathcal{P}$ divides
	$\mathcal{S}$ among $p$ processors: $\mathcal{P} = \{\mathcal{D}_1, \dots
	\mathcal{D}_p\}, \bigcup_{j=1}^p \mathcal{D}_j = \mathcal{S}$. The set
	$\mathcal{D}_j$ of all $V_k$ assigned to processor $j$ forms a 
	\emph{local
		domain} of $j$ (Fig.~\ref{fig:mmmParallelization}c).
	
	If two local domains $\mathcal{D}_k$ and  $\mathcal{D}_l$ are dependent, 
	that is,
	\linebreak $\exists u, \exists v:  u \in \mathcal{D}_k \land v \in 
	\mathcal{D}_l \land
	(u,v) \in 
	E$,
	then $u$ has to be \emph{communicated} from processor $k$ to $l$. The total
	number of vertices communicated between all processors is the \emph{I/O 
	cost} 
	$Q$ of schedule $\mathcal{P}$.  
	We say that the parallel schedule
	$\mathcal{P}_{opt}$ is
	\emph{communication--optimal} if $Q(\mathcal{P}_{opt})$ is minimal among all
	possible parallel schedules. 
	
	The vertices of MMM CDAG may be arranged in an $[m \times n \times k]$ 3D 
	grid 
	called an \emph{iteration space}~\cite{tiling}.
	The orthonormal vectors \textbf{i}, \textbf{j}, \textbf{k} 
	correspond to the loops in Lines 1-3 in Listing~\ref{lst:pseudocode} 
	(Figure~3a).
	We call a schedule $\mathcal{P}$
	\emph{parallelized in dimension} \textbf{d} if we ``cut'' the CDAG 
	along dimension \textbf{d}. More formally, each local domain 
	$\mathcal{D}_j, j 
	=\nobreak 1 \dots p$ is a grid of size either $[m/p,
	n, k]$,  $[m,
	n/p, k]$, or  $[m,
	n, k/p]$. 
	The schedule may also be parallelized in two dimensions
	($\mathbf{d}_1\mathbf{d}_2$) or three dimensions
	($\mathbf{d}_1\mathbf{d}_2\mathbf{d}_3$) with a local domain size $[m / p_m,
	n / p_n, k / p_k]$  for some $p_m, p_n, p_k$, such that
	$p_m p_n p_k = \nobreak p$. We call $\mathcal{G} = [p_m, p_n, p_k]$ the
	\emph{processor grid} of a schedule. E.g., Cannon's algorithm is 
	parallelized 
	in 
	dimensions \textbf{ij} , with the processor grid $[\sqrt{p}, \sqrt{p}, 1]$. 
	COSMA, 
	on the other hand, may use any of the possible parallelizations, depending 
	on 
	the problem parameters.
	
	\subsection{Parallelization Strategies for MMM}
	\label{sec:parStrategies}
	
	The 
	sequential schedule $\mathcal{S}$~(\cref{sec:seqOpt})
	consists of $mnk/S$ elementary outer product calculations, arranged in
	$\sqrt{S} \times \sqrt{S} \times k$ ``blocks'' 
	(Figure~\ref{fig:mmmParallelization}). The number $p_1 = mn/S$ of 
	dependency-free subcomputations 
	$V_i$
	(i.e., having no parents except for input vertices) in 
	$\mathcal{S}$ 
	determines the maximum 
	degree 
	of parallelism of $\mathcal{P}_{opt}$ for which no reuse set $V_{R,i}$ 
	crosses 
	two local 
	domains 
	$\mathcal{D}_j$, $\mathcal{D}_k$. The optimal schedule 
	is parallelized in dimensions \textbf{ij}. There is no communication 
	between 
	the 
	domains (except for inputs and outputs), and all I/O operations are 
	performed inside each $\mathcal{D}_j$ following the
	sequential 
	schedule. Each processor is  assigned to $p_1/p$ local domains
	$\mathcal{D}_j$ of size 
	$\big[\sqrt{S}, \sqrt{S}, k\big]$, each of which requires 
	$2\sqrt{S}k + S$ I/O 
	operations (Theorem~\ref{thm:seqlowbounds}), giving a total of $Q = 
	2mnk/(p\sqrt{S}) + mn/p$ I/O operations per processor.
	
	\begin{figure}
		\includegraphics[width=\columnwidth]{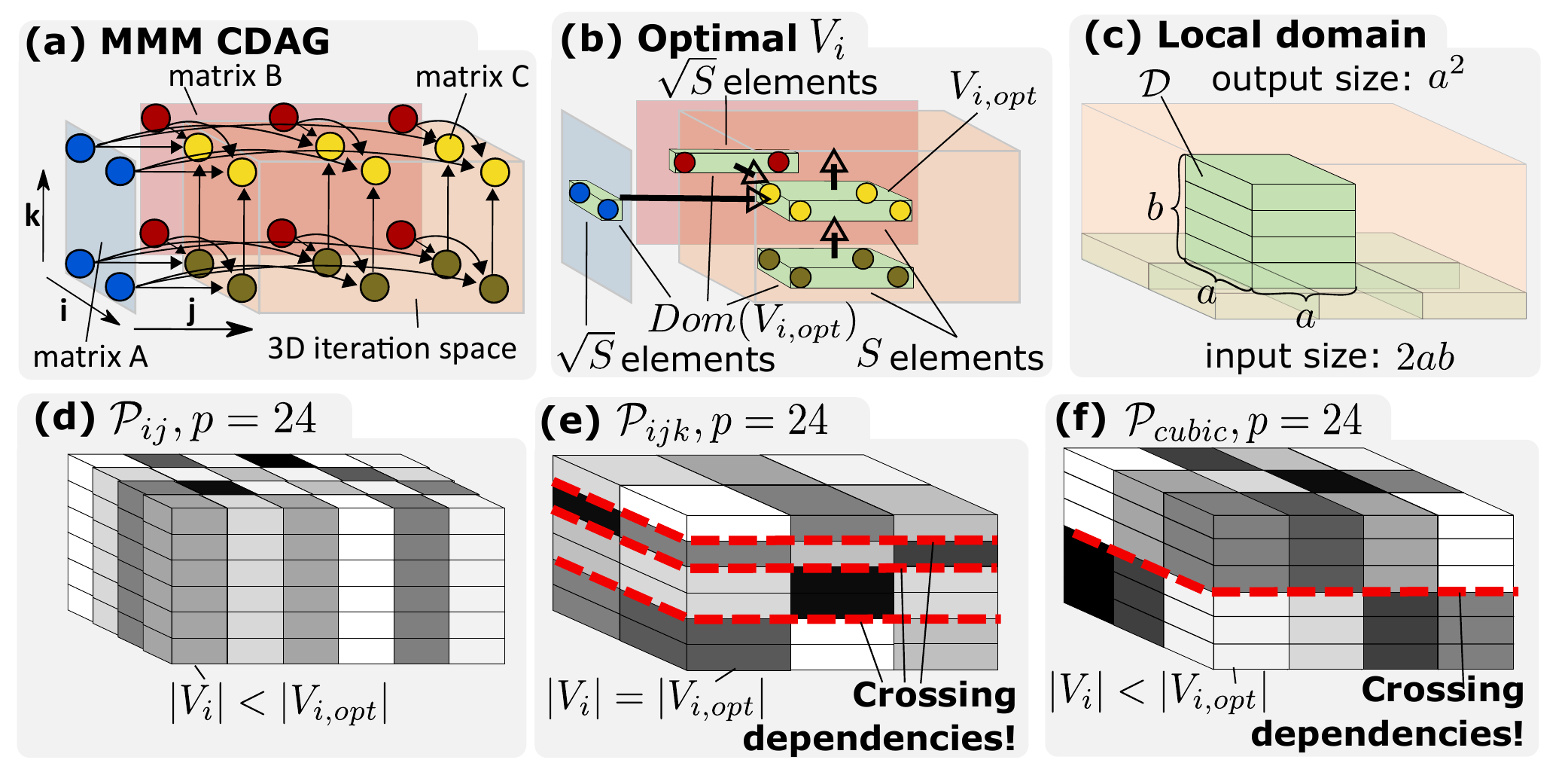}
		\caption{
			\textmd{
			(a) An MMM CDAG as a 3D grid (iteration 
			space).
			Each 
			vertex in it
			(except for the vertices in the bottom layer) has 
			three parents - blue (matrix $A$), 
			red (matrix $B$), 
			and yellow (partial result of matrix $C$) and one yellow child 
			(except for vertices 
			in the top layer). 
			(b) A union of inputs of all vertices in 
			$V_i$ form the 
			dominator 
			set $Dom(V_i)$ (two blue, two red and four dark yellow).  
			Using approximation $\sqrt{S+1}-1 \approx \sqrt{S}$, we have
			$|Dom(V_{i,opt})| = S$.
			(c) A local domain $\mathcal{D}$ consists of 
			$b$ 
			subcomputations $V_i$, each of a dominator size $|Dom(V_{i})| = 
			a^2 + 2a$.
			(d-f)
			Different parallelization schemes of near optimal sequential MMM 
			for 
			$p 
			= 24 > p_1 = 6$.} 
		}
		\label{fig:mmmParallelization}
	\end{figure}
	
	When $p > p_1$, the size of local domains $|\mathcal{D}_j|$ is smaller than 
	$\sqrt{S} 
	\times \sqrt{S} \times k $. Then, the schedule has to either be 
	parallelized in dimension~\textbf{k}, 
	or has to reduce the 
	size of the domain in \textbf{ij} plane.
	The former option
	creates
	dependencies between the local domains, which results in additional 
	communication (Figure~\ref{fig:mmmParallelization}e). 
	The latter does not utilize the whole available
	memory, making the sequential schedule not I/O optimal and decreasing the
	computational intensity $\rho$ (Figure~\ref{fig:mmmParallelization}d). We
	now analyze three possible parallelization
	strategies (Figure~\ref{fig:mmmParallelization}) which generalize 2D, 2.5D, 
	and 
	recursive decomposition strategies; see
	Table~\ref{tab:summary} for details.
	
	\macb{Schedule $\mathcal{P}_{ij}$} 
	The schedule is parallelized in dimensions \textbf{i} and \textbf{j}. The
	processor grid is $\mathcal{G}_{ij} = \big[\frac{m}{a} , \frac{n}{a}, 
	1\big]$, where $a = 
	\sqrt{\frac{mn}{p}}$.  
	Because 
	all
	dependencies are parallel to dimension \textbf{k}, there are no dependencies
	between $\mathcal{D}_j$ except for the inputs and the outputs.  Because $a <
	\sqrt{S}$, the corresponding sequential schedule 
	has a reduced
	computational intensity $\rho_{ij} < \sqrt{S}/2$.
	
	\macb{Schedule $\mathcal{P}_{ijk}$}
	The schedule is parallelized in all dimensions. The processor grid is
	$\mathcal{G}_{ijk} = \big[\frac{m}{\sqrt{S}}, \frac{n}{\sqrt{S}}, 
	\frac{pS}{mn}\big]$.  The
	computational intensity $\rho_{ijk} = \sqrt{S}/2$ is optimal. The 
	parallelization in
	\textbf{k} dimension creates dependencies between local domains, requiring
	communication and increasing the I/O cost.
	
	\macb{Schedule $\mathcal{P}_{cubic}$} 
	The schedule is parallelized in all dimensions. The grid is 
	$\big[\frac{m}{a_c} 
	,
	\frac{n}{a_c}, \frac{k}{a_c}\big]$, where $a_c = 
	\min\Big\{\big(\frac{mnk}{p}\big)^{1/3},$
	 $\sqrt{\frac{S}{3}}\Big\}$. Because $a_c < \sqrt{S}$, the corresponding
	computational intensity $\rho_{cubic}$ $< \sqrt{S}/2$ is not optimal.  
	The 
	parallelization in
	\textbf{k} dimension creates dependencies between local domains, 
	increasing 
	communication. 
	
	\macb{Schedules of the State-of-the-Art Decompositions} 
	If $m = n$, the $\mathcal{P}_{ij}$ scheme is reduced to the classical 2D
	decomposition (e.g., Cannon's algorithm~\cite{Cannon}), and 
	$\mathcal{P}_{ijk}$ 
	is reduced to the 2.5D decomposition~\cite{25d}.
	CARMA~\cite{CARMA} asymptotically reaches the $\mathcal{P}_{cubic}$ 
	scheme,
	guaranteeing that the longest dimension of a local cuboidal domain is at 
	most
	two times larger than the smallest one. We present a detailed complexity 
	analysis comparison for all algorithms in Table~\ref{tab:summary}.
	
	\begin{table*}
\sf
\footnotesize
		\begin{tabular}{lllll}
			\toprule
			\textbf{Decomposition} 
			& \textbf{2D~\cite{summa}} & 
			\textbf{2.5D~\cite{25d}} & 
			\textbf{recursive~\cite{CARMA}} & \textbf{COSMA (this paper)} \\
			\textbf{Parallel schedule $\mathcal{P}$} &
			$\mathcal{P}_{ij}$ for $m=n$ &
			$\mathcal{P}_{ijk}$ for $m=n$ &
			$\mathcal{P}_{cubic}$ &
			$\mathcal{P}_{opt}$ \\
			\midrule
			\makecell[l]{\textbf{grid} 
				$\left[p_m \times p_n \times p_k\right]$}
			&
			$\left[\sqrt{p} \times \sqrt{p} \times 1\right]$
			&
			\makecell[l]{$\left[\sqrt{p/c} \times \sqrt{p/c} \times c\right]$; 
				$c = \frac{pS}{mk + nk}$}
			& 
			\makecell[l]{$\left[{2^{a_1}} \times {2^{a_2}} \times 
				{2^{a_3}}\right]$; 
				$a_1 + a_2 + a_3 = \log_2(p)$}
			& 
			\makecell[l]{$\left[\frac{m}{a} \times \frac{n}{a} \times 
				\frac{k}{b}\right]$; 
				$a,b: $ Equation~\ref{eq:optTileShape}}
			\\
			\textbf{domain size}
			&
			$\left[\frac{m}{\sqrt{p}} \times \frac{n}{\sqrt{p}} \times 
			k\right]$ 
			&
			$\left[\frac{m}{\sqrt{p/c}} \times \frac{n}{\sqrt{p/c}} \times 
			\frac{k}{c}\right]$
			&
			$\left[\frac{m}{2^{a_1}} \times \frac{n}{2^{a_1}} \times 
			\frac{k}{2^{a_1}}\right]$
			& 
			$\left[a \times a \times b\right]$
			\\
			\midrule
			\multicolumn{5}{c}{\textbf{``General case'':}} \\
			\textbf{I/O cost} $Q$
			&
			$\frac{k}{\sqrt{p}} \left(m + n\right) + \frac{mn}{p}$
			&
			\makecell[l]{$\frac{\left(k(m+n)\right)^{3/2}}{p\sqrt{S}} + 
				\frac{mnS}{k(m+n)} $
			} 
			&
			\makecell[l]{$2\min \Big\{\sqrt{3} \frac{mnk}{p\sqrt{S}},
				\left(\frac{mnk}{p}\right)^{2/3} \Big\} $ 
				$ + \left(\frac{mnk}{p}\right)^{2/3}$}
			& 
			$\min \Big\{\frac{2mnk}{p\sqrt{S}} + S, 3 
			\left(\frac{mnk}{p}\right)^{2/3} \Big\}$
			\\
			\textbf{latency cost} $L$
			&
			$2k \log_2{(\sqrt{p})} $
			&
			$ \frac{\left(k(m+n)\right)^{5/2}}{pS^{3/2}(km + kn - mn)} + 
			3\log_2\left(\frac{pS}{mk + nk}\right)$
			&
			$ \left(3^{3/2}mnk\right)/\left(pS^{3/2}\right)+ 3\log_2(p)$
			& 
			$\frac{2ab}{S-a^2}\log_2\left(\frac{mn}{a^2}\right)$
			\\
			\midrule
			\multicolumn{5}{c}{\textbf{Square matrices, ``limited 
			memory'':} $m 
				= n = 
				k, S = 2{n^2}/{p}, p=2^{3n}$} \\
			\textbf{I/O cost} $Q$
			&
			$2n^2(\sqrt{p}+1)/p$
			&
			$2n^2(\sqrt{p}+1)/p$
			&
			$2n^2 \left(\sqrt{{3}/{2p}} + {1}/{2p^{2/3}} \right)$
			& 
			$2n^2(\sqrt{p}+1)/p$
			\\
			\textbf{latency cost} $L$
			&
			$2k \log_2{(\sqrt{p})}$
			&
			$\sqrt{p}$
			&
			$\left(\frac{3}{2}\right)^{3/2}\sqrt{p} \log_2(p)$
			& 
			$\sqrt{p}\log_2(p)$
			\\
			\midrule
			\multicolumn{5}{c}{\textbf{``Tall'' matrices, ``extra'' memory 
					available:} $m = 
				n = \sqrt{p}, k = {p^{3/2}}/{4}, S = 2{nk}/{p^{2/3}}, p=2^{3n + 
					1}$} \\
			\textbf{I/O cost}
			&
			${p^{3/2}}/{2}$
			&
			${p^{4/3}}/{2} + p^{1/3}$
			&
			${3p}/{4}$
			& 
			$p\left({3-2^{1/3}}\right)/{2^{4/3}} \approx 0.69 p$
			\\
			\textbf{latency cost} $L$
			&
			$p^{3/2} \log_2{(\sqrt{p})}/4$
			&
			$1$
			&
			$1$
			& 
			$1$
			\\
			\bottomrule
		\end{tabular}
		\caption{
			\textmd{The comparison of complexities of 2D, 2.5D, recursive, and 
		COSMA 
			algorithms. The 3D decomposition is a special case of 2.5D, and can 
			be 
			obtained 
			by
			instantiating $c=p^{1/3}$ in the 2.5D case.
			In 
			addition to the general analysis, we show two special cases. If the 
			matrices 
			are 
			square and there is no extra memory available, 2D, 2.5D and COSMA 
			achieves tight 
			communication lower bound $2n^2/\sqrt{p}$, whereas CARMA performs 
			$\sqrt{3}$ 
			times more 
			communication. If one dimension is much larger than the others and 
			there is 
			extra memory available, 2D, 2.5D and CARMA decompositions perform 
			$\mathcal{O}(p^{1/2})$, $\mathcal{O}(p^{1/3})$, and 8\%  more 
			communication 
			than COSMA, respectively. For simplicity, we assume that
			parameters are chosen such that all divisions have integer results.
		}
	}
\vspace{-3em}
		\label{tab:summary}
	\end{table*}
	
	\subsection{I/O Optimal Parallel Schedule}
	\label{sec:parScheduling}
	
	Observe that none of those schedules is optimal in the whole range of
	parameters. As discussed in~\cref{sec:seqOpt}, in sequential scheduling, 
	intermediate results of $C$ are not stored
	to the memory: they are consumed (reused) immediately by the next
	sequential step. Only the final result of $C$ in the local domain is sent.
	Therefore, the optimal parallel schedule $\mathcal{P}_{opt}$ minimizes the
	communication, that is, sum of the inputs' sizes plus the output size, under
	the sequential I/O constraint on subcomputations $\forall_{V_i \in 
		\mathcal{D}_j \in
		\mathcal{P}_{opt}} |Dom(V_i)| \le S \land |Min(V_i)| \le S$.
	
	The local domain $\mathcal{D}_j$ is a grid of size $[a \times a \times b]$, 
	containing $b$ outer products of vectors of length $a$. 
	The optimization problem of finding $\mathcal{P}_{opt}$
	using the 
	computational intensity~(Lemma \ref{lma:comp_intesity})
	is formulated as follows:
	
	\begin{gather}
	\label{eq:tileEq}
	\text{maximize } \rho = \frac{a^2b}{ab + ab + a^2}\\
	\nonumber
	\text{subject to: } \\
	\nonumber
	a^2 \le S \text{ (the I/O constraint)} \\
	\nonumber
	a^2b = \frac{mnk}{p} \text{ (the load balance constraint)} \\
	\nonumber
	pS \ge mn + mk + nk \text{ (matrices must fit into memory)}
	\end{gather}
	The I/O constraint $a^2 \le 
	S$ is binding (changes to equality) for $p \le \frac{mnk}{S^{3/2}}$. 
	Therefore, 
	the solution 
	to this problem is:
	
	\vspace{-1em}
	\begin{gather}
	a = \min\Big\{\sqrt{S}, \Big(\frac{mnk}{p}\Big)^{1/3} \Big\}, \text{ }
	b = \max\Big\{\frac{mnk}{pS}, \Big(\frac{mnk}{p}\Big)^{1/3} \Big\}
	\label{eq:optTileShape}
	\end{gather}
	\noindent
	The I/O complexity of this schedule is:
	\begin{gather}
		Q \ge \frac{a^2b}{\rho} = \min\Big\{\frac{2mnk}{p\sqrt{S}} + S, 
	3\Big(\frac{mnk}{p}\Big)^\frac{2}{3}\Big\}
	\end{gather} 
	\noindent
	This can be intuitively interpreted geometrically as follows: if we 
	imagine the 
	optimal local domain "growing" with the decreasing number of processors, 
	then it stays 
	cubic as long as it is still "small enough" (its side is 
	smaller than 
	$\sqrt{S}$). 
	After that point, its face in the \textbf{ij} plane stays constant 
	$\sqrt{S} \times 
	\sqrt{S}$ and it "grows" 
	only in the \textbf{k} dimension. This schedule effectively switches 
	from $\mathcal{P}_{ijk}$ to $\mathcal{P}_{cubic}$ once there is enough 
	memory ($S \ge ({mnk}/{p})^{2/3}$).
	
\vspace{+1em}
	\begin{thm}
		The I/O complexity of a classic Matrix Multiplication algorithm 
		executed on $p$ processors, each of local memory size $S \ge \frac{mn + 
		mk + nk}{p}$ is 
		$$	Q \ge \min\Big\{\frac{2mnk}{p\sqrt{S}} + S, 
		3\Big(\frac{mnk}{p}\Big)^\frac{2}{3}\Big\}$$
		\label{thm:parSchedule}
	\end{thm}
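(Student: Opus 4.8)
The plan is to reduce the parallel lower bound to the sequential one established in Theorem~\ref{thm:seqlowbounds}, applied on a per-processor basis, and then to show that the load-balancing and memory constraints force the stated minimum. First I would fix an arbitrary parallel schedule $\mathcal{P} = \{\mathcal{D}_1, \dots, \mathcal{D}_p\}$ that computes all $mnk$ multiplications. By the pigeonhole principle, some processor $j$ is assigned a local domain $\mathcal{D}_j$ with $|\mathcal{D}_j| \ge mnk/p$ vertices of $\mathcal{C}$ (the multiplication vertices). The key observation is that the I/O cost of $\mathcal{P}$ is at least the communication incurred by this processor $j$, and that the subcomputation performed by $j$ is itself a (partial) pebbling of an MMM-like CDAG with memory budget $S$. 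Concretely, I would invoke the computational-intensity machinery of Lemma~\ref{lma:comp_intesity}: the bound $Q_j \ge |\mathcal{D}_j|/\rho$ holds for processor $j$'s share, where $\rho \le \sqrt{S}/2$ is the maximal computational intensity derived in~\cref{sec:seqOpt} (Equation~\ref{eq:seqSolution}), giving $Q_j \ge 2mnk/(p\sqrt{S})$.

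The second ingredient is the additive term. Here I would argue that the $a^2$ (i.e.\ $S$) contribution comes from the fact that the face of the local domain in the $\mathbf{ij}$ plane, of size $a^2 = \min\{S, (mnk/p)^{2/3}\}$, corresponds to partial results of $C$ that must either be stored to slow memory or communicated to other processors for reduction --- exactly analogous to the $+mn$ term in the sequential bound, but now restricted to one processor's output tile. This is made precise by the optimization problem~\eqref{eq:tileEq}: under the binding I/O constraint $a^2 \le S$, the load-balance constraint $a^2 b = mnk/p$, and the memory-fitting constraint $pS \ge mn + mk + nk$, the optimum is attained at~\eqref{eq:optTileShape}, and substituting back yields $Q \ge a^2 b / \rho = \min\{2mnk/(p\sqrt{S}) + S,\ 3(mnk/p)^{2/3}\}$. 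The two branches of the minimum correspond to whether the I/O constraint $a^2 \le S$ is active: when $p \le mnk/S^{3/2}$ the local domain is memory-bound and we get the first branch; otherwise the local domain is cube-shaped ($a = (mnk/p)^{1/3}$) and every face contributes, giving the second branch $3(mnk/p)^{2/3}$.

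I expect the main obstacle to be making rigorous the claim that a \emph{worst-case} processor's subcomputation really behaves like a sequential MMM instance to which Lemma~\ref{lma:comp_intesity} applies --- in particular, handling the fact that inputs $A_l, B_l$ may arrive via communication rather than from slow memory (so one must be careful that ``received from another processor'' counts toward $Q$ exactly as ``loaded from slow memory'' does, which is guaranteed by the machine model in~\cref{sec:machineModel}), and that partial sums of $C$ produced elsewhere may be delivered mid-computation. The cleanest way around this is to note that Lemma~\ref{lma:reuse} and Lemma~\ref{lma:comp_intesity} are stated for \emph{arbitrary} CDAGs and \emph{any} valid pebbling/schedule, with only $0 \le T(S)$ and $R(S) \le S$ assumed; hence the dominator-set bound~\eqref{eq:dom} and the Loomis--Whitney volume bound~\eqref{eq:lm} apply verbatim to the sub-CDAG computed by processor $j$, and the non-greedy analysis of Lemma~\ref{lma:non_greedy} (which shows $\rho \le \sqrt{S}/2$ with no greediness assumption) closes the gap. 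Combining $Q \ge Q_j \ge |\mathcal{D}_j|/\rho$ with the tile-shape optimization then gives the theorem. \qed
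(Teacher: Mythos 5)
Your proposal is correct and follows essentially the same route as the paper's own proof: bounding the per-processor I/O by $|\mathcal{D}_j|/\rho$ via the computational-intensity lemma (Lemma~\ref{lma:comp_intesity}) and then maximizing $\rho$ through the tile-shape optimization problem~\eqref{eq:tileEq}, whose solution~\eqref{eq:optTileShape} yields both branches of the minimum. Your pigeonhole argument for a processor with $|\mathcal{D}_j|\ge mnk/p$ and your explicit remark that received data counts toward $Q$ under the machine model of~\cref{sec:machineModel} are just slightly more careful renderings of the paper's terse appeal to the load-balance constraint.
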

	
	\begin{proof}
		The theorem is a direct consequence of Lemma~\ref{lma:reuse} and 
		the computational intensity (Lemma~\ref{lma:comp_intesity}). The load 
		balance 
		constraint enforces a size of each local domain $|\mathcal{D}_j| = 
		mnk/p$. The 
		I/O cost is then bounded by $|\mathcal{D}_j|/\rho$. Schedule 
		$\mathcal{P}_{opt}$ maximizes $\rho$ by the formulation of the 
		optimization 
		problem~(Equation~\ref{eq:tileEq}).
	\end{proof}
\vspace{-0.5em}
		\macb{I/O-Latency Trade-off}
	As showed in this section, the local domain $\mathcal{D}$ of 
	the 
	near optimal 
	schedule $\mathcal{P}$ is a grid of size $[a \times a \times b]$, where 
	$a, 
	b$ are given by Equation~\eqref{eq:optTileShape}.
	 The corresponding sequential 
	schedule $\mathcal{S}$ is a sequence of $b$ outer products of vectors of 
	length 
	$a$. Denote the size of the communicated inputs in each step by $I_{step} 
	= 
	2a$. 
	This corresponds to $b$ steps of communication (the latency cost is $L = 
	b$).
	
	The number of steps (latency) is equal to the total communication volume 
	of 
	$\mathcal{D}$ divided by 
	the volume per step $L = Q/I_{step}$. To reduce the latency, one 
	either has to decrease $Q$ or increase $I_{step}$, under the memory 
	constraint that $I_{step} + a^2 \le S$ (otherwise we cannot fit both the 
	inputs 
	and 
	the outputs in the memory). Express $I_{step} = a \cdot h$, where $h$ is 
	the 
	number of sequential subcomputations $V_i$ we merge in one communication. 
	We 
	can express the I/O-latency trade-off:
	\begin{gather}
	\nonumber
	\min (Q, L) \\
	\nonumber
	\text{subject to:}\\
	\nonumber
	Q = 2ab + a^2, 
	L = \frac{b}{h} \\
	\nonumber
	a^2 + 2ah \le S \text{ (I/O constraint)} \\
	\nonumber
	a^2b = \frac{mnk}{p} \text{ (load balance constraint)}
	\end{gather}
	Solving this problem, we have $Q = \frac{2mnk}{pa} + a^2$ and $L 
	= 
	\frac{2mnk}{pa(S-a^2)}$, 
	where $a \le \sqrt{S}$. Increasing $a$ we \emph{reduce} the I/O cost 
	$Q$ and \emph{increase} the latency cost $L$. For minimal value of 
	$Q$ (Theorem~\ref{thm:parSchedule}),  $L = \left \lceil{\frac{2ab}{S - 
	a^2}}\right\rceil$, where $a = \min\{\sqrt{S}, (mnk/p)^{1/3} \}$ and 
	$b = 
	\max\{\frac{mnk}{pS}, (mnk/p)^{1/3} \}$. Based on our 
	experiments, we observe that the I/O cost is vastly greater than the 
	latency cost, therefore our schedule by default minimizes $Q$ and uses 
	extra memory (if any) to reduce $L$.
	
	\section{Implementation}
	\label{sec:implementation}

	We now present implementation optimizations that further 
	increase the performance of COSMA on top of the speedup due to our near I/O 
	optimal 
	schedule. The algorithm is designed to facilitate the overlap of 
	communication and computation~\cref{sec:compOverlap}. For 
	this, to leverage the RDMA 
	mechanisms of 
	current high-speed network interfaces, we use the MPI one-sided 
	interface~\cref{sec:rdma}. In addition, our implementation also offers 
	alternative efficient two-sided communication back end that uses MPI 
	collectives. 
	We also use a blocked data 
	layout~\cref{sec:datalayout}, a grid-fitting 
	technique~\cref{sec:decompArbitrary}, and an optimized binary broadcast 
	tree using static information about
	the communication pattern~(\cref{sec:commPattern}) together with the buffer 
	swapping~(\cref{sec:bufferReuse}). For the local matrix operations, we use 
	BLAS routines for highest performance. 
	Our code is publicly available at 
	https://github.com/eth-cscs/COSMA.
	
	\subsection{Processor Grid Optimization}
	\label{sec:decompArbitrary}
	
	\begin{figure}
		\centering
		\subfloat[$1 \times 5 \times 13$ 
		grid]{\includegraphics[width=0.17\textwidth]
			{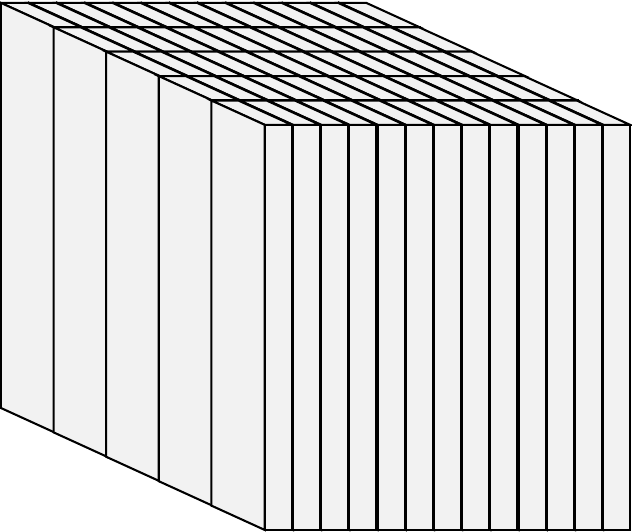}\label{fig:decomp_left}}
		\hfill
		\subfloat[$4 \times 4 \times 4$ grid with one idle 
		processor]{\includegraphics[width=0.26\textwidth]
			{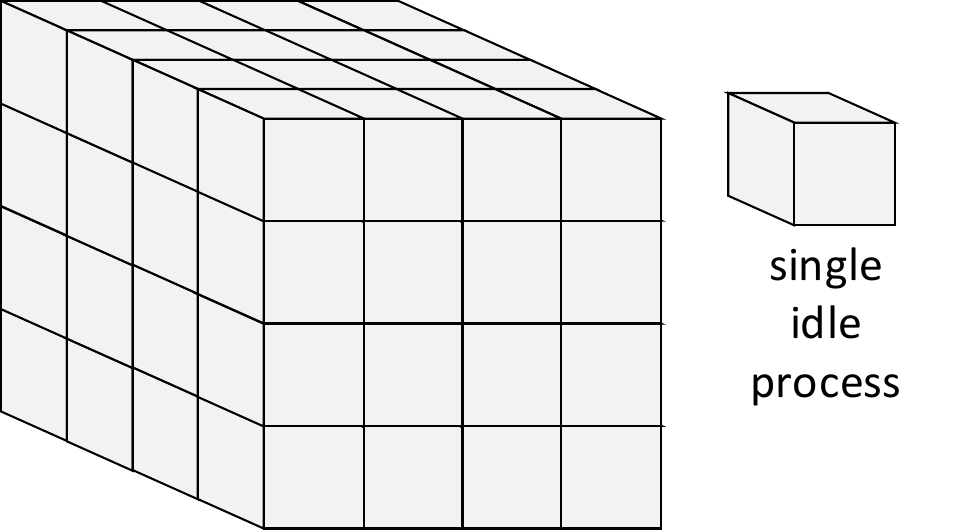}\label{fig:decomp_right}}
		\caption{\textmd{{Processor decomposition for square matrices 
		and 65 
			processors. (a) To utilize all resources, the local domain is 
			drastically stretched. (b) Dropping one 
			processor results in a symmetric grid  which 
			increases the computation per processor by 1.5\%, but reduces 
			the communication by 36\%.}}
				\vspace{-0.5em}}
		\label{fig:decompProblem}
	\end{figure}

	Throughout the paper, we assume all operations required to 
	assess the decomposition (divisions, roots) result in natural 
	numbers. We note that in practice it is rarely the case, as the parameters 
	usually emerge from external constraints, like a specification of a 
	performed 
	calculation 
	or 
	hardware resources~(\cref{sec:evaluation}). 
	If matrix dimensions are not divisible by the local domain sizes $a, b$ 
	(Equation~\ref{eq:optTileShape}), then a 
	straightforward option is to use the floor function, not utilizing the 
	``boundary'' processors whose local domains do not fit entirely in the 
	iteration 
	space, which result in more computation per processor. 
	The other option is to find factors of $p$ and then 
	construct 
	the processor grid by 
	matching the largest factors with largest matrix dimensions. 
	However, if the factors of $p$ do not match $m,n$, and $k$, this may 
	result 
	in a suboptimal decomposition.
	Our algorithm allows to not utilize some processors (increasing the 
	computation 
	volume per processor) to optimize the grid, which reduces 
	the 
	communication volume. 
	{Figure}~\ref{fig:decompProblem} {illustrates the comparison 
	between 
	these options.}
	We 
	balance this communication--computation
	trade-off by \linebreak"stretching" the local domain size 
	derived in ~\cref{sec:parScheduling} to fit the global 
	domain by adjusting its width, height, and length. The range of this 
	tuning (how many processors we drop to reduce communication) 
	depends on the hardware specification of the 
	machine (peak flop/s, memory and network bandwidth). For our 
	experiments on the Piz Daint machine we chose the maximal 
	number of unutilized cores to be~3\%, accounting for up to 
	2.4 times speedup for the square matrices using 2,198 
	cores~(\cref{sec:results}).
	
	
	
	
	\subsection{Enhanced Communication Pattern}
	\label{sec:commPattern}
	As shown in Algorithm~\ref{alg:COSMA}, COSMA by default executes in $t = 
	\frac{2ab}{S - a^2}$ 
	rounds. In each round, each processor receives $s = ab/t = (S - a^2)/2$ 
	elements of 
	$A$ 
	and 
	$B$. Thus, the input matrices are broadcast among the \textbf{i} and 
	\textbf{j} 
	dimensions of the processor grid. After the last round, the partial results 
	of 
	$C$ 
	are reduced among the \textbf{k} dimension. The communication pattern is 
	therefore similar to ScaLAPACK or CTF.
	
	To accelerate the collective communication, we implement our own binary 
	broadcast 
	tree, taking advantage of the known data layout, processor grid, and 
	communication 
	pattern. 
	Knowing the initial data 
	layout~\cref{sec:datalayout} and the processor 
	grid~\cref{sec:decompArbitrary}, we craft the binary reduction tree 
	in all three dimensions \textbf{i}, \textbf{j}, and \textbf{k} such that 
	the 
	distance in the grid between communicating processors is 
	minimized.
	Our implementation outperforms the standard MPI broadcast 
	from the 
	Cray-MPICH 
	3.1 library by approximately 10\%. 
	\subsection{Communication--Computation Overlap}
	\label{sec:compOverlap}
	The sequential rounds of the algorithm $t_i = 1, \dots, t$, 
	naturally express 
	communication--computation overlap. Using double 
	buffering, at each round 
	$t_i$ 
	we issue an asynchronous communication (using either 
	MPI\_Get or 
	MPI\_Isend / MPI\_Irecv~\cref{sec:rdma}) of the data required at round 
	$t_{i+1}$, 
	while locally processing the data received in a previous round. We note 
	that, 
	by the construction of the local domains 
	$\mathcal{D}_j$~\cref{sec:parScheduling}, the extra memory required for 
	double 
	buffering is rarely an issue. If we are constrained by the available 
	memory, 
	then the space required to hold the partial results of $C$, which is $a^2$, 
	is 
	much larger than the size of the receive buffers $s =(S - 
	a^2)/2$. If not, then there is extra memory available for the buffering.
	
	\macb{Number of rounds:} The minimum number of rounds, and therefore  
	latency, is $t= \frac{2ab}{S-a^2}$
	~(\cref{sec:parScheduling})
	. However, to 
	exploit more overlap, we can increase 
	the number of rounds $t_2 > t$. In this way, in one round we communicate 
	less 
	data $s_2 = ab/t_2 < s$, allowing the first round of computation to start 
	earlier.
	
	\subsection{One-Sided vs Two-Sided Communication}
	\label{sec:rdma}
	To reduce the latency~\cite{fompi} we implemented communication 
	using MPI RMA~\cite{mpi3-rma-overview}. This interface utilizes the 
	underlying features of Remote Direct Memory Access (RDMA) mechanism, 
	bypassing 
	the OS on the sender side and providing zero-copy 
	communication: data 
	sent is 
	not buffered in a temporary address, instead, it is written directly to its 
	location.
	
	All communication windows are 
	pre-allocated using \linebreak MPI\_Win\_allocate with the size of maximum 
	message in the 
	broadcast tree $2^{s-1} D$~(\cref{sec:commPattern}). Communication in each 
	step is performed using the MPI\_Get and MPI\_Accumulate 
	routines.
	
	For compatibility reasons, as well as for the performance comparison, we 
	also 
	implemented a communication back-end using MPI two-sided (the message 
	passing 
	abstraction). 
	
	\subsection{Communication Buffer Optimization}
	\label{sec:bufferReuse}
	The binary broadcast tree pattern is a generalization 
	of 
	the recursive structure of CARMA. However, CARMA in each recursive step 
	 dynamically allocates new buffers of the increasing size to match 
	the 
	message sizes $2^{s-1} D$, causing an additional runtime overhead.
	
	To alleviate this problem, we pre-allocate initial, send, 
	and receive buffers 
	for each of matrices A, B, and C of the maximum size of 
	the message 
	$ab/t$, 
	where $t = \frac{2ab}{S - a^2}$ is the number of steps in COSMA 
	(Algorithm~\ref{alg:COSMA}). Then, in each level $s$ of the communication 
	tree, 
	we  move the pointer in the receive buffer by $2^{s-1} D$ elements.
	
	\subsection{Blocked Data Layout}
	\label{sec:datalayout}
	
	COSMA's schedule induces the optimal initial data 
	layout, since for each $\mathcal{D}_j$ it determines its dominator set 
	$Dom(\mathcal{D}_j)$, that is, 
	elements accessed by processor $j$.
	Denote 
	$A_{l,j}$ and $B_{l,j}$  subsets of elements of matrices $A$ and $B$ that 
	initially reside in the local memory of processor $j$.
	The optimal data layout therefore requires that $A_{l,j}, B_{l,j} \subset 
	Dom(\mathcal{D}_j)$.
	However, the schedule does not specify exactly which elements of 
	$Dom(\mathcal{D}_j)$ 
	should be in $A_{l,j}$ and $B_{l,j}$.
	As a consequence of the communication pattern~\cref{sec:commPattern}, each 
	element of $A_{l,j}$ and $B_{l,j}$ is communicated to $g_m$, $g_n$ 
	processors, 
	respectively.
	To prevent data reshuffling, we therefore split each of 
	$Dom(\mathcal{D}_j)$
	into $g_m$ and $g_n$  smaller blocks, enforcing 
	that consecutive blocks are assigned to processors that communicate first. 
	This is unlike the 
	distributed CARMA implementation~\cite{CARMA}, which uses the cyclic 
	distribution among processors in the recursion base case and requires local 
	data 
	reshuffling after each communication round. Another advantage of our 
	blocked 
	data layout is a full compatibility with the block-cyclic one, which is 
	used in other linear-algebra libraries.

\section{Evaluation}
\label{sec:evaluation}

We evaluate COSMA's communication volume and performance against other 
state-of-the-art implementations with 
various 
combinations of matrix dimensions and memory requirements. These scenarios 
include both synthetic square matrices, in which all algorithms achieve 
their 
peak performance, 
{as well as ``flat'' (two large dimensions) and real-world 
	``tall-and-skinny'' (one large dimension)
	cases with 
	uneven number of processors}.

\vspace{0.5em}
\noindent
\macb{Comparison Targets} 

\noindent	
As a comparison, we use the widely used ScaLAPACK 
library
as provided by Intel MKL (version: 18.0.2.199)\footnote{the latest version 
	available on Piz Daint when benchmarks were 
	performed (August 2018). No improvements of P[S,D,C,Z]GEMM have been 
	reported in the MKL release notes since then.}, 
as well 
as Cyclops Tensor 
Framework\footnote{{https://github.com/cyclops-community/ctf, 
		commit ID  244561c on May 15, 2018}},
and 
the original CARMA 
implementation\footnote{{https://github.com/lipshitz/CAPS, commit ID 
		7589212 
		on July 19, 2013}}.
\emph{We manually tune ScaLAPACK parameters to achieve its maximum 
	performance.} Our experiments showed that on Piz Daint it achieves the 
highest performance when run with 4 MPI ranks per compute node, 9 cores 
per 
rank. Therefore, for each matrix sizes/node count configuration, we 
recompute the optimal rank decomposition for ScaLAPACK. Remaining 
implementations use default decomposition strategy and perform best 
utilizing 
36 ranks per 
node, 1 core per rank.

\vspace{0.5em}
\noindent
\macb{Infrastructure and Implementation Details}

\noindent
All implementations were compiled using the GCC 6.2.0 compiler. We use 
Cray-MPICH 3.1 implementation of MPI. 
{The parallelism within a rank of ScaLAPACK}\footnote{only ScaLAPACK 
	uses 
	multiple cores per ranks} is handled
internally by the MKL BLAS  (with GNU OpenMP threading) version 2017.4.196. 
To profile MPI communication 
volume, we use the mpiP version 3.4.1~\cite{mpip}.

\vspace{0.5em}
\noindent
\macb{Experimental Setup and Architectures}

\noindent
We run our experiments on the CPU partition of CSCS Piz Daint, which 
has 
1,813 XC40 nodes with
dual-socket Intel Xeon E5-2695 v4 processors ($2\cdot18$ 
cores, 3.30 GHz, 
45 MiB 
L3 shared 
cache, 
64 GiB DDR3 RAM),
interconnected by the Cray Aries network with a dragonfly network topology. 
{We set $p$ to a number of available cores}\footnote{for ScaLAPACK, 
	actual 
	number of MPI ranks is $p/9$} 
{and 
	$S$ to the main memory size per core}~(\cref{sec:machineModel}). 
{To 
	additionally capture cache size per core, the model can be extended to 
	a three-level memory hierarchy. However, cache-size tiling is already 
	handled internally by the MKL.}

\vspace{0.5em}
\noindent
\macb{Matrix Dimensions and Number of Cores}

\noindent
{We use square ($m = n = k$), ``largeK'' ($m = n \ll k$), ``largeM'' 
	($m 
	\gg n = k$), and ``flat'' ($m = n \gg k$) matrices.} The matrix dimensions 
and 
number 
of 
cores are (1) powers of 
two $m = 2^{r_1}, n = 2^{r_2}, m = 2^{r_3}$, (2) determined by the 
real-life 
simulations or hardware architecture (available nodes on a computer), (3) 
chosen adversarially, e.g, $n^3 + 1$. 
Tall matrix dimensions are taken from an application benchmark, namely 
the 
calculation of the random phase approximation (RPA) energy of water 
molecules~\cite{joost}. There, to simulate $w$ molecules, the sizes 
of the 
matrices are $m=n=136w$ and $k = 228w^2$. In the strong scaling scenario, 
we 
use 
$w=128$ as in the 
original paper, yielding $m=n=$ 17,408, $k = $ 3,735,552. For performance 
runs, we scale up to 512 nodes (18,432 cores).

\vspace{0.5em}
\noindent
\macb{Selection of Benchmarks}

\noindent
We perform both strong scaling and \emph{memory scaling} experiments. 
The memory scaling scenario fixes the input size per core 
($\frac{pS}{I}, I = mn + mk + nk$), as opposed to the work per 
core ($\frac{mnk}{p} \ne const$). We evaluate two cases: (1) "limited 
memory" ($\frac{pS}{I} = const$), and (2) "extra memory"  
($\frac{p^{2/3}S}{I} = 
const$).

{
	To provide more information about the impact of communication optimizations 
	on the total runtime, for each of the matrix shapes we also separately 
	measure time spent by COSMA on different parts of the code. for each matrix 
	shape we present two extreme cases of 
	strong scaling - with smallest number of processors (most compute-intense) 
	and with the largest (most communication-intense). To additionally increase 
	information provided, we perform these measurements with and without 
	communication--computation overlap. }

\vspace{0.5em}
\noindent
\macb{Programming Models}

\noindent
We use either the RMA or the Message Passing 
models. CTF also uses both models, whereas CARMA and ScaLAPACK use MPI 
two-sided (Message Passing).

\vspace{0.5em}
\noindent
\macb{Experimentation Methodology}

\noindent
For each combination of parameters, we perform 5 runs, each with different 
node allocation. 
As all the algorithms use BLAS routines for local matrix computations, for 
each run we execute the kernels three times and take the minimum to 
compensate for the BLAS setup overhead. 
We report median and 95\% confidence intervals of the runtimes.

\vspace{-1em}
	\section{Results}
	\label{sec:results}
		\begin{figure*}
		\vspace{-2.0em}
		\centering
		\subfloat[Strong scaling,  $n=m=k= $ 16,384
		]{\includegraphics[width=\fw \textwidth]
			{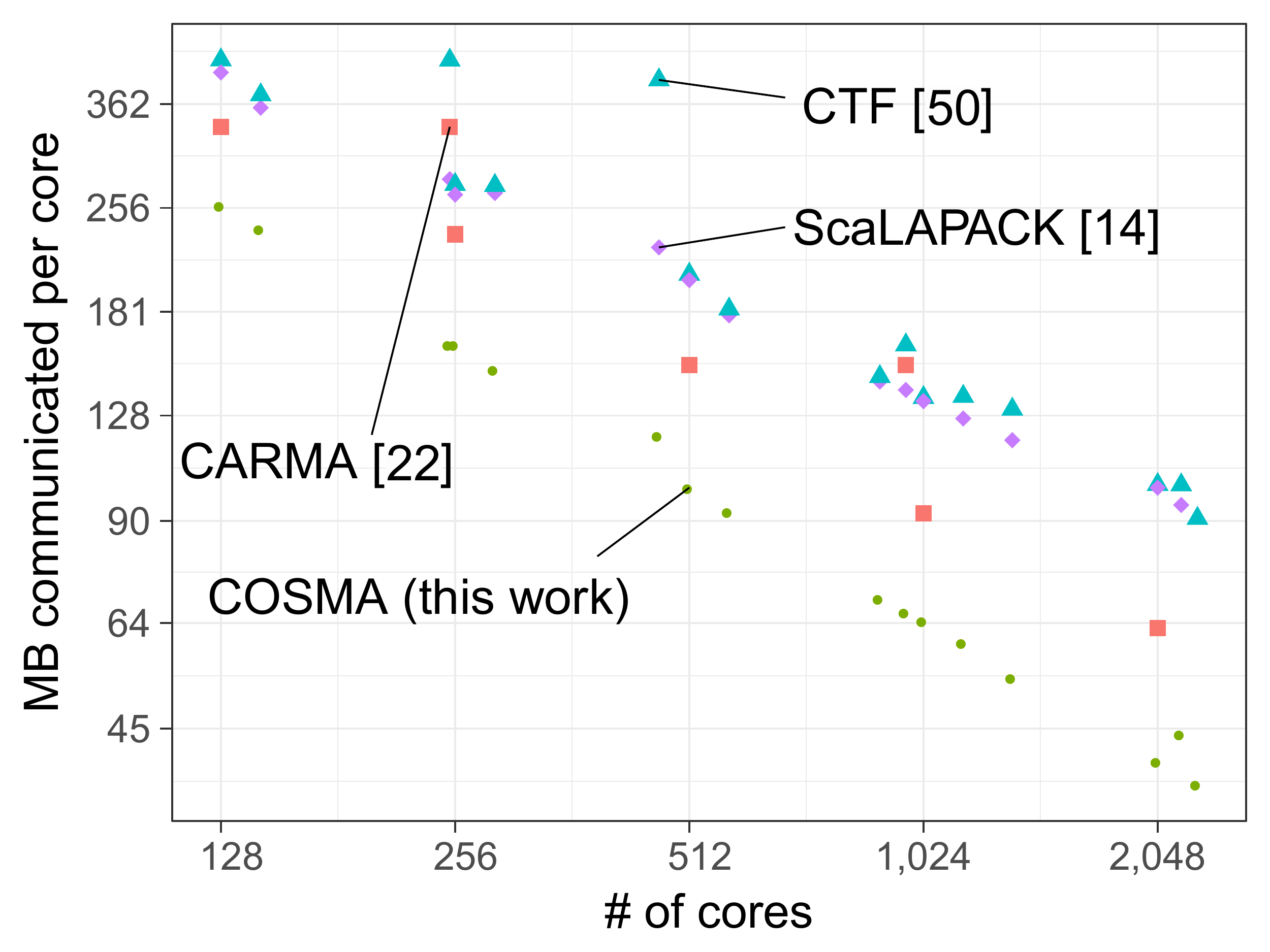}\label{fig:square_strong_comm}}
		\hfill	
		\subfloat[Limited memory, $n = m = k = 
		\sqrt{\frac{pS}{3}}$]{\includegraphics[width=\fw \textwidth]	
			{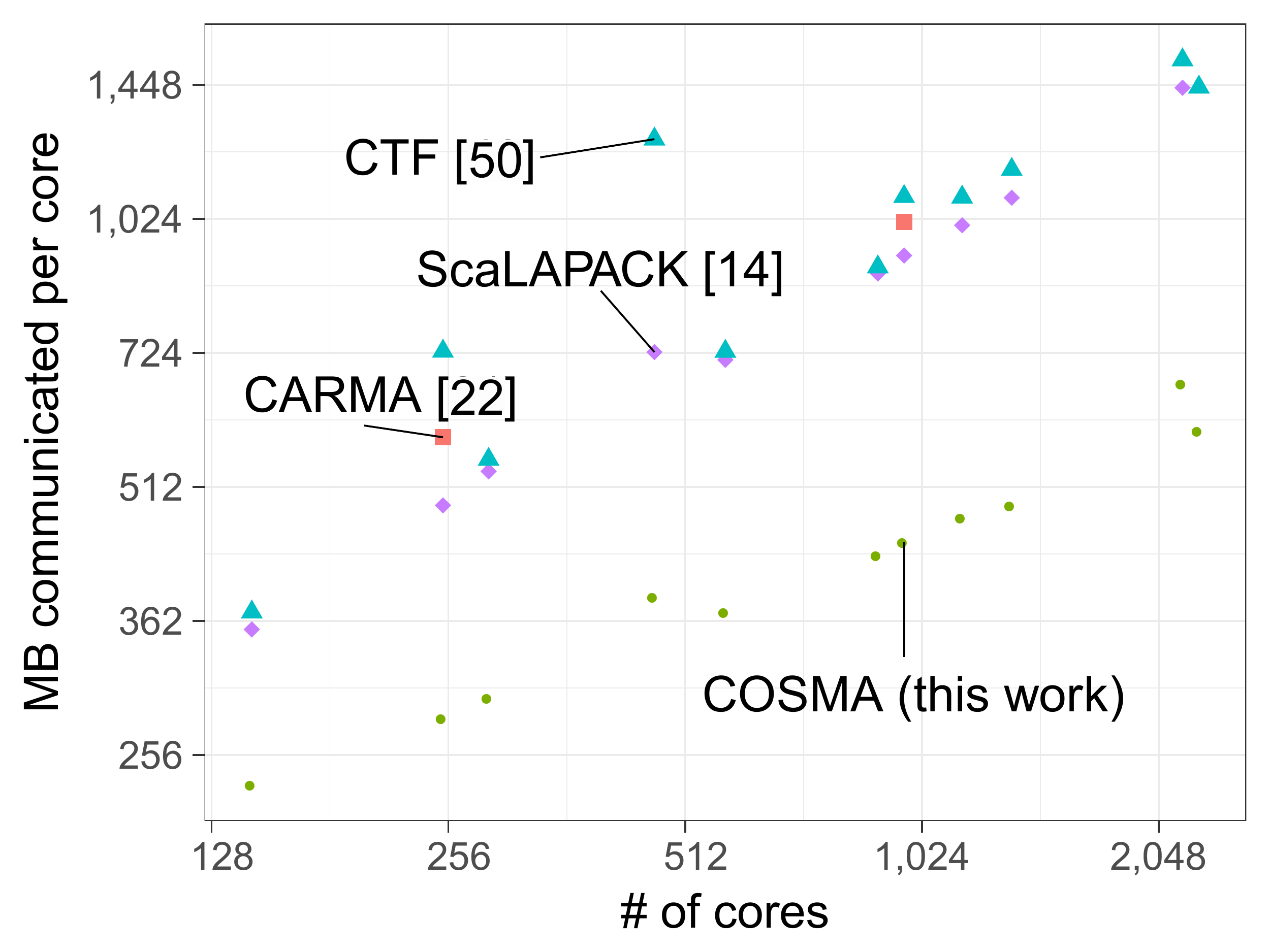}\label{fig:square_strong_comm}}
		%
		\hfill
		\subfloat[Extra memory,$n = m = k = 
		\sqrt{\frac{p^{2/3}S}{3}}$]{\includegraphics[width=\fw \textwidth]	
			{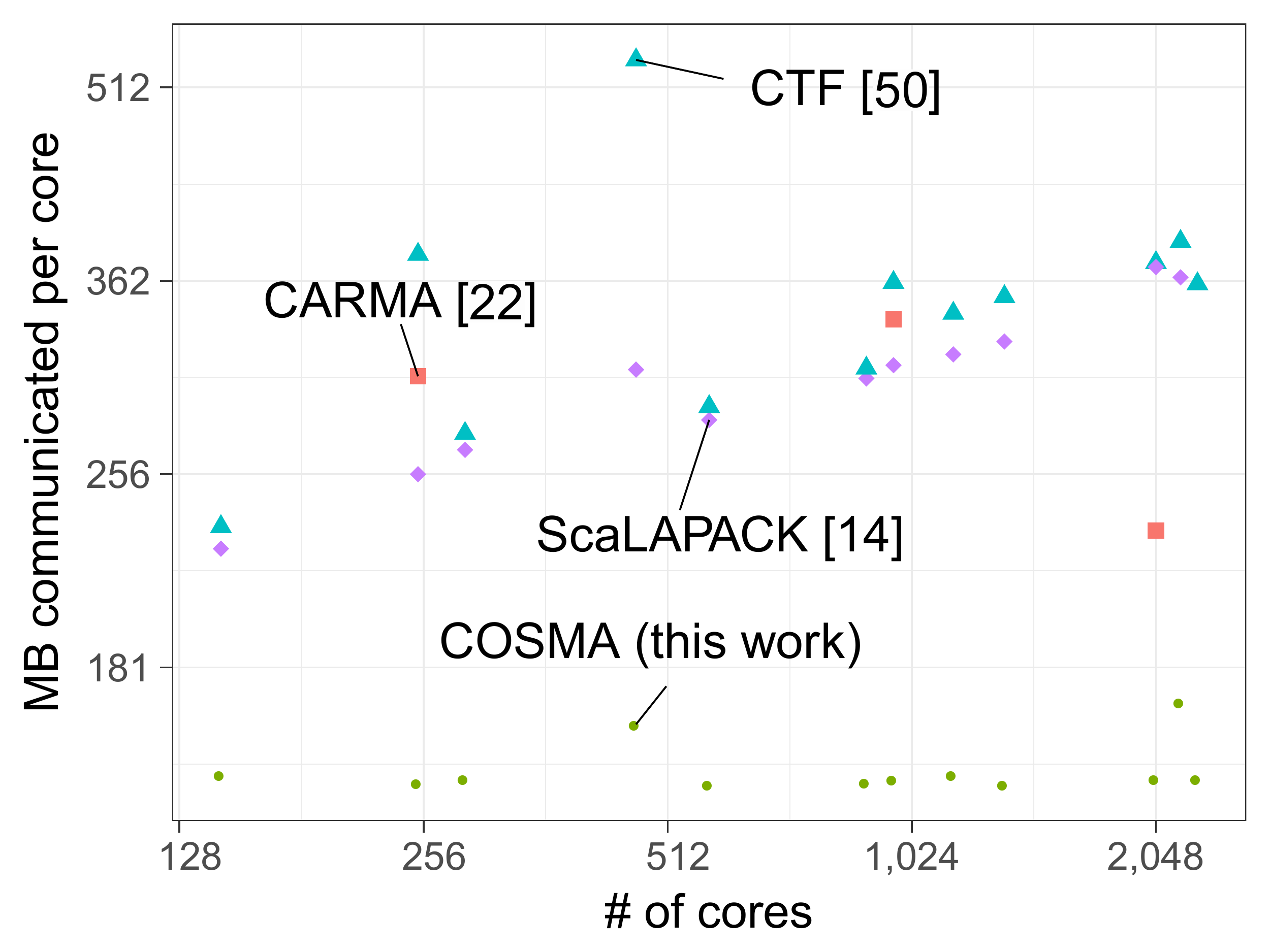}\label{fig:square_strong_comm}}
		\vspace{-1em}
		\caption{
			\textmd{{Total communication volume per core carried out by 
					COSMA, CTF, 
					ScaLAPACK 
					and 
					CARMA for square matrices, as measured by the mpiP 
					profiler.}} 
		}
		\vspace{-1.5em}
		\label{fig:performancePlotsCommSquare}	
	\end{figure*}
	\begin{figure*}
		\centering
		\subfloat[Strong scaling, $n=m= $17,408, $k= $ 
		3,735,552]{\includegraphics[width=\fw \textwidth]
		{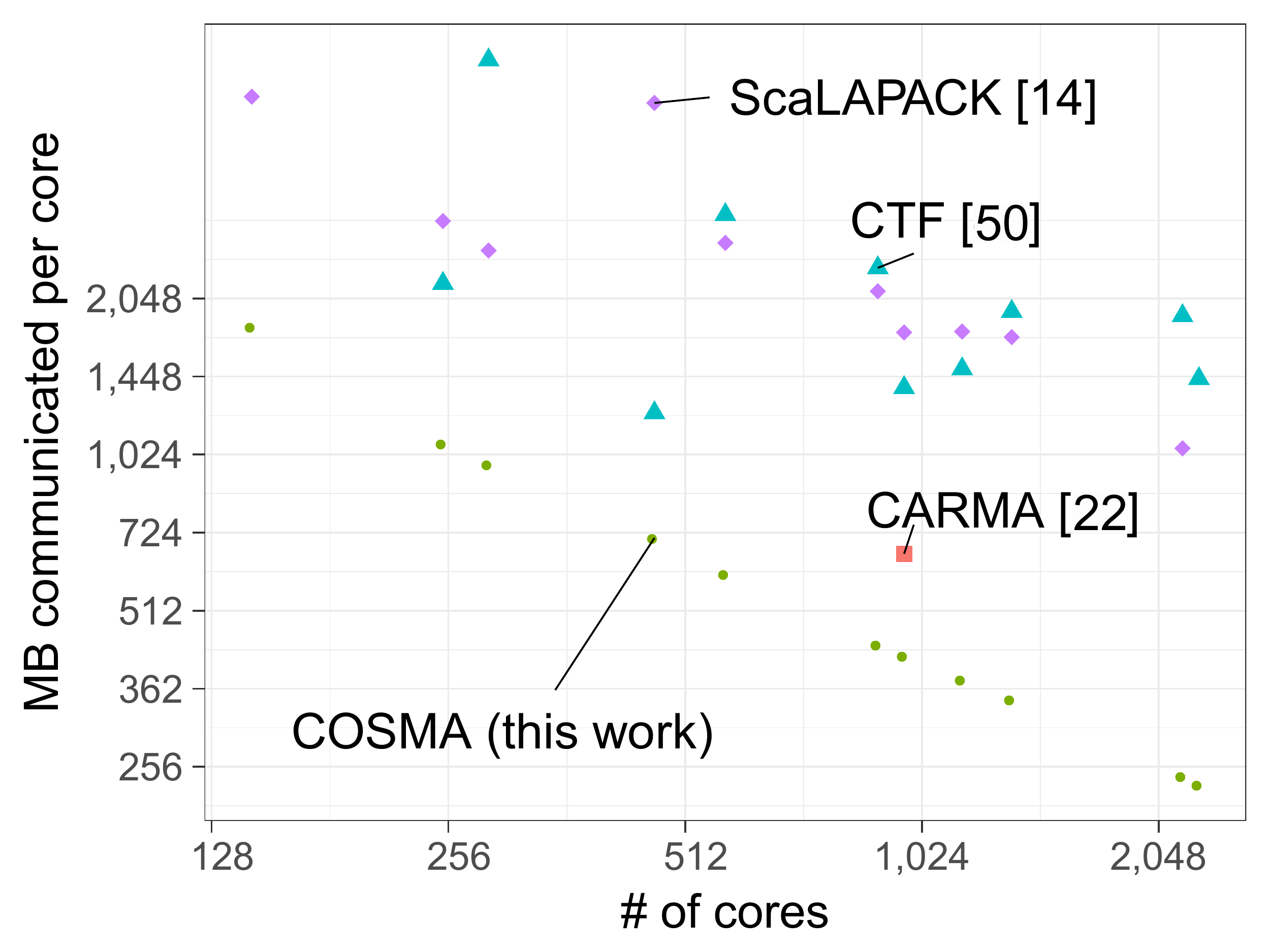}\label{fig:square_strong_comm}}
		\hfill	
		\subfloat[Limited memory,$m=n=979 p^\frac{1}{3}$, 
		$k=$1.184$p^\frac{2}{3}$]{\includegraphics[width=\fw \textwidth]	
	{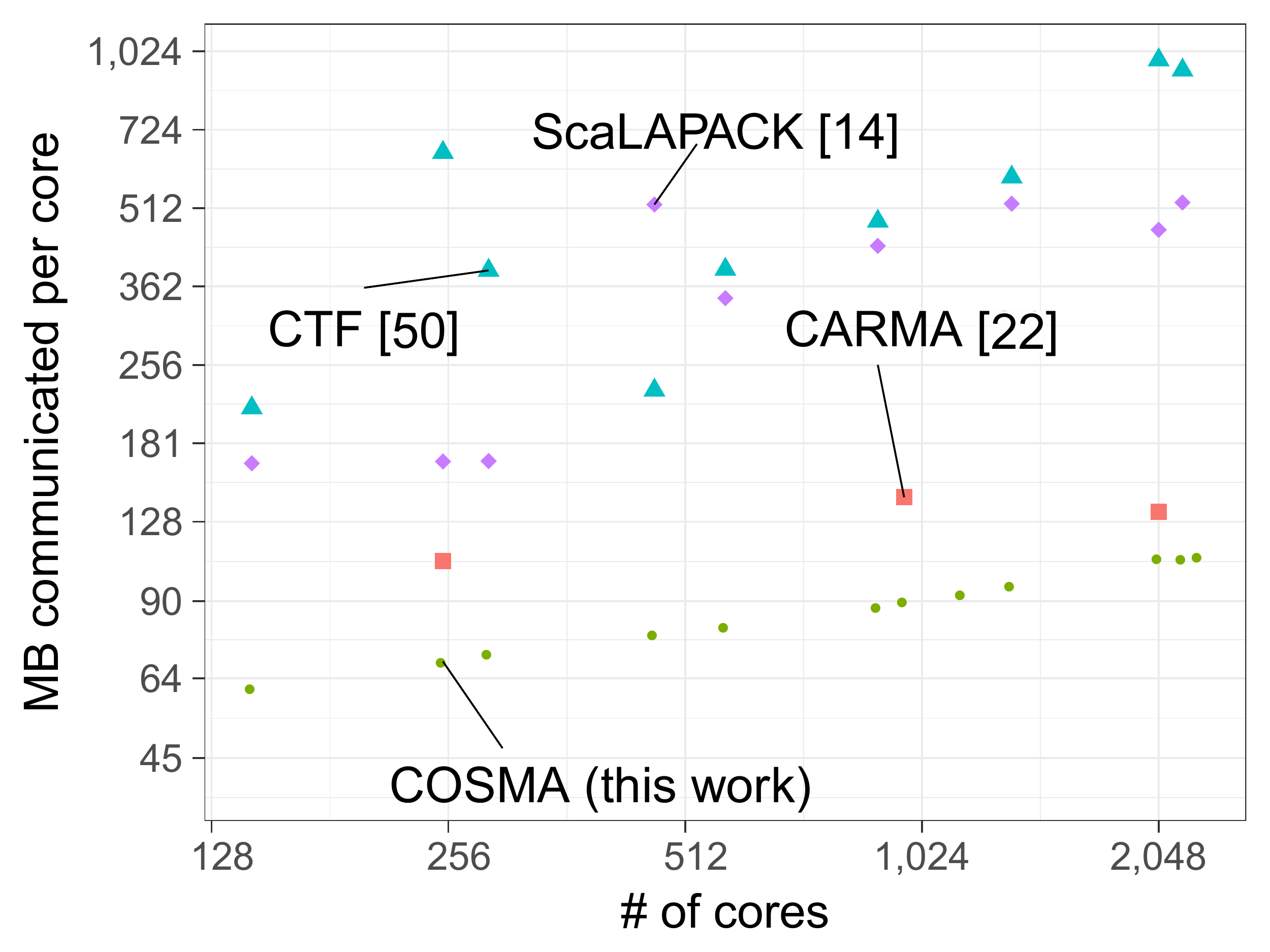}\label{fig:square_strong_comm}}
		%
		\hfill
		\subfloat[Extra memory,,$m=n=979 p^\frac{2}{9}$, 
		$k=$1.184$p^\frac{4}{9}$]{\includegraphics[width=\fw \textwidth]
	{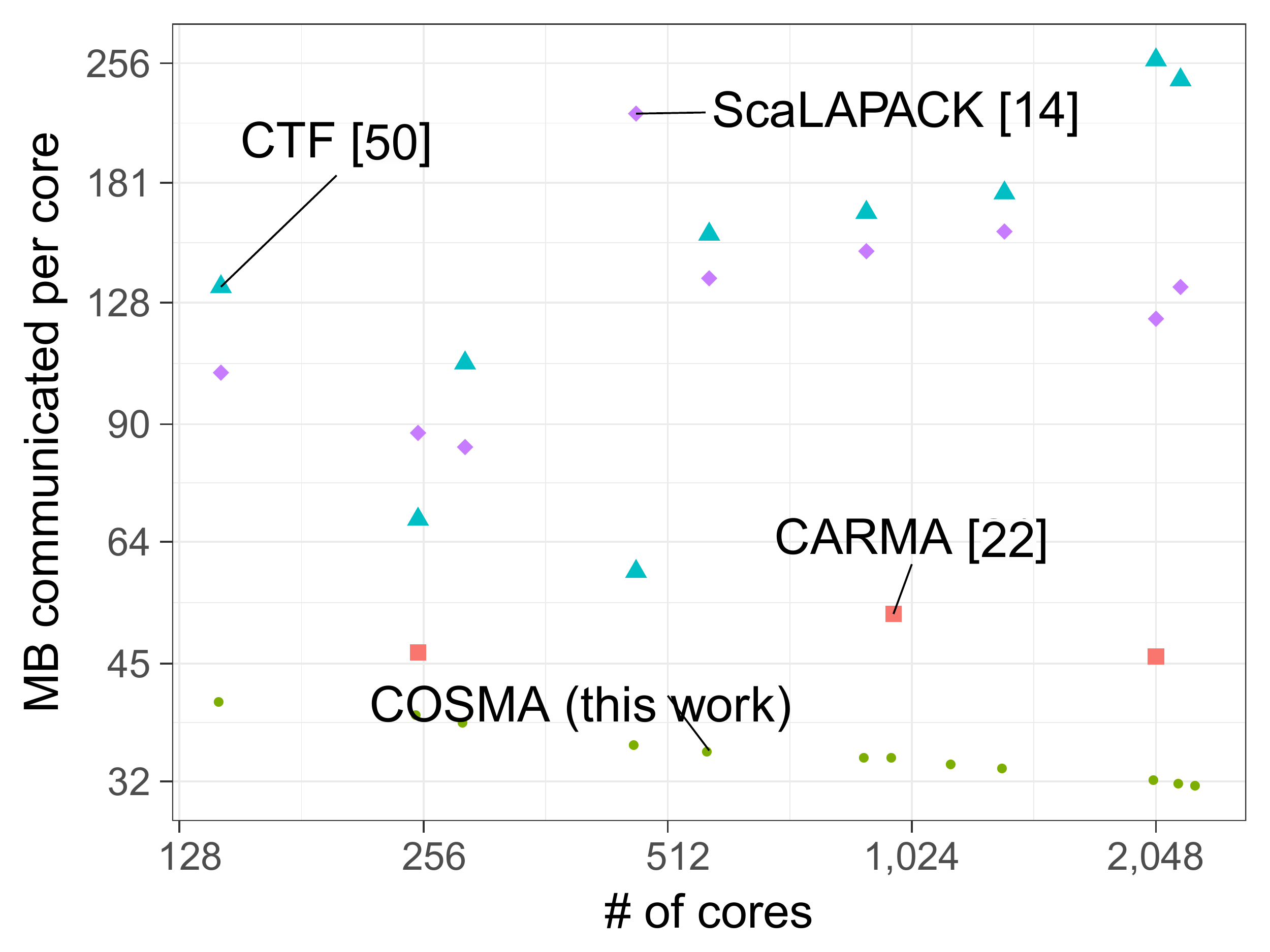}\label{fig:square_strong_comm}}
		\vspace{-1em}
		\caption{
			\textmd{{Total communication volume per core carried out by 
					COSMA, CTF, 
					ScaLAPACK 
					and 
					CARMA for ``largeK'' matrices, as measured by the mpiP 
					profiler.}} 
		}
		\vspace{-1.5em}
		\label{fig:performancePlotsCommLargeK}
	\end{figure*}
We now present the experimental results comparing COSMA with the existing 
algorithms. For both strong and memory scaling, we 
measure total communication volume and
runtime on both square and tall 
matrices. Our experiments show that 
COSMA 
always communicates least data and 
is  
the fastest in \emph{all} scenarios. 

\vspace{0.5em}
\noindent
\macb{Summary and Overall Speedups}

\noindent
As discussed in~\cref{sec:evaluation}, we evaluate three benchmarks -- 
strong scaling, 
``limited memory'' (no redundant copies of the input are possible), and 
``extra memory'' ($p^{1/3}$ extra copies of the input can fit into combined 
memory of all cores). Each 
of them we test for square, ``largeK'', ``largeM'', and , ``flat'' 
matrices, giving twelve 
cases in 
total.
In Table~\ref{tab:results}, we present arithmetic mean of total 
communication volume per MPI rank across all core counts. We also report 
the summary of minimum, geometric mean, and maximum speedups vs the 
second best-performing algorithm.

\vspace{0.5em}
\noindent
\macb{Communication Volume}

\noindent
As analyzed in~\cref{sec:seqOpt} and ~\cref{sec:parOptimality}, COSMA 
reaches I/O lower 
bound (up to the factor of 
$\sqrt{S}/(\sqrt{S+1}-1)$). Moreover, optimizations presented 
in~\cref{sec:implementation} secure further improvements 
compared to other state-of-the-art algorithms. In all cases, COSMA performs 
least communication. 
Total communication volume  for square and ``largeK'' 
scenarios is shown in 
Figures~\ref{fig:performancePlotsCommSquare} 
and~\ref{fig:performancePlotsLargeK}.

\vspace{0.5em}
\noindent
\macb{Square Matrices}

\noindent
Figure~\ref{fig:performancePlotsSquare} presents the \% of achieved peak 
hardware 
performance for square matrices in all three scenarios. As COSMA is based 
on the near optimal schedule, 
it  achieves the highest 
performance \emph{in all cases}. Moreover, its performance pattern is the 
most stable: when the number of cores is 
not a 
power of two, the performance does not vary much compared to all remaining 
three implementations. 
{
	We note that matrix dimensions in the strong scaling scenarios 
	($m=n=k=2^{14}$) are very small for distributed setting. Yet even in this 
	case COSMA maintains relatively high performance for large numbers of 
	cores: 
	using 4k cores it achieves 35\% of peak performance, compared to 
	<5\% of CTF and ScaLAPACK,
	showing excellent strong scaling characteristics.}

\vspace{0.5em}
\noindent
\macb{Tall and Skinny Matrices}

\noindent
Figure~\ref{fig:performancePlotsLargeK} 
presents the 
results for ``largeK'' matrices - due to space constraints, the symmetric 
``largeM'' 
case is 	
For strong scaling, the minimum number 
of cores 
is 2048 (otherwise, the matrices of size $m=n=$17,408, $k=$3,735,552 do 
not fit into memory).
Again, COSMA shows 
the most stable performance with a varying number of cores.

\vspace{0.5em}
\noindent
\macb{``Flat'' Matrices}

\noindent
Matrix dimensions for strong scaling are set to $m = n = 2^{17} = 
$131,072 and $k = 2^9 = $512. Our weak scaling scenario models the 
rank-k update kernel, with fixed $k = $256, and $m = n$ scaling 
accordingly for the ``limited'' and ``extra'' memory cases. Such kernels 
take most of the execution time in, e.g., matrix factorization 
algorithms, where updating Schur complements is performed as a rank-k gemm 
operation~\cite{MPLU}.

\vspace{0.5em}
\noindent
\macb{Unfavorable Number of Processors}

\noindent
Due to the processor grid 
optimization~(\cref{sec:decompArbitrary}), 
the performance is stable and does not suffer from unfavorable 
combinations 
of 
parameters. E.g., the runtime of COSMA for square matrices 
$m=n=k=$16,384 on $p_1=$9,216$=2^{10}\cdot3^2$ cores is 142~ms. 
Adding an 
extra core ($p_2=$9,217$=13 \cdot 709$), does not change COSMA's runtime, 
as the optimal 
decomposition does not utilize it. On 
the other hand, CTF for $p_1$ runs in 600~ms, while for 
$p_2$ the 
runtime \emph{increases} to 1613~ms due to a non-optimal 
processor decomposition.

\vspace{0.5em}
\noindent
\macb{Communication-Computation Breakdown}

\noindent
In
Figure~\ref{fig:performancePlotsBreakdown} we present the total 
runtime 
breakdown of COSMA into communication and computation routines. Combined 

with the comparison of communication volumes 
(Figures~\ref{fig:performancePlotsCommSquare} 
and~\ref{fig:performancePlotsCommLargeK}, Table~\ref{tab:results}) we see 
the importance of our I/O optimizations for distributed setting even for 
traditionally compute-bound MMM. E.g., for square or ``flat'' matrix and 
16k cores, COSMA communicates more than two times less than the 
second-best (CARMA). Assuming constant time-per-MB,  COSMA would be 
40\% slower if it communicated that much, being slower than CARMA by 
30\%. For ``largeK'', the 
situation is even more extreme, with COSMA suffering 2.3 times 
slowdown if communicating as much as the second-best 
algorithm, CTF, which communicates 10 times more.

\vspace{0.5em}
\noindent
\macb{Detailed Statistical Analysis}

\noindent
Figure~\ref{fig:performancePlotsDistr} provides a distribution of the 
achieved 
peak performance across all numbers of cores for all six scenarios. It can 
be seen that, for example, in the strong scaling scenario and square 
matrices, COSMA is comparable to the other implementations (especially 
CARMA). However, for tall-and-skinny matrices with limited memory 
available, \emph{COSMA lowest achieved performance is higher than the best 
	performance of CTF and ScaLAPACK.}
	
	\begin{table}[h]
\vspace{0.5em}
	\setlength{\tabcolsep}{2pt}
	\renewcommand{\arraystretch}{0.7}
	\centering
	\footnotesize
	\sf
	\vspace{+0.5em}
	\begin{tabular}{llrrrrrrr}
		\toprule
		& & 
		\multicolumn{4}{c}{total comm. volume per rank [MB]} &
		\multicolumn{3}{c}{speedup}  \\
		\cmidrule(lr){3-6} \cmidrule(lr){7-9}
		\textbf{shape} & \textbf{benchmark} 	
  & \scriptsize{\textbf{ScaLAPACK}}
  & \textbf{CTF}
    & \textbf{CARMA} 
  & \textbf{COSMA}
  & \textbf{min}  & 
  \textbf{mean} & \textbf{max}\\
		\midrule
		\multirow{3}{*}{
			\includegraphics[width=0.037 \textwidth]
			{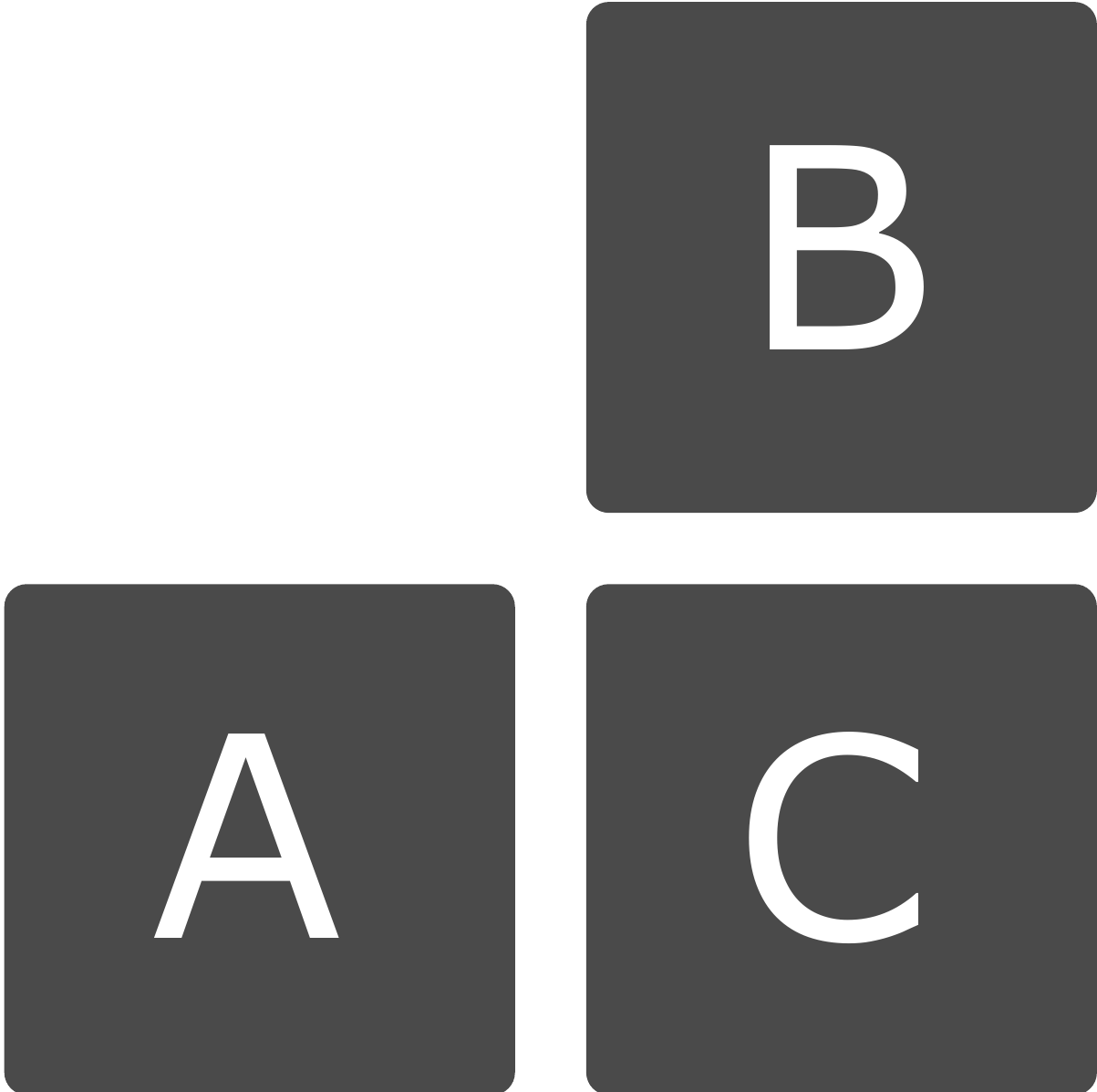}
		} & 
		strong scaling & 
		203 \hspace{0.14cm} & 222 & 195 \hspace{0.05cm} & 107 \hspace{0.1cm} &
		\textbf{1.07} & 1.94 & 4.81 \\
		& limited memory 
		& 816 \hspace{0.14cm} & 986 & 799 \hspace{0.05cm} & 424 \hspace{0.1cm}
		& 1.23 & 1.71 & 2.99 \\
		& extra memory 
		& 303 \hspace{0.14cm} & 350 & 291 \hspace{0.05cm} & 151 \hspace{0.1cm}
		&  1.14 & 2.03 & 4.73 \\
		\midrule
		\multirow{3}{*}{
		\includegraphics[width=0.037 \textwidth]
		{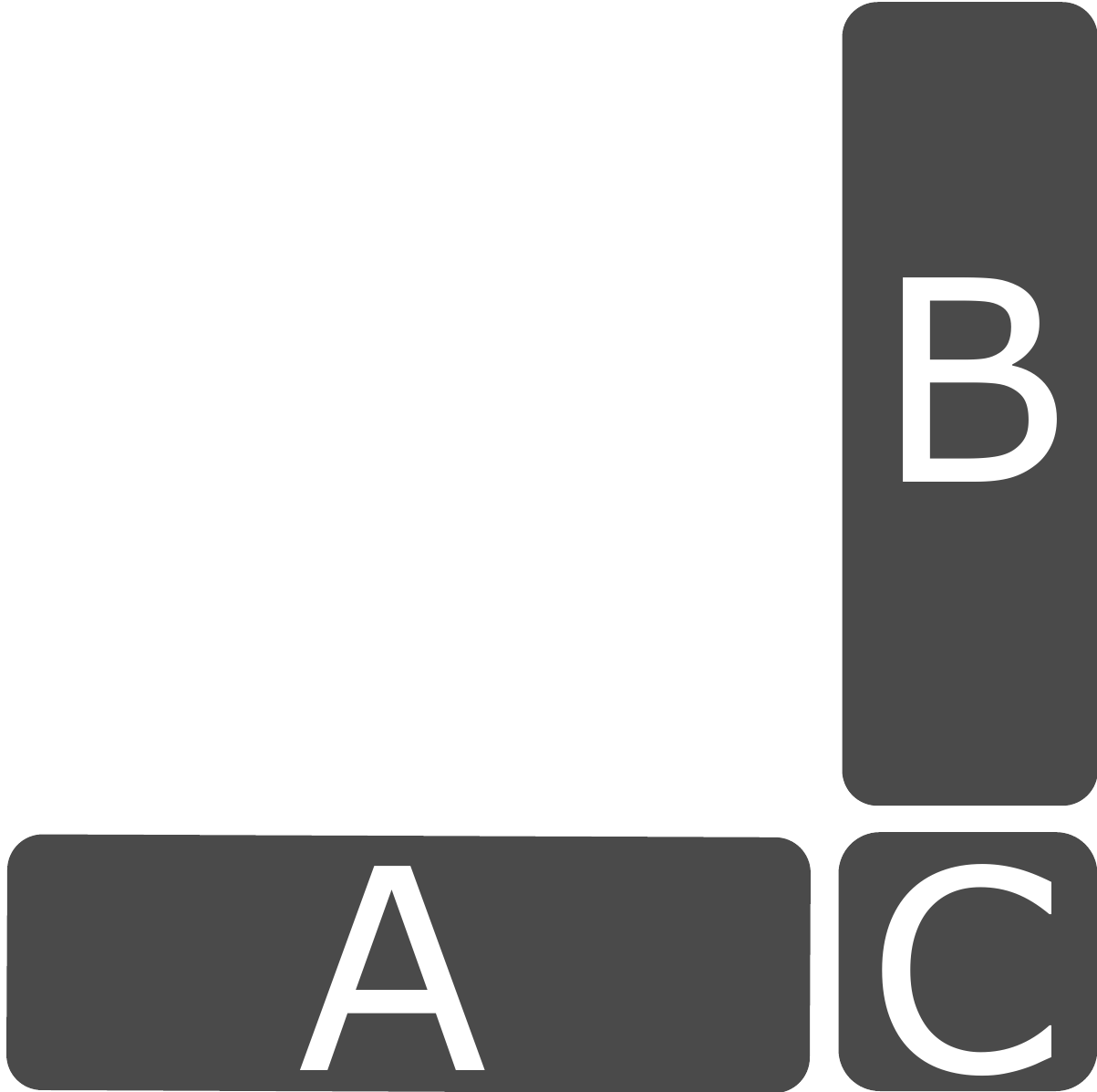}
	} & 
		strong scaling 
	& 2636 \hspace{0.14cm} & 2278 & 659 \hspace{0.05cm} & 545 \hspace{0.1cm}
		& 1.24 & 2.00 & 6.55 \\
		& limited memory 
		& 368 \hspace{0.14cm} & 541 & 128 \hspace{0.05cm} & 88 \hspace{0.1cm}
		& 1.30 & 2.61 & 8.26 \\
		& extra memory 
		& 133 \hspace{0.14cm} & 152 & 48 \hspace{0.05cm} & 35 \hspace{0.1cm}
		& 1.31 & 2.55 & 6.70 \\
			\midrule
		\multirow{3}{*}{
			\includegraphics[width=0.037 \textwidth]
			{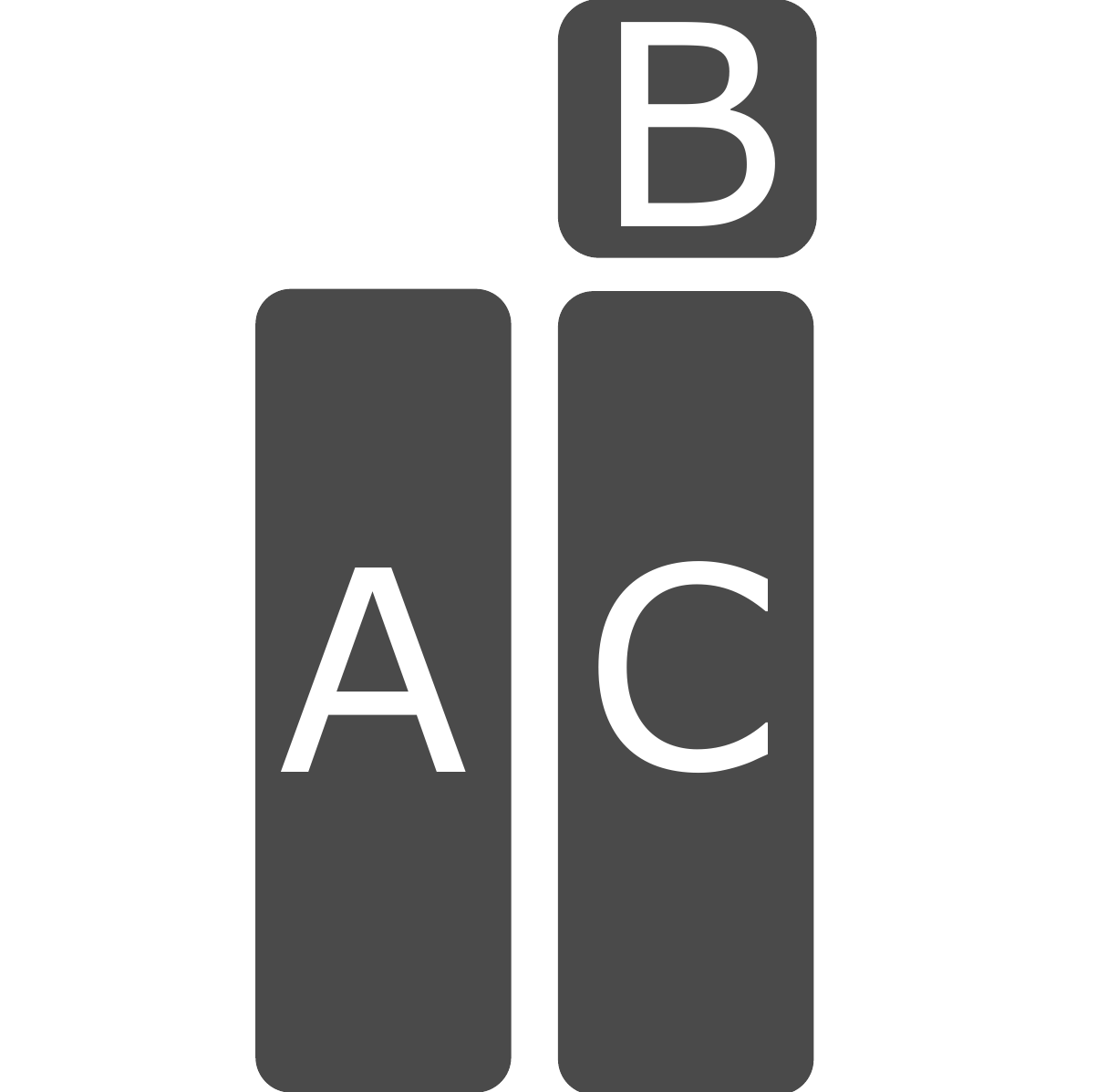}
		} & 
		strong scaling 
		& 3507 \hspace{0.14cm} & 2024 & 541 \hspace{0.05cm} & 410 \hspace{0.1cm}
		& 1.31 & 2.22 & 3.22 \\
		& limited memory 
		& 989 \hspace{0.14cm} & 672 & 399 \hspace{0.05cm} & 194 \hspace{0.1cm}
		& 1.42 & 1.7 & 2.27 \\
		& extra memory 
		& 122 \hspace{0.14cm} & 77 & 77 \hspace{0.05cm} & 29 \hspace{0.1cm}
		& 1.35 & 1.76 & 2.8 \\
			\midrule
		\multirow{3}{*}{
			\includegraphics[width=0.037 \textwidth]
			{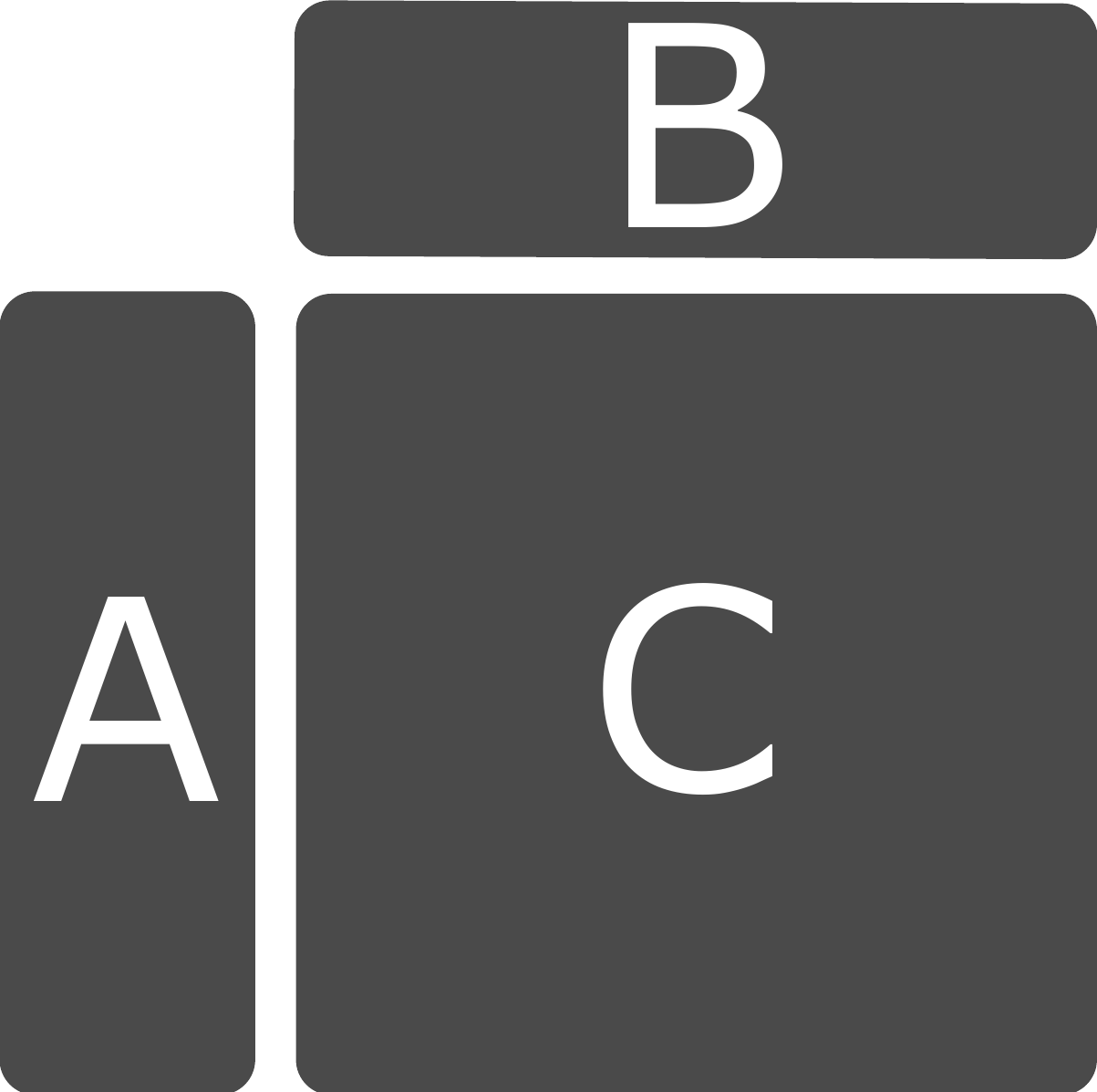}
		} & 
		strong scaling 
		& 134 \hspace{0.14cm} & 68 & 10 \hspace{0.05cm} & 7 \hspace{0.1cm}
		& 1.21 & 4.02 & \textbf{12.81} \\
		& limited memory 
		& 47 \hspace{0.14cm} & 101 & 26 \hspace{0.05cm} & 8 \hspace{0.1cm}
		& 1.31 & 2.07 & 3.41 \\
		& extra memory 
		& 15 \hspace{0.14cm} & 15 & 10 \hspace{0.05cm} & 3 \hspace{0.1cm}
		& 1.5 & 2.29 & 3.59 \\
		\midrule
		\multicolumn{2}{c}{\textbf{overall}} 			
		& 
	 &  &  & 
		& \textbf{1.07} & 
		\textbf{2.17} 
		& 
		\textbf{12.81} \\
		\bottomrule
	\end{tabular}
	\caption{
		\textmd{{Average communication volume per MPI rank and
			measured speedup of COSMA vs the 
			second-best algorithm
	 across all core 
		counts for each of the 
		scenarios.}}
	\vspace{-1.5em}
	}
\vspace{-1.5em}
	\label{tab:results}
	\end{table}

\begin{figure*}[t]
	\centering
	\subfloat[Strong scaling, $n=m=k= $ 16,384]
	{\includegraphics[width=\fw \textwidth]
		{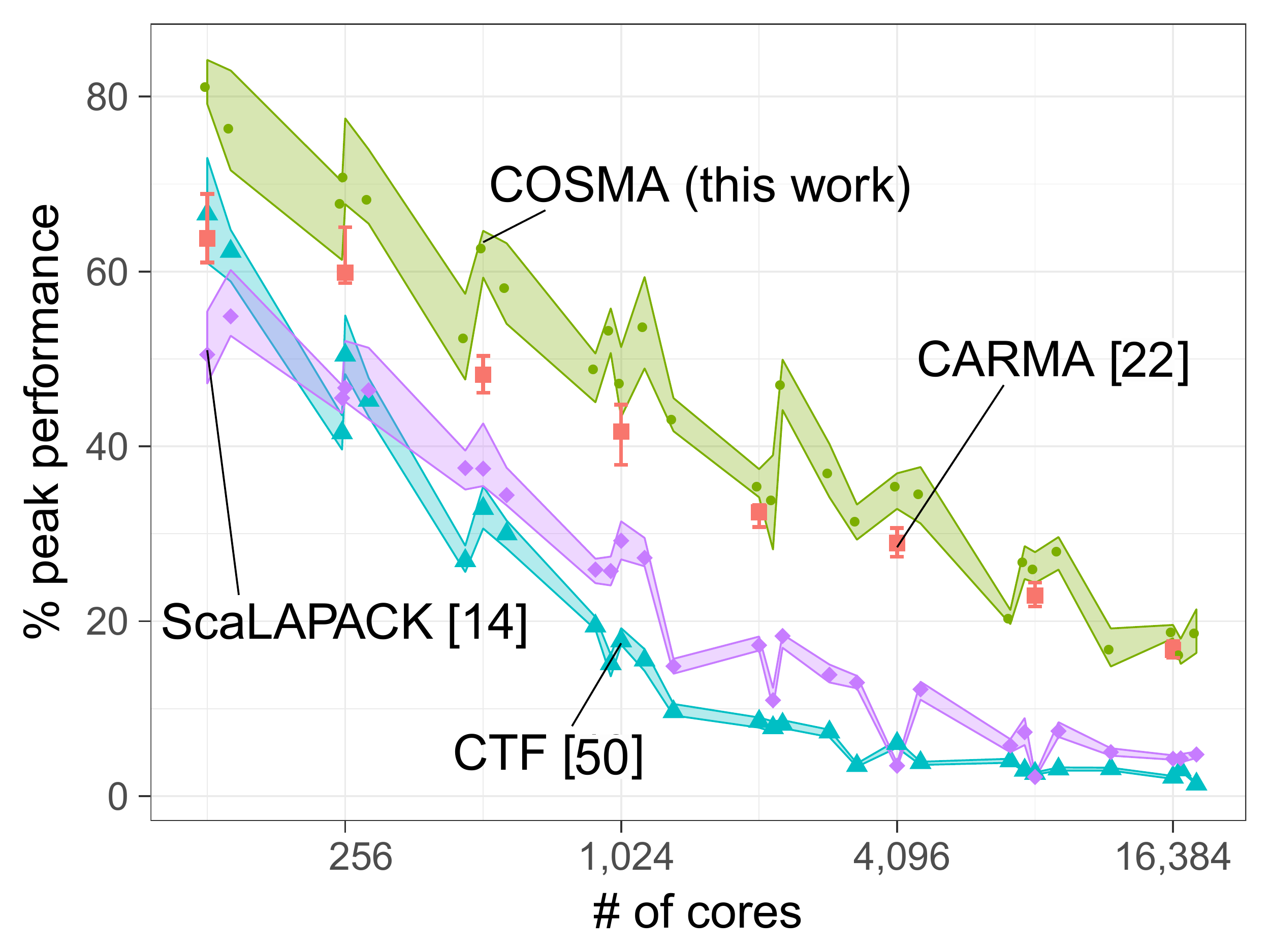}
		\label{fig:square_strong}}
	\hfill	
	\subfloat[Limited memory, $n = m = k = 
	\sqrt{\frac{pS}{3}}$]{\includegraphics[width=\fw \textwidth]	
		{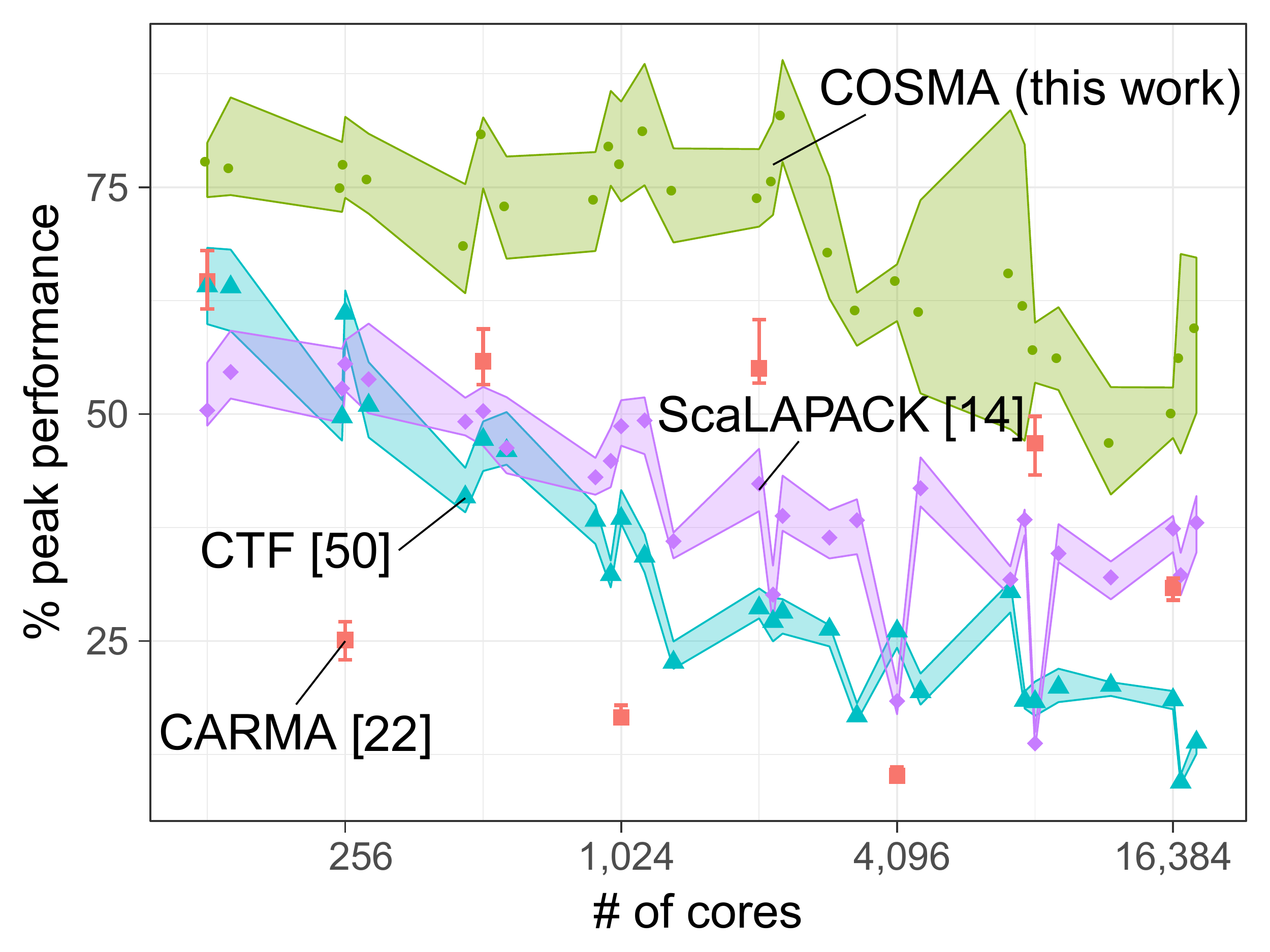}	
		\label{fig:square_weakp1}}
	%
	\hfill
	\subfloat[Extra memory, $m = n = k = 
	\sqrt{\frac{p^{2/3}S}{3}}$]{\includegraphics[width=\fw 
		\textwidth]			
		{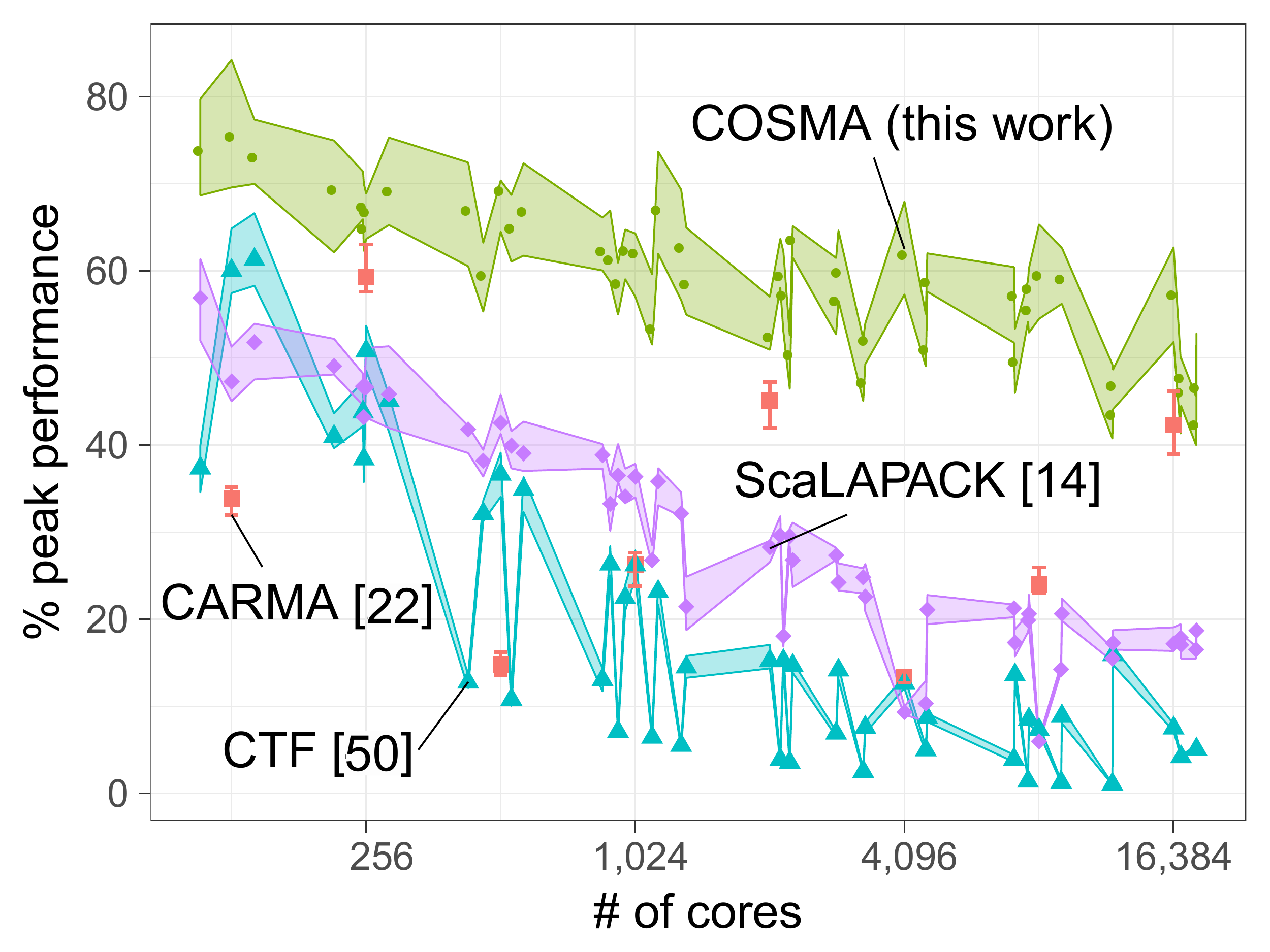}
		\label{fig:square_weakp2}}
	\vspace{-1em}
	\caption{
		\textmd{{Achieved \% of peak performance by COSMA, CTF, 
				ScaLAPACK 
				and 
				CARMA for square matrices, strong and weak scaling. We show 
				median 
				and 95\% confidence intervals.} }
	}
	\label{fig:performancePlotsSquare}
	
\end{figure*}

\begin{figure*}
	\centering
	\subfloat[Strong scaling, $n=m=k= $ 16,384 
	]{\includegraphics[width=\fw \textwidth]		
		{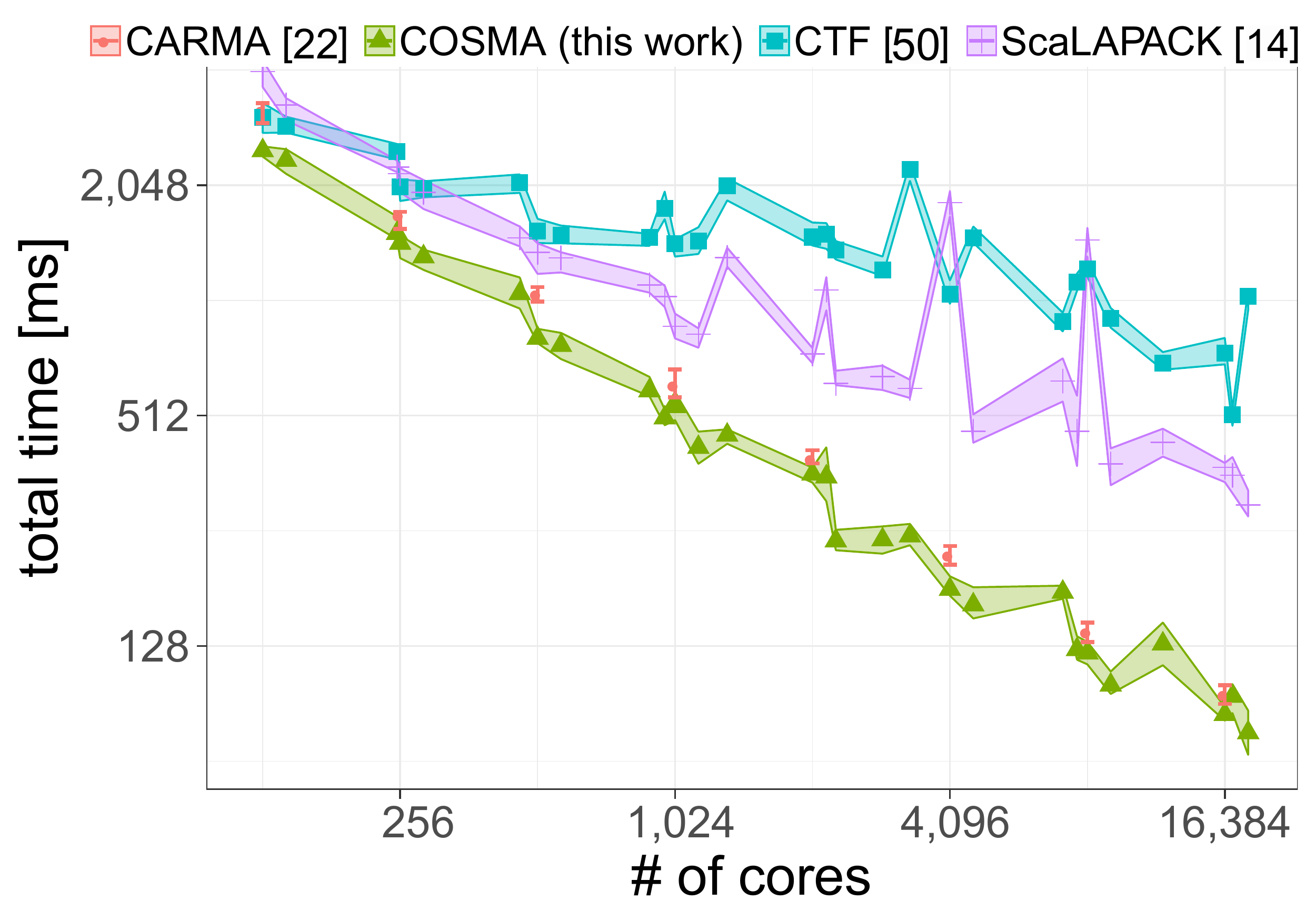}\label{fig:square_strong_time}}
	\hfill
	\subfloat[Limited memory, $n = m = k = 
	\sqrt{\frac{pS}{3}}$]{\includegraphics[width=\fw \textwidth]	
		{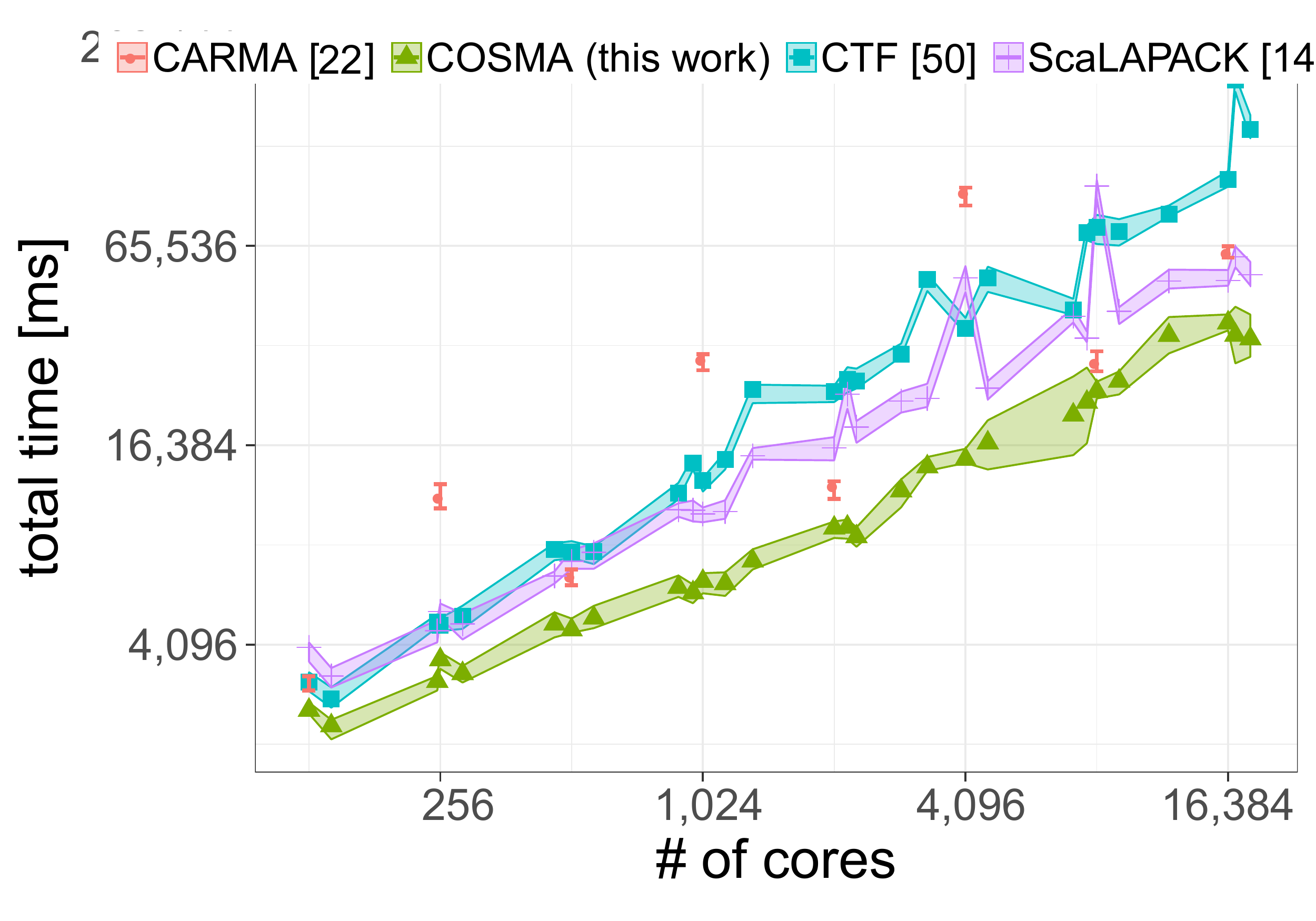}\label{fig:square_weakp1_time}}
	%
	\hfill
	\subfloat[Extra memory, $m = n = k = 
	\sqrt{(p^{2/3}S)/3}$]{\includegraphics[width=\fw \textwidth]		
		{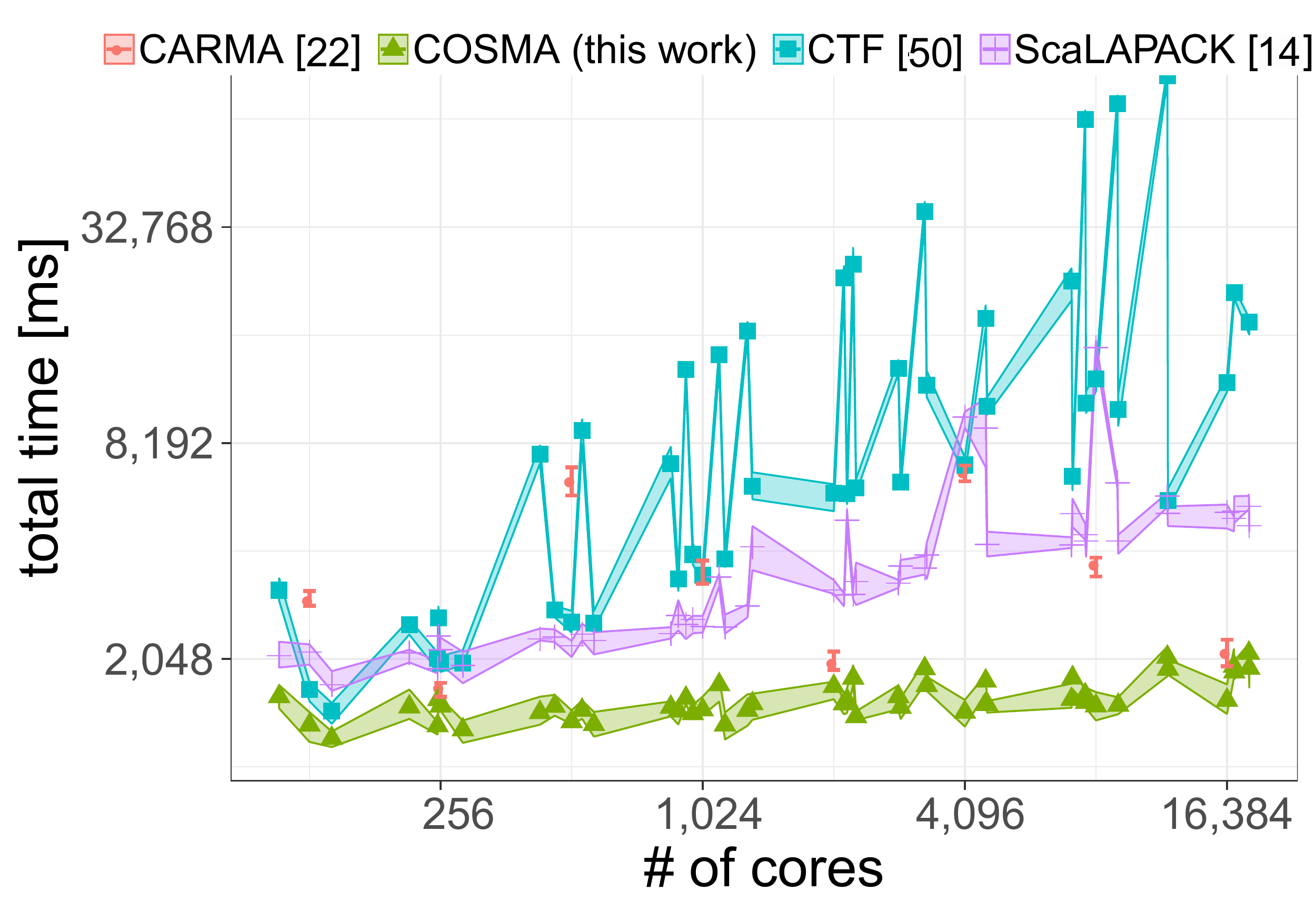}\label{fig:square_weakp2_time}}
	\vspace{-1em}
	\caption{
		\textmd{{Total runtime of COSMA, CTF, ScaLAPACK and 
				CARMA for square matrices, strong and weak scaling. We show 
				median 
				and 95\% confidence intervals.} }
	}
	\vspace{1.5em}
	\label{fig:performancePlotsSquareTime}
\end{figure*}

\begin{figure*}[t]
	\centering
	\subfloat[Strong scaling, $n=m= $17,408, $k= $ 3,735,552
	]{\includegraphics[width=\fw \textwidth]	
		{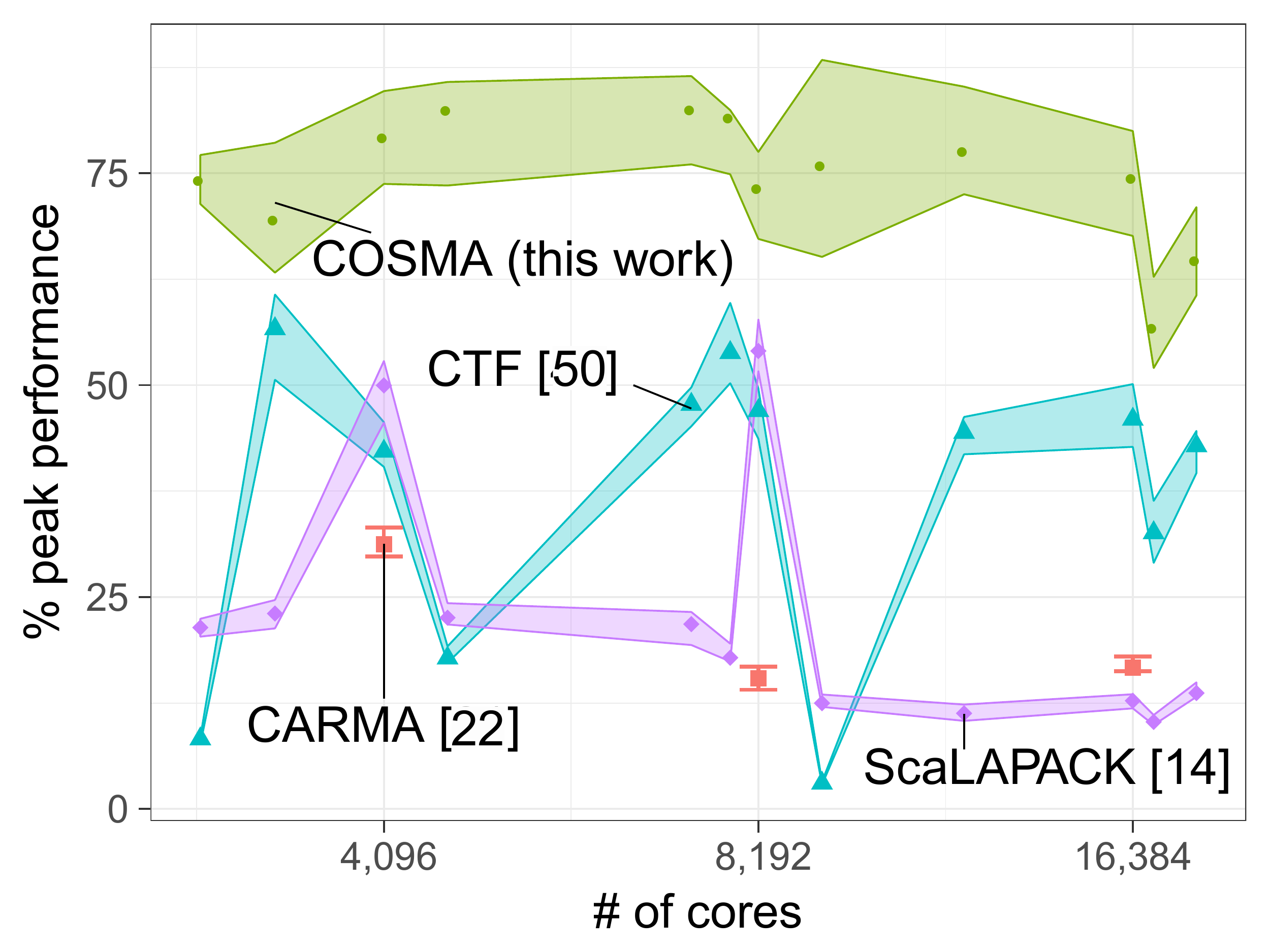}\label{fig:tall_strong}}
	\hfill
	\subfloat[Limited memory, $m=n=979 p^\frac{1}{3}$, 
	$k=$1.184$p^\frac{2}{3}$]{\includegraphics[width=\fw \textwidth]	
		{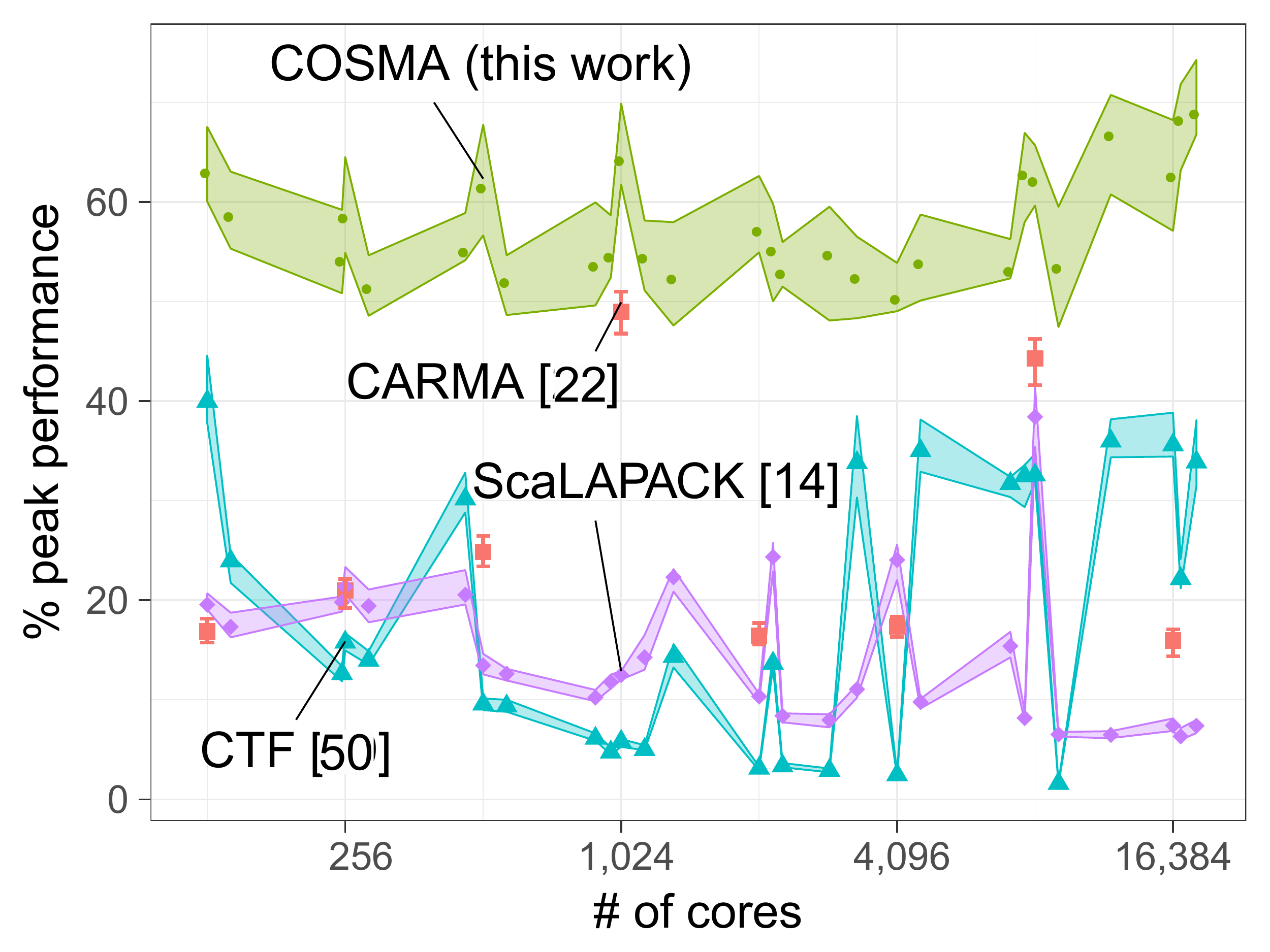}\label{fig:tall_strong}}
	%
	\hfill
	\subfloat[Extra memory,$m=n=979 p^\frac{2}{9}$, 
	$k=$1.184$p^\frac{4}{9}$]{\includegraphics[width=\fw \textwidth]	
		{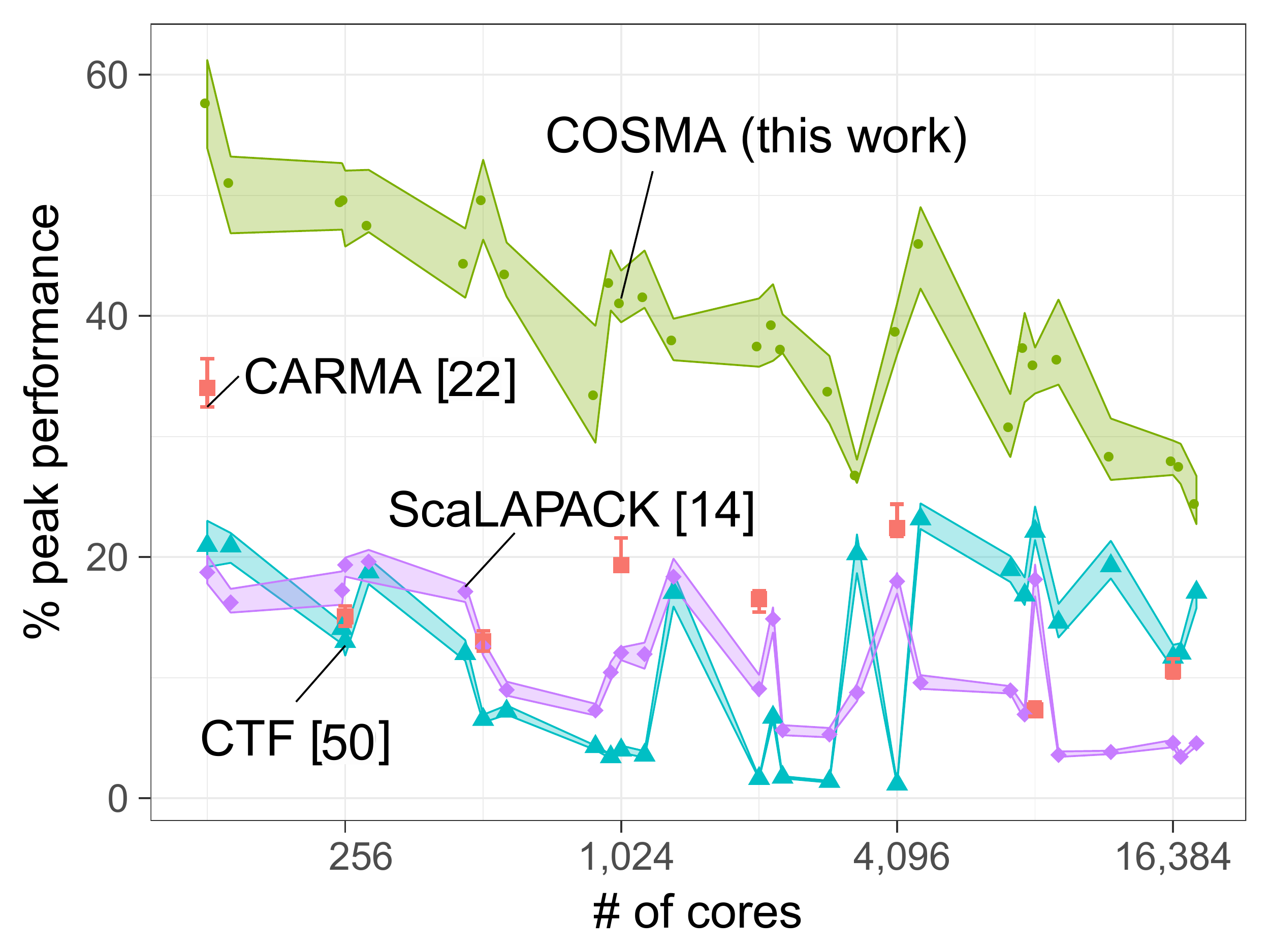}\label{fig:tall_strong}}
	\vspace{-1em}
	\caption{
		\textmd{{Achieved \% of peak performance by COSMA, CTF, 
				ScaLAPACK and 
				CARMA for ``largeK'' matrices. We show 
				median and 95\% confidence intervals.} }
	}
	\vspace{-1.5em}
	\label{fig:performancePlotsLargeK}
	
\end{figure*}

\begin{figure*}
	\centering
	\subfloat[Strong scaling, $n=m= $17,408, $k= $ 3,735,552
	]{\includegraphics[width=\fw \textwidth]	
		{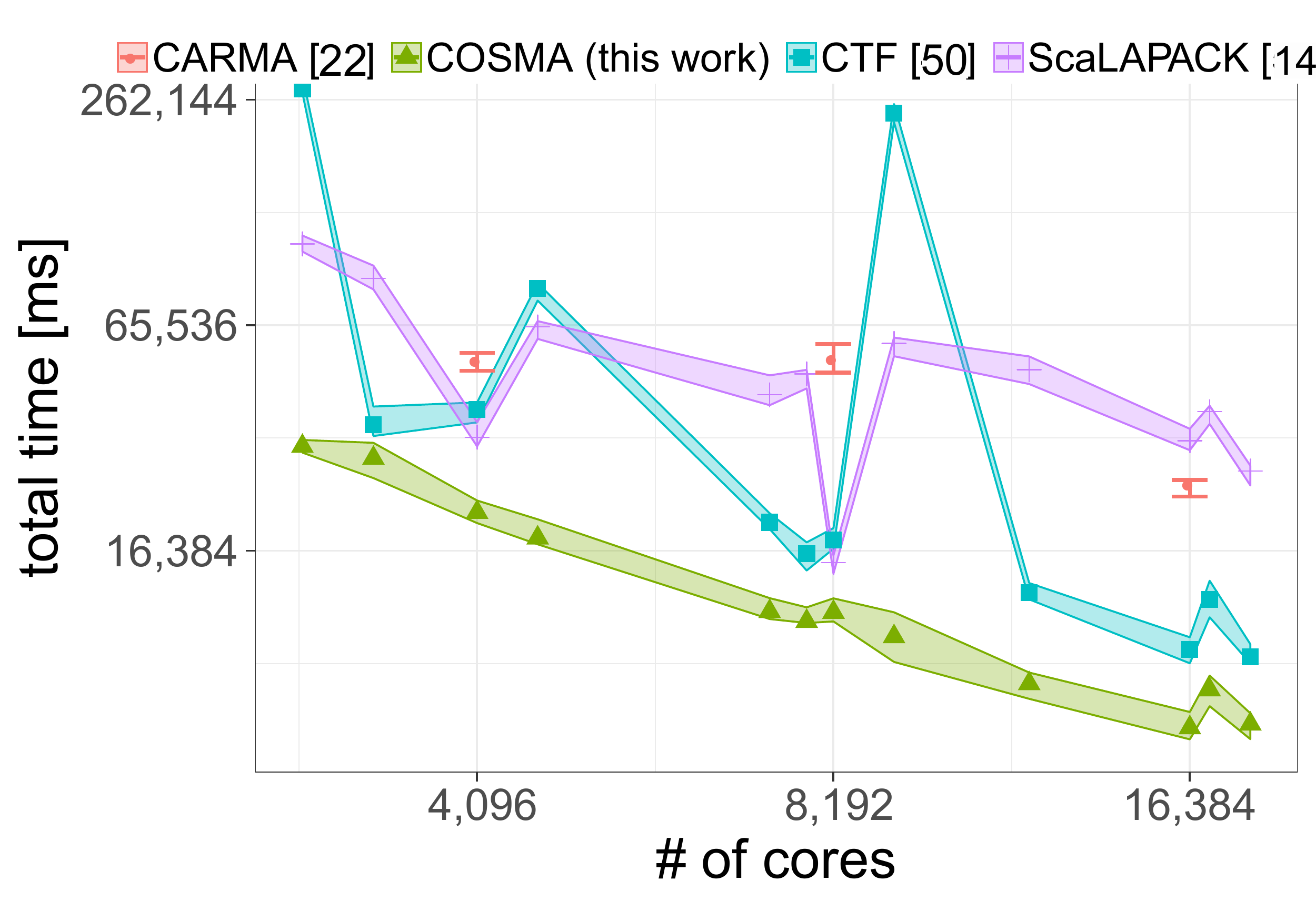}\label{fig:tall_strong_time}}
	\hfill
	\subfloat[Limited memory, $m=n=979 p^\frac{1}{3}$, 
	$k=$1.184$p^\frac{2}{3}$]{\includegraphics[width=\fw \textwidth]		
		{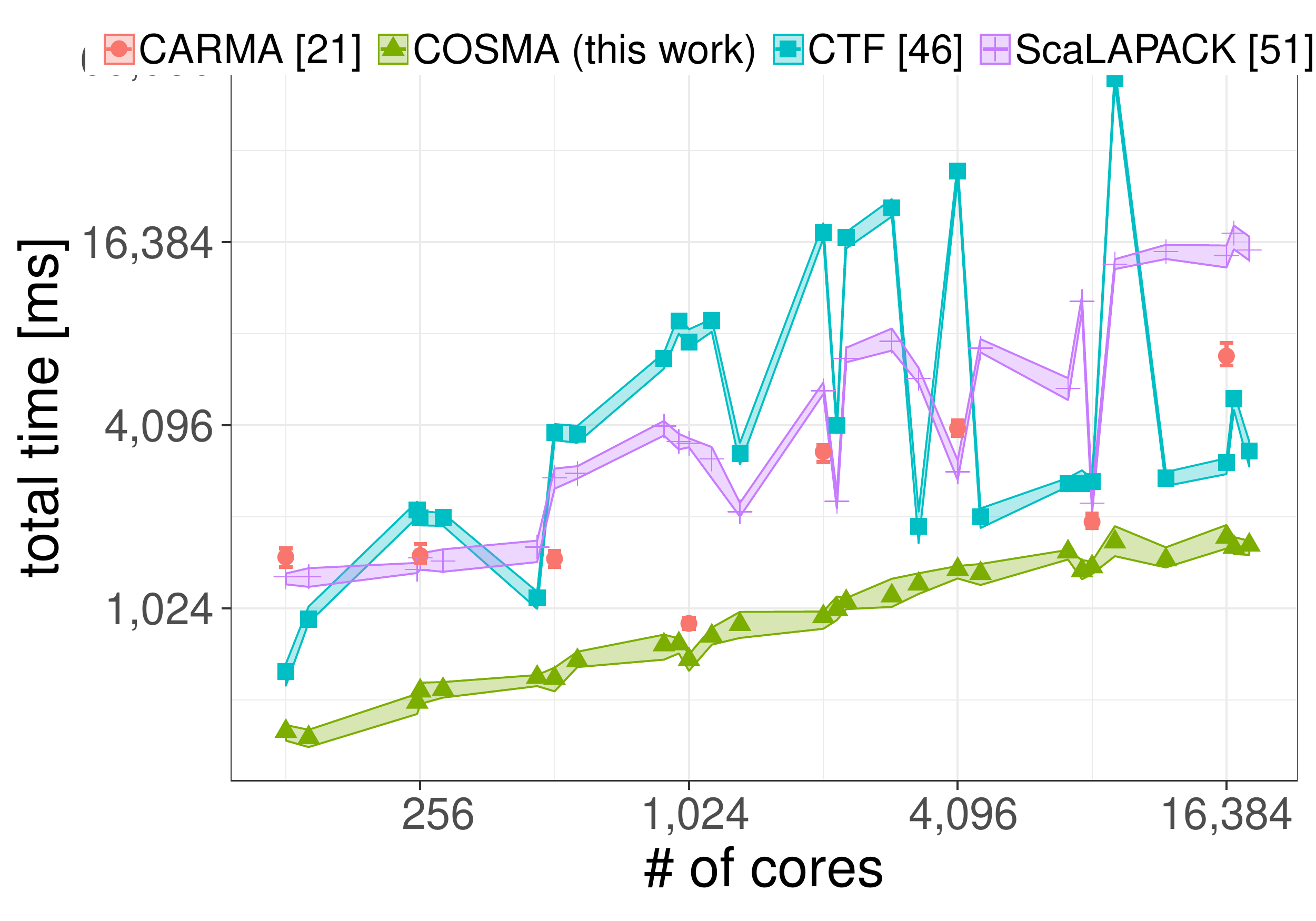}\label{fig:tall_weakp1_time}}
	%
	\hfill
	\subfloat[Extra memory, $m=n=979 p^\frac{2}{9}$, 
	$k=$1.184$p^\frac{4}{9}$]{\includegraphics[width=\fw \textwidth]		
		{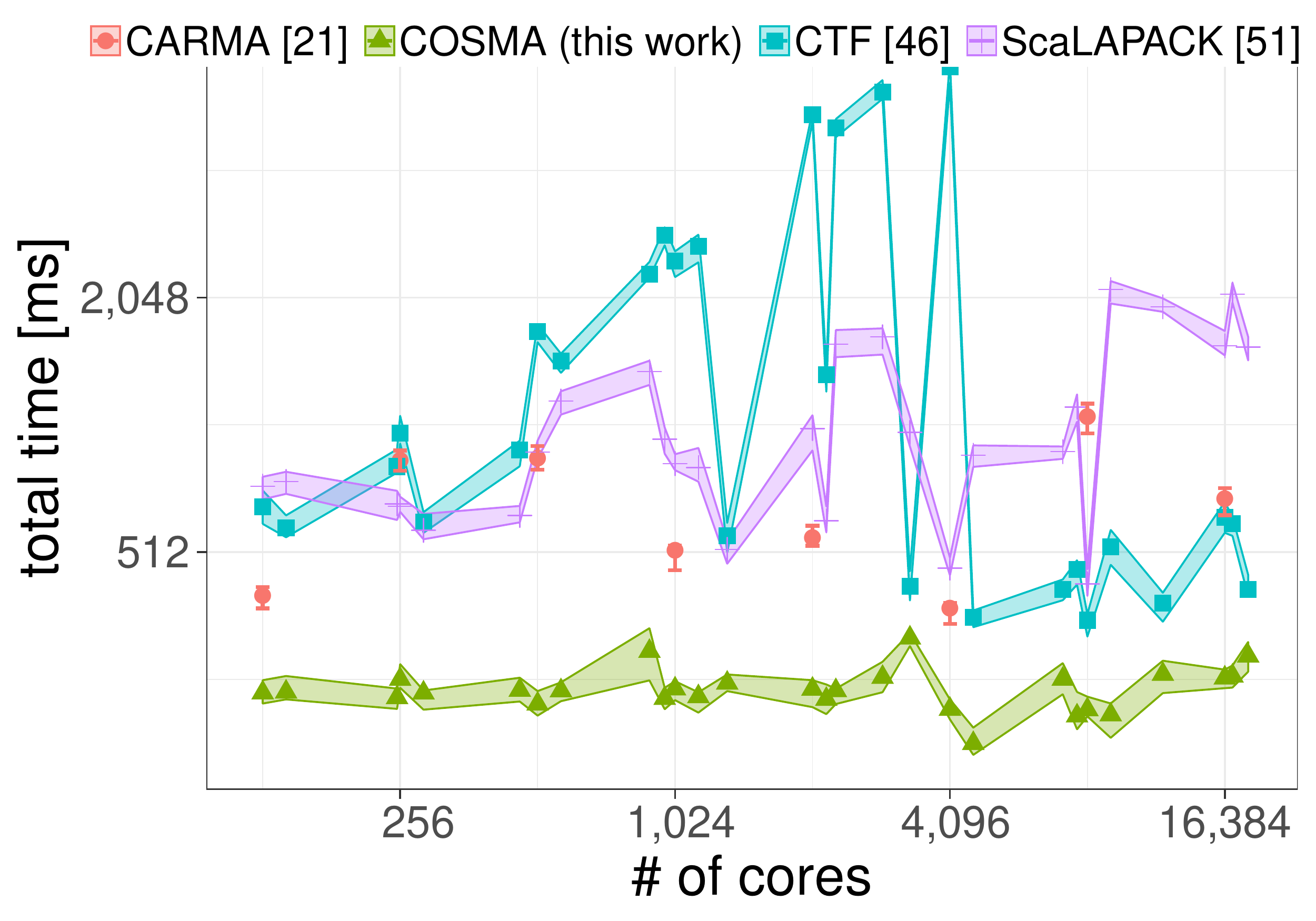}\label{fig:tall_weakp2_time}}
		\vspace{-1em}
	\caption{
		\textmd{Total runtime of COSMA, CTF, ScaLAPACK and 
			CARMA for ``largeK'' matrices, strong and weak scaling. We show 
			median and 95\% confidence intervals.} 
	}
		\vspace{-1.5em}
	\label{fig:performancePlotsLargeKTime}
	
\end{figure*}

\begin{figure}
	\vspace{-1em}
	\subfloat{\includegraphics[width=1.01\columnwidth]
		{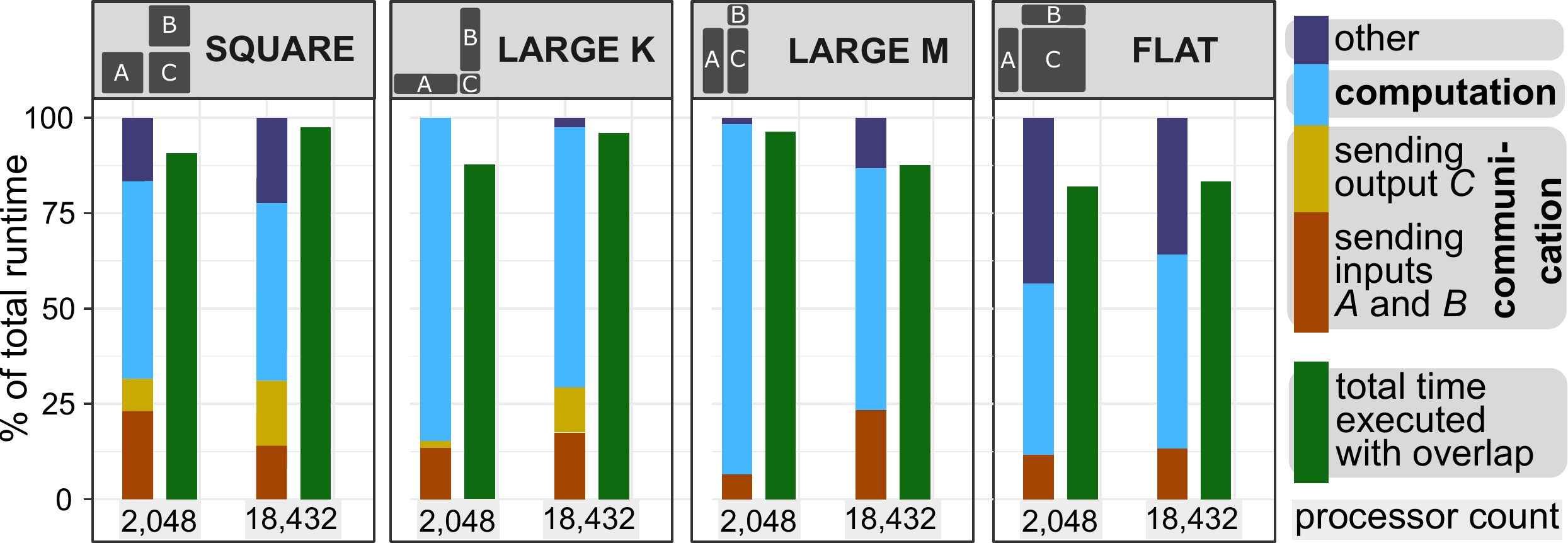}}
	\vspace{-1em}
	\caption{
		\textmd{{Time distribution of COSMA 
				communication and computation
				kernels for strong scaling executed on the smallest and the 
				largest 
				core counts for each of the matrix shapes. Left bar: no 
				communication--computation overlap. Right bar: overlap enabled. 
				}} 
	}
	\vspace{-1.5em}
	\label{fig:performancePlotsBreakdown}
\end{figure}

\begin{figure*}[t!]
	\vspace{-1.0em}
	\subfloat{\includegraphics[width=\fww \textwidth]
		{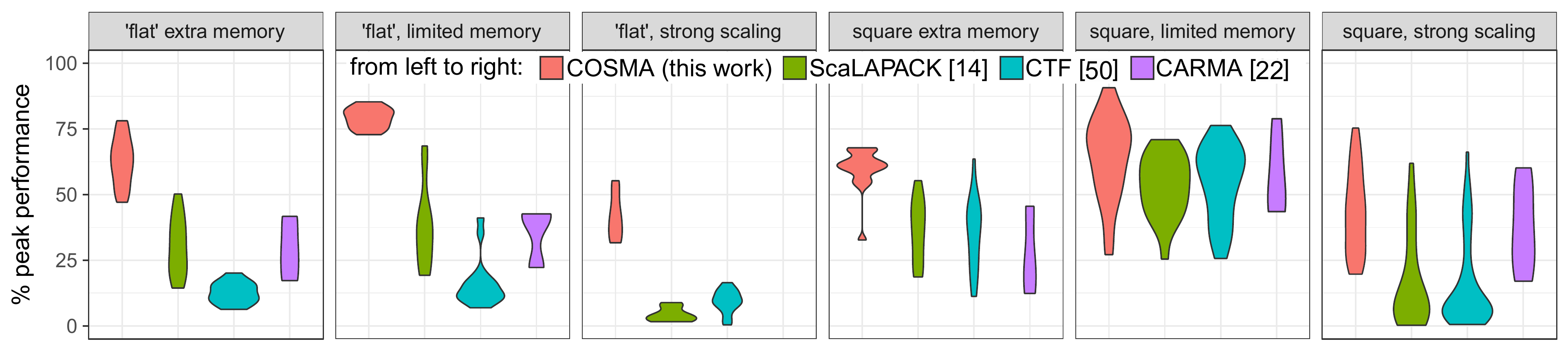}}
	\vspace{-1.0em}
	\caption{
		\textmd{{Distribution of achieved \% of peak performance of 
				the 
				algorithms 
				across all number of cores for ``flat'' and square matrices.}} 
	}
	\vspace{-2.5em}
	\label{fig:performancePlotsDistr}
\end{figure*}

\begin{figure*}[t!]
	\subfloat{\includegraphics[width=\fww \textwidth]
		{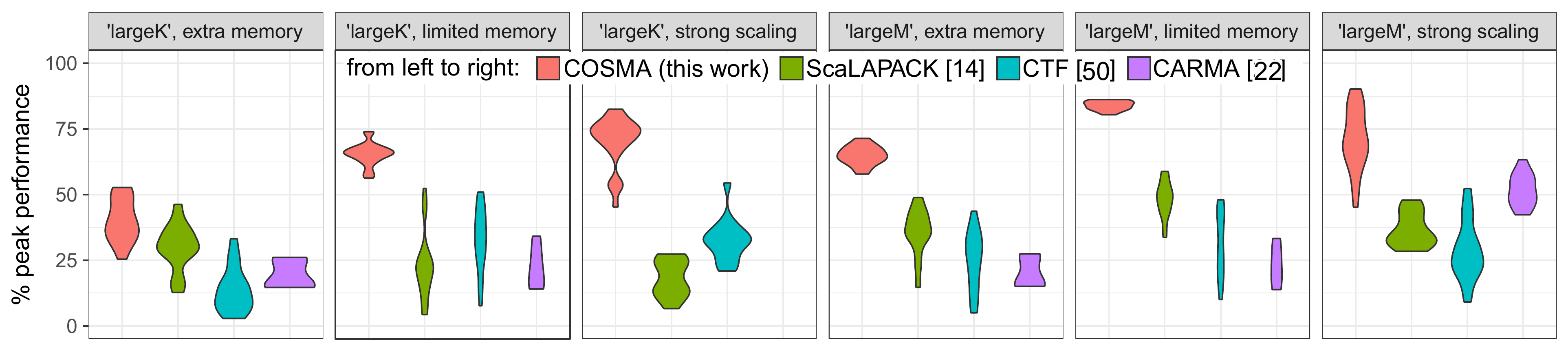}}
	\vspace{-1.0em}
	\caption{
		\textmd{{Distribution of achieved \% of peak performance of 
				the 
				algorithms 
				across all number of cores for tall-and-skinny matrices.} } 
	}
	\vspace{-1.0em}
	\label{fig:performancePlotsDistr2}
\end{figure*}
	
	\vspace{-1em}
	\section{Related work}
	
	Works on data movement minimization may be divided into two 
	categories: applicable across 
	memory hierarchy (vertical, also called I/O minimization), or between 
	parallel 
	processors (horizontal, also called communication minimization). Even 
	though 
	they are ``two sides of the same coin'', in literature they are often 
	treated 
	as 
	separate topics. In our work we combine them: analyze
	trade--offs between 
	communication optimal (distributed memory) and I/O optimal schedule 
	(shared memory).
	
	\subsection{General I/O Lower Bounds}
	Hong and 
	Kung~\cite{redblue} analyzed the I/O complexity for general CDAGs in their 
	the 
	red-blue pebble game, on which we base our 
	work.
	As a special case, they derived an asymptotic bound 
	$\Omega\left({n^3}/{\sqrt{S}}\right)$ for MMM.
	 Elango et al.~\cite{redbluewhite} extended this work to the 
	red-blue-white 
	game and Liu and Terman~\cite{redblueHard_} proved that it is also P-SPACE 
	complete.
	Irony et al.~\cite{IronyMMM} extended the MMM
	lower bound result to a parallel machine with $p$ processors,
	each having a fast private memory of size~$S$, proving the
	$\frac{n^3}{2\sqrt{2}p\sqrt{S}} - S$ 
	lower bound on the communication volume 
	per 
	processor.
	 Chan~\cite{justApebbleGame} studied different variants of pebble 
	games in the context of memory space and parallel time. Aggarwal and 
	Vitter~\cite{externalMem}
	introduced a two-memory machine that models a blocked access and latency in 
	an
	external storage. Arge et al.~\cite{parallelExMem} extended this model to 
	a 
	parallel machine. Solomonik et al.~\cite{edgarTradeoff} combined the 
	communication, synchronization, and computation in their general cost model 
	and 
	applied 
	it to several linear algebra algorithms. Smith and van de 
	Geijn~\cite{tightMMM} derived a sequential lower bound $2mnk/\sqrt{S} -2S$ 
	for MMM. They showed that the leading factor $2mnk/\sqrt{S}$ is tight. We 
	improve this result by 1) improving an additive factor of $2S$, but more 
	importantly 2) generalizing the bound to a parallel machine. Our work uses 
	a 
	simplified model, not taking into account the memory block size, as in the 
	external 
	memory model, nor the cost of computation. We motivate it by assuming that 
	the 
	block size is significantly smaller than the input size, the data is layout 
	contiguously in the memory, and that the computation is evenly distributed 
	among processors.
	
	\subsection{Shared Memory Optimizations}
	I/O optimization for linear algebra includes such techniques as loop tiling 
	and skewing~\cite{tiling}, interchanging and reversal~\cite{tiling2}. For 
	programs with multiple loop nests, Kennedy and McKinley~\cite{loopFusion} 
	showed various techniques for loop fusion and proved that in general this 
	problem is NP-hard. Later, 
	Darte~\cite{loopFusionComplexity} identified cases when this problem has 
	polynomial complexity.
	
	Toledo~\cite{IOsurvey} in his survey on Out-Of-Core (OOC) algorithms 
	analyzed 
	various I/O minimizing techniques for dense and sparse matrices.
	Mohanty~\cite{MohantyThesis} in his thesis optimized several OOC algorithms.
	Irony et 
	al.~\cite{IronyMMM} proved the I/O lower bound of classical MMM on a 
	parallel 
	machine. Ballard et al.~\cite{strassenBounds} proved analogous results for 
	Strassen's algorithm. This analysis was extended by Scott et 
	al.~\cite{generalStrassenBounds} to a general class of Strassen-like 
	algorithms.
	
	Although we consider only dense matrices, there is an extensive literature 
	on 
	sparse matrix I/O optimizations. Bender et al.~\cite{SpMVIO} extended 
	Aggarwal's external memory model~\cite{externalMem} and showed I/O 
	complexity 
	of the sparse matrix-vector (SpMV) multiplication.
	\linebreak
	Greiner~\cite{SpEverything} extended those results and provided  I/O 
	complexities of other sparse computations.
	
	\vspace{-0.5em}
	\subsection{Distributed Memory Optimizations}
	Distributed algorithms for dense matrix multiplication date back to the 
	work 
	of 
	Cannon~\cite{Cannon}, which has been analyzed and extended many 
	times~\cite{MManalysis}~\cite{generalCannon}. In the presence of extra 
	memory, 
	Aggarwal et al.~\cite{summa3d} included parallelization in the third 
	dimension. 
	Solomonik and Demmel~\cite{25d} extended this scheme with their 2.5D 
	decomposition to arbitrary range of 
	the available memory, effectively interpolating between Cannon's 2D and 
	Aggarwal's 
	3D scheme. A recursive, memory-oblivious 
	MMM algorithm was introduced by 
	Blumofe 
	et al.~\cite{recursiveMM} and extended to rectangular matrices by Frigo et 
	al.~\cite{recursiveRectangularMM}. Demmel el al.~\cite{CARMA} introduced   
	CARMA algorithm which achieves the asymptotic complexity 
	for all matrix and 
	memory sizes. We compare COSMA with these algorithms, showing that we 
	achieve 
	better results both in terms of communication complexity and the actual 
	runtime
	performance. 
	Lazzaro et al.~\cite{lazzaroSpMM} used the 2.5D technique for sparse 
	matrices, both for square and rectangular grids. Koanantakool et 
	al.~\cite{sparse15D} observed that for sparse-dense MMM, 1.5D decomposition 
	performs less communication than 2D and 2.5D schemes, as it distributes 
	only 
	the sparse matrix.
	
		\vspace{-0.5em}
	\section{Conclusions}
	In this work we present a new method (Lemma~\ref{lma:reuse}) for assessing 
	tight I/O lower bounds of algorithms using their CDAG representation and 
	the 
	red-blue pebble 
	game 
	abstraction. As a use case, we prove a tight bound for MMM, both for a 
	sequential (Theorem~\ref{thm:seqlowbounds}) and parallel 
	(Theorem~\ref{thm:parSchedule}) execution. Furthermore, our proofs are 
	constructive: our COSMA algorithm is near I/O optimal (up 
	to the factor of 
	$\frac{\sqrt{S}}{\sqrt{S+1}-1}$, which is less than 0.04\% from the lower 
	bound for 10MB of fast 
	memory) for any combination of 
	matrix 
	dimensions, number of processors and memory sizes. This is in contrast with 
	the 
	current state-of-the-art algorithms, which are communication-inefficient in 
	some scenarios. 
	
	To further increase the performance, we introduce a series of 
	optimizations, 
	both on an algorithmic level (processor grid 
	optimization~(\cref{sec:decompArbitrary}) and blocked data 
	layout~(\cref{sec:datalayout})) and hardware-related (enhanced 
	communication 
	pattern~(\cref{sec:commPattern}), 
	communication--computation 
	overlap~(\cref{sec:compOverlap}), one-sided~(\cref{sec:rdma}) 
	communication). The 
	experiments 
	confirm the superiority of 
	COSMA over the other analyzed algorithms - our algorithm significantly 
	reduces communication in all tested scenarios, supporting our theoretical 
	analysis. Most importantly, our work is of practical importance, 
	being maintained as an open-source implementation and
	achieving a 
	time-to-solution speedup of up to 
	12.8x
	times 
	compared to highly optimized state-of-the-art libraries. 
	
	The important feature of our method is that it does not require any manual 
	parameter tuning and is generalizable to other machine models (e.g., 
	multiple levels of memory) and linear algebra kernels (e.g., LU or Cholesky 
	decompositions), both for dense and sparse matrices. We believe that the 
	``bottom-up'' approach will lead to developing more efficient distributed 
	algorithms in the future.

		\vspace{0.5em}
\noindent
\macb{Acknowledgements}

\noindent	
We thank Yishai Oltchik and Niels Gleinig for invaluable  
help with the theoretical part of the paper, and Simon Pintarelli for 
advice and support with the implementation. We also thank CSCS for the compute 
hours needed to conduct all experiments. This project has received funding from 
the European Research Council (ERC) under the European Union’s Horizon2020 
programme (grant agreement DAPP, No.678880), and additional funding from the 
Platform for Advanced Scientific Computing (PASC).

\FloatBarrier
\pagebreak
	
\let\oldbibliography\thebibliography
\renewcommand{\thebibliography}[1]{
  \oldbibliography{#1}
      \setlength{\itemsep}{0pt}
}

\bibliographystyle{ACM-Reference-Format}

\bibliography{mmm-arxiv}	
	
\newpage
\appendix
\section{Change log}
\textbf{10.12.2019}
\begin{itemize}
	\item Added DOI (10.1145/3295500.3356181) of the SC'19 paper version 
	\item Section \emph{4.2.1 Data Reuse}, last paragraph (Section 4 in the 
	SC'19 paper): $W_{B,i}$ corrected to $W_{R,i}$ in the definition of $R(S)$. 
	\item Section \emph{4.2.2 Reuse Based Lemma}, Lemma 2: 
	$q \ge (X - R(S) - T(S)) \cdot (h - 1)$ corrected to $q \ge (X - R(S) + 
	T(S)) \cdot (h - 1)$
	\item Section \emph{4.2.2 Reuse Based Lemma}, Lemma 3 (Section 4, Lemma 1 
	in the SC'19 paper): ``$T(S)$ is the minimum store size'' corrected to 
	``$T(S)$ is the minimum I/O size''
	\item Section 6.2 Parallelization Strategies for MMM, \linebreak 
	\textbf{Schedule} 
	$\mathcal{P}_{ijk}$:  processor grid $\mathcal{G}_{ijk} = 
	\big[\frac{m}{\sqrt{S}}, \frac{n}{\sqrt{S}}, 
	\frac{k}{pS}\big]$ corrected to $\big[\frac{m}{\sqrt{S}}, 
	\frac{n}{\sqrt{S}}, 
	\frac{pS}{mn}\big]$
\end{itemize}
\end{document}